\newcommand\setcurrentname[1]{\def\@currentlabelname{#1}}
\def\Tr{\operatorname{Tr}}
\def\IR{\operatorname{IR}}
\def\IR{\operatorname{SIR}}
\def\SEP{\operatorname{SEP}}
\def\PPT{\operatorname{PPT}}
\def\supp{\operatorname{supp}}
\def\LOCC{\operatorname{LOCC}}
\def\>{\rangle}
\def\<{\langle}
\def\id{\mathds{1}}
\def\({\left(}
\def\){\right)}
\def\[{\left[}
\def\]{\right]}
\let\emptyset\varnothing
\newcommand{\mc}[1]{\mathcal{#1}}
\newcommand{\wt}[1]{\widetilde{#1}}
\algnewcommand{\Inputs}[1]{
  \State \textbf{Inputs:}
  \Statex \hspace*{\algorithmicindent}\parbox[t]{.8\linewidth}{\raggedright #1}
}
\algnewcommand{\Initialize}[1]{
  \State \textbf{Initialize:}
  \Statex \hspace*{\algorithmicindent}\parbox[t]{.8\linewidth}{\raggedright #1}
}
\newtheorem{theorem}{Theorem}
\newtheorem{corollary}{Corollary}
\newtheorem{observation}{Observation}
\newtheorem{definition}{Definition}
\newtheorem{lemma}{Lemma}
\newtheorem{proposition}{Proposition}
\newtheorem{remark}{Remark}
\renewcommand{\nocontentsline}[3]{}
\renewcommand{\tocless}[2]{\bgroup\let\addcontentsline=\nocontentsline#1{#2}\egroup}
\begin{document}

\widetext

\title{Cost of quantum secret key}

\author{Karol Horodecki}\email{karol.horodecki@ug.edu.pl} 

\affiliation{Institute of Informatics, National Quantum Information Centre, Faculty of Mathematics, Physics and Informatics, University of Gda\'nsk, Wita Stwosza 57, 80-308 Gda\'nsk, Poland}

\affiliation{International Centre for Theory of Quantum Technologies, University of Gda\'nsk, Wita Stwosza 63, 80-308 Gda\'nsk, Poland}
\affiliation{School of Electrical and Computer Engineering, Cornell University, Ithaca, New York 14850, USA}

\author{Leonard Sikorski }\email{leonard.sikorski@phdstud.ug.edu.pl}
\affiliation{Institute of Informatics, National Quantum Information Centre, Faculty of Mathematics, Physics and Informatics, University of Gda\'nsk, Wita Stwosza 57, 80-308 Gda\'nsk, Poland}

\author{Siddhartha Das}\email{das.seed@iiit.ac.in}
\affiliation{q4i, Centre for Quantum Science and Technology (CQST), Center for Security, Theory and Algorithmic Research (CSTAR), International Institute of Information Technology Hyderabad, Gachibowli 500032, Hyderabad, Telangana, India}

\author{Mark M. Wilde}\email{wilde@cornell.edu}
\affiliation{School of Electrical and Computer Engineering, Cornell University, Ithaca, New York 14850, USA}

\begin{abstract}
In this paper, we develop the resource theory of quantum secret key.
Operating under the assumption that entangled states with zero distillable key \textit{do not} exist, we define the \textit{key cost} of a quantum state, and device. We study its properties through the lens of a quantity that we call the \textit{key of formation}.
The main result of our paper is that the regularized key of formation is an upper bound on the key cost of a quantum state. The core protocol underlying this result is privacy dilution,  which converts states containing ideal privacy into ones with diluted privacy. Next, we show that the key cost is bounded from below by the regularized relative entropy of entanglement, which implies the irreversibility of the privacy creation-distillation process for a specific class of states.
We further focus on mixed-state analogues of pure quantum states in the domain of privacy, and we prove that a number of entanglement measures are equal to each other for these states, similar to the case of pure entangled states. The privacy cost and distillable key in the single-shot regime exhibit a yield-cost relation, and basic consequences for quantum  devices are also provided. Importantly, our results presented here will remain valid even if entangled states with zero distillable key were shown to exist.
\end{abstract}

\maketitle 
\section{Introduction}
Quantum mechanics is a rather surprising physical theory, as it is reversible in principle, contrary to our common experience. After an arbitrary quantum operation is applied to a system, if full access to its environment is available, the system's state can be set back to its initial form by a reversal operation.
However, in practice, reversibility is usually not possible. The system, after an operation, typically becomes entangled with an inaccessible environment in an irreversible way. 
This irreversibility can be quantified in different ways in the resource-theoretic framework~\cite{Chitambar-Gour}, which was first introduced and studied for the case of the resource theory of entanglement~\cite{BBPSSW1996}.

The origin of irreversibility is even more fundamental in the case of the resource of quantum secret key. In this setting, an eavesdropper or hacker will never give back a system that has leaked, even if it might be possible in practice. 
This fact was first noticed and studied in the classical scenario of secret key agreement (SKA)~\cite{Maurer1993,Renner2003}. There, the adversary has only a classical memory $Z$ and eavesdrops on two honest parties, who possess classical random variables $X$ and $Y$, respectively, and transform them by local (classical) operations and public communication.
Inspired by an apparent correspondence between entanglement theory and SKA established in~\cite{GW},
it was proved that the cost of creating a distribution sometimes exceeds its distillable secure content in the SKA scenario~\cite{Renner2003} (see also~\cite{Acn2004,W05,Horodecki2005,ChitambarHsieshWinter,Chitambar2015} for related work).  Key distillation for the quantum generalization of SKA has been studied in~\cite{ChristandlEkert,Ozols2014,Chitambar2018} via a mapping of tripartite distributions into pure tripartite states, as proposed in~\cite{Collins2002}~\footnote{The mapping of~\cite{Collins2002} is the following embedding of a classical distribution $P(x,y,z)$ into a Hilbert space: $\{P(x,y,z)\}_{x,y,z} \mapsto |\Psi\rangle_{XYZ} \coloneqq  \Sigma_{x,y,z} \sqrt{P(x,y,z)}|x,y,z\rangle $.}.

Building a quantum internet infrastructure is one of the main visions of the quantum information community~\cite{WehnerQInternet}.
It is then important to understand the cost of the network's constituents, i.e., quantum states and channels with secure key.
Indeed, since pure entanglement is not a precondition of quantum security~\cite{pptkey,keyhuge}, we assert here that a secure key is a proper resource for quantifying the information-theoretic expense of the quantum-secured internet. Motivated by this, we initiate the study of the cost of a secret key and understanding the problem of irreversibility in the quantum cryptographic scenario. 

In the most basic case of a point-to-point connection in the network, the two honest parties share a bipartite quantum state and process it by local operations and classical communication (LOCC). Assuming the worst case, the quantum eavesdropper has access to a purifying system of a purification~\cite{Nielsen-Chuang} of the state of the honest parties and obtains any classical communication and any system traced out by them during LOCC processing.
The main question we pose in this setting is as follows: ``How much private key is needed for the creation of an arbitrarily good approximation of $n$ copies of a bipartite quantum state by LOCC for sufficiently large $n$?"
 
With this in mind, we depart from the above approach of exploring the embedding of SKA into the quantum scenario \footnote{We consider, in fact, processing of an arbitrary pure state $|\psi\>_{ABE}$ (such that $\mathrm{Tr}_E|\psi\>\!\<\psi|_{ABE} = \rho_{AB}$) by (quantum) local operations and public communication. The pure state $|\psi\>_{ABE}$ in our approach may not originate from any classical distribution via the Collins--Popescu mapping.}. 
Instead, we develop the resource theory of key secure against a quantum adversary. It is important to note that the latter theory differs from the resource theory of entanglement. This is because there exist states that contain no distillable entanglement $(E_D =0)$~\cite{bound} but contain a strictly positive rate of distillable key $K_D$ secure against a quantum adversary~\cite{pptkey,keyhuge}. 

While entanglement theory has been thoroughly studied~\cite{Horodecki2009} since its invention~\cite{BBPSSW1996,BDSW96}, the resource theory of private key secure against a quantum adversary has yet to be developed fully from the resource-theoretic perspective. On the one hand, lower and upper bounds on the distillable key have been developed thoroughly, starting from the seminal protocols BB84 and E91~\cite{BEN84,Ekert91} and their follow-ups~\cite{Old_review,Devetak_2005,Acn2007,Pawowski2011,Lo2012,ArnonFriedman2018}  in a number of cryptographic scenarios (cf.~\cite{LoPan_review,Das2021}). Also, the aforementioned distillable key $K_D$ has been shown to be an entanglement measure and studied in~\cite{pptkey,keyhuge,Christandl-phd}. On the other hand, to our knowledge, the privacy cost has hitherto not been introduced nor studied in the fully quantum setup.

In this paper, we close this gap by defining and characterizing the key cost $K_C$, a fundamental quantity in the resource theory of privacy. The key cost is an upper bound on the distillable key secure against a quantum eavesdropper,  and it indicates how much key one needs to invest when creating a quantum state.  In order to characterize the key cost, we introduce and study another quantity, the {\it key of formation} $K_F$. Informally, $K_F$ is equal to the minimum average amount of key content of a state, where the average is taken over all decompositions of the state into a finite mixture of pure states that are Schmidt-twisted
\cite{HorLeakage,karol-PhD} (see~\eqref{eq:kf1} for a formal definition).  
One of our main results, relating the key cost and the key of formation, is encapsulated in the following inequality
\begin{equation}
    K_C(\rho) \leq K_F^{\infty}(\rho),
    \label{eq:main_char}
\end{equation}
where $K_F^{\infty}(\rho) \coloneqq  \lim_{n\rightarrow \infty}\frac{1}{n}K_F(\rho^{\otimes n})$ (see Theorem~\ref{thm:regularization}).

We note here that key cost and key of formation are analogous
 to the entanglement cost $E_C$ and entanglement of formation $E_F$ in entanglement theory~\cite{BBPSSW1996}, respectively. Furthermore, 
the aforementioned generalized private states form a class of states in the privacy domain that corresponds to the class of pure states in entanglement theory. Also,
the inequality in~\eqref{eq:main_char} is partially analogous to the relation $E_F^{\infty}(\rho)=E_C(\rho)$ from entanglement theory~\cite{Hayden_2001,yamasaki2024entanglement}. Although there are analogies between entanglement and private key~\cite{GW,Collins2002}, we should note that there are important distinctions, some of which have led to profound insights into the nature of quantum information~\cite{HHHLO08,HHHLO08a,science2008smith}. Additionally, the technicalities of dealing with privacy are more subtle and involved than when dealing with entanglement~\cite{pptkey,keyhuge}. For our specific problem considered here, i.e., to obtain~\eqref{eq:main_char}, it was necessary for us to develop techniques beyond what is currently known in entanglement theory. Indeed, our main technical contribution here, for establishing~\eqref{eq:main_char}, involves designing a protocol that dilutes privacy.

The applicability of the introduced quantities goes beyond the resource theory of (device-dependent) private key. Naturally, $K_C$ is a novel entanglement measure. $K_F$ and $K_C$ are also applicable in a scenario in which security is independent of the inner workings of the devices (device-independent security~\cite{DIkeyReview}). Indeed, by the technique of~\cite{CFH21}, the {\it reduced} versions of $K_C$ and $K_F$ are novel measures of Bell non-locality~\cite{Brunner2014}. More importantly, the single-shot version of the reduced $K_C$, that is, the reduced single-shot key cost $K_C^{\varepsilon \downarrow}$, can be treated as the definition of the key cost of a device --- a quantity that has not been considered so far. 

In large part, the quantum internet, when built, will be used to transfer classical data securely. Hence, the right choice of a quantum state with unit cost appears to be a state that has one bit of ideally secure key.
Such states have been introduced already in~\cite{pptkey,keyhuge} and are called private bits. More generally, we can consider $d$-dimensional private states, called private dits. These states, denoted by $\gamma(\Phi^+)$, can be understood as a maximally entangled state
\begin{equation}
\Phi^+_{A_{K}B_{K}}\coloneqq \frac{1}{d_k}\sum_{i=0}^{d_k-1}\ket{ii}\!\bra{jj}_{A_{K}B_{K}},
\end{equation}
where $d_k=\min\{|A_{K}|,|B_{K}|\}$ ($|A_K|$ denotes the dimension of the Hilbert space of $A_K$), which has become ``twisted'' with some arbitrary shared state $\rho_{A_{S}B_{S}}$ by means of a controlled-unitary 
\begin{equation}
\tau_{A_{K}B_{K}A_{S}B_{S}}\coloneqq \sum_{i=0}^{d_k-1} \op{ii}_{A_KB_K}\otimes U_i^{A_{S}B_{S}},    
\end{equation}
leading to
\begin{equation}
    \gamma_{d_k}(\Phi^+) \coloneqq \tau (\Phi^+_{A_{K}B_{K}}\otimes \rho_{A_{S}B_{S}})\tau^{\dagger}.
\end{equation}

Since it has been shown (see~\cite{pptkey,keyhuge}), that a state contains at least $\log_2 d$ bits of secure key accessible by local (incomplete) von Neumann measurements if and only if it is of the above form, it is clear that the private states are resourceful states in the resource theory of private key. However, it is not clear which states are useless~\cite{Chitambar-Gour} (or ``free'' in the language of resource theory), from which no key can be obtained. So far, it is known that separable, i.e., disentangled states, are useless~\cite{precondition}, while it remains open if entangled but key-undistillable states exist. This question is one of quantum information theory's most difficult open problems (see Problem~$24$ on the IQOQI Vienna list~\cite{ViennaOpen}). In this paper, we develop the resource theory of private key up to the current state of the art, i.e., under the assumption that there {\it do not exist} entangled but key-undistillable states.

\section{Summary of results}
\subsection{Irreversibility of privacy}
Informally, the {\it key cost} $K_C(\rho_{AB})$ of a quantum state $\rho_{AB}$  is the minimal ratio of the number $\log_2 d_n$ of key bits, in the form of a private state $\gamma_{d_n}(\Phi^+)$, needed to create $\rho^{\otimes n}$ approximately, to the number $n$ of copies, in the asymptotic limit of large~$n$: it is equal to the minimal value of $\lim_{n\rightarrow \infty} \frac{\log_2 d_n}{n}$ such that the approximation error tends to zero. 
Based on this definition, we also quantify the cost of  devices in terms of secret key~\cite{Brunner2014} (see Section~\ref{sec:behaviors}).

We now make several observations that are direct consequences of the definitions of the key cost and key of formation and facts known already in the literature. First, it is straightforward to see that $K_F \leq E_F$ because the set of pure states is a subset of the set of generalized private states and $K_C\leq E_C$ because a maximally entangled state is also a private state. Furthermore, if each of the local quantum systems $A$ and $B$ is either a qubit or qutrit, i.e., $|A| \in \{2,3\}$ and $|B| \in \{2,3\}$, then $K_F=E_F$. 
This is because, in this case, the shield 
systems $A_{S}B_{S}$ have trivial dimensions equal to one so that the generalized private states coincide with pure states. However, it is not clear if $K_C=E_C$ in this low-dimensional case. On the other hand in larger dimensions, we know that $K_C\neq E_C$ generally, as is the case for the so-called flower state~\cite{smallkey}. This state is a $(2 d_s)^2$-dimensional private state for which $K_C\leq 1$ by definition while $E_C \geq {\log_2 d_s}/{2}$~\cite{locking}. More importantly, by a standard approach, we prove that 
\begin{equation}
K_C(\rho) \geq E_R^{\infty}(\rho)   , 
\end{equation}
(see Theorem~\ref{thm:irrev}), where the regularized relative entropy of entanglement is defined as~\cite{RelEnt}
\begin{align}
E_R^{\infty}(\rho) & \coloneqq \lim_{n\rightarrow \infty}\frac{1}{n} E_R(\rho^{\otimes n}),  \\
E_R(\rho)  & \coloneqq  \inf_{\sigma \in \SEP} \mathrm{Tr}[\rho (\log_2 \rho - \log_2 \sigma)],
\end{align}
with $\SEP$ denoting the set of all separable states of a given bipartite system. Furthermore, there are states $\widehat{\rho}$ for which an entanglement measure called {\it squashed entanglement} $E_{\operatorname{sq}}$ is strictly less than $E_R^{\infty}$~\cite{CSW12,BCHW}. Since both $E_R^{\infty} $ and $E_{\operatorname{sq}}$ are upper bounds on $K_D$~\cite{pptkey,Christandl-phd,keyhuge,Mark_squashed},  this leads to a first example of irreversibility in the quantum cryptographic scenario
\begin{equation}
K_D(\widehat{\rho}) \leq E_{\operatorname{sq}}(\widehat{\rho}) \ll E_R^{\infty}(\widehat{\rho}) \leq K_C(\widehat{\rho}).    
\end{equation}
Hence, the resource of private key joins the family of resources for which irreversibility occurs~\cite{Chitambar-Gour}.

\subsection{The key cost}
To quantify the investment in creating a state, we need to control how much private key a given private state possesses. A cogent choice is the class of irreducible private states, i.e., those for which $K_D(\gamma(\Phi_{d_k}^+))= \log_2 d_k$, such that there is no more key in this state than that contained in its key systems. This class has been characterized in~\cite{IR-pbits} in the following way: if there are no entangled but key undistillable states, all irreducible private states are {\it strictly} irreducible (SIR)~\cite{karol-PhD,FerraraChristandl}, that is, taking the form
 \begin{equation}
     \gamma_{d_k,d_s}(\Phi_+)\coloneqq \frac{1}{d_k} \sum_{i,j=0}^{d_k-1}|ii\>\!\<jj|_{{A_KB_K}}\otimes U_i \rho_{{A_SB_S}} U_j^{\dagger},
     \label{eq:strict_ir_main_text}
 \end{equation}
where $U_i\rho_{{A_SB_S}}U_i^{\dagger}$ is a separable state for each $i$ and~$\rho_{{A_SB_S}}$ is an arbitrary $d_s\times d_s$ state. Since we study the key cost~$K_C$ up to the current state of the art, such that no entangled but key undistillable states are known to exist, we focus on the above class of strictly irreducible states in the definition of the key cost. 
Under this assumption, the definition of the key cost in the asymptotic regime is as follows:
\begin{definition}
The asymptotic key cost $K_{C}(\rho)$ and one-shot key cost $K_{C}^{\varepsilon}(\rho)$ of a state $\rho_{AB}$ are defined as 
\begin{align}
    K_{C}(\rho) & \coloneqq  \sup_{\varepsilon \in (0,1)} \limsup_{n\rightarrow\infty} \frac{1}{n} K_{C}^{\varepsilon}(\rho^{\otimes n}) , \\
    K_{C}^{\varepsilon}(\rho) & \coloneqq \inf_{\substack{\mc{L} \in \LOCC,\\ \gamma_{d_k,d_s} \in \IR}}
    \left\{
    \begin{array}{c}
        \log_2 d_k : \\
        \frac{1}{2} \norm{\mc{L}(\gamma_{d_k,d_s})- \rho}_1\leq \varepsilon
    \end{array}
    \right\},
    \label{eq:def-one-shot-key-cost}
\end{align}
where the infimum in~\eqref{eq:def-one-shot-key-cost} is taken over every LOCC channel $\mathcal{L}$ and every strictly irreducible private state $\gamma_{d_k,d_s}$ with an arbitrarily large, finite shield dimension $d_s\geq 1$. 
\end{definition}

The choice of LOCC in the above definition is justified by the fact that, in the private key distillation scenario, it is known that the following two approaches are equivalent: (i) the distillation of dits of privacy in the form 
$\frac{1}{d_k}\sum_{i=0}^{d_k-1}\op{ii}\otimes \rho_E$ from a tripartite state $\psi_{ABE}^{\otimes n}$ via local operations and public communication (LOPC) and (ii) the distillation of private states $\gamma_{d_k,d_s}(\Phi^+)$ for some~$d_s$, from $\rho_{AB}^{\otimes n}$ using LOCC~\cite{keyhuge,pptkey,karol-PhD}.
While most of the following results hold for the asymptotic definition of the key cost, we prove a yield-cost relation in the one-shot scenario that bounds the one-shot distillable key from above by the one-shot key cost and a small correction factor
 \begin{align}
 K_D^{\varepsilon_1}(\rho)\leq K_C^{\varepsilon_2}(\rho) + \log_2\! \left(\frac{1}{1-(\varepsilon_1+\varepsilon_2)}\right),
 \end{align}
 for $\varepsilon_1+\varepsilon_2 <1$ and $\varepsilon_1,\varepsilon_2\in[0,1]$, where $K_D^{\varepsilon_1}$ is the single-shot version of the distillable key $K_D$
(see Theorem~\ref{thm:second_low}). See~\cite{W21,RyujiRegulaWilde} for similar prior results.

\subsection{The key of formation}
Private states can be naturally generalized to a class of states that correspond to pure bipartite states. This generalization is achieved when one twists a pure bipartite state $\psi$ in such a way that $\tau$ controls the Schmidt bases of $\psi$, thereby obtaining generalized private states~\cite{karol-PhD,HorLeakage}, which need not be pure and are denoted by $\gamma(\psi)$. 
Below we focus on a subclass of the generalized private states, those which are {\it strictly irreducible}. They have a structure that assures irreducibility by construction; that is, they satisfy a condition analogous to the one from~\eqref{eq:strict_ir_main_text} characterising strictly irreducible private states.
We use these states to define the key of formation of a bipartite state $\rho_{AB}$ as an analog of the entanglement of formation
    \begin{equation}
        K_F(\rho_{AB})\coloneqq  \inf_{\{p_k,\gamma(\psi_k)\}_{k=1}^{k_{\operatorname{\max}}}} \sum_{k=1}^{k_{\operatorname{\max}}} p_k S_{A_{K}}(\gamma(\psi_k)),
\label{eq:kf1}
    \end{equation}
    where
    \begin{equation}
    S_{X}(\rho_{XY}) \coloneqq  -\operatorname{Tr}[\rho_X\log\rho_X]     
    \end{equation}
    is the von Neumann entropy of the reduced state $\rho_X \equiv \mathrm{Tr}_Y[\rho_{XY}]$.
    Here $\sum_{k=1}^{k_{\operatorname{\max}}} p_k \gamma(\psi_k)=\rho_{AB}$, and for each $k$, there exist unitary transformations $W_A^{(k)}\colon A\rightarrow A_{K}A_{S}$ for Alice and $W_B^{(k)}\colon B\rightarrow B_{K}B_{S}$ for Bob, such that the key can be obtained by measuring the systems $A_{K}B_{K}$ directly with local von Neumann measurements in the Schmidt bases of $\psi_k$. The shield systems $A_{S}B_{S}$ protect the key from the eavesdropper. Moreover, each state $\gamma(\psi_k)$ has the property that all of its key content is accessible in the $A_{K}B_{K}$ systems via local von Neumann measurements. In particular,  conditioned on the outcomes of this measurement, the systems $A_SB_S$ remain separable; i.e., the state is strictly irreducible.
    In the definition above, $k_{\operatorname{\max}}$ is arbitrarily large but finite. Based on~\cite{Yang2005}, we prove that taking the infimum over $k_{\operatorname{\max}}\leq |A|^2|B|^2+1$ suffices. The relation of the introduced quantities $K_C$, $K_F$, $K_F^\infty$ to the well known entanglement measures such as distillable key, entanglement cost, entanglement of formation, etc., is represented in Figure \ref{fig:entanglement-measures-diagram}.

\subsection{The case of reversibility}
The main result~\eqref{eq:main_char} along with the fact that $K_F(\rho)=\inf_{\{(p_k,\gamma(\psi_k))\}}\sum_k p_k K_D(\gamma(\psi_k))$, combined  with other bounds, implies our second main result. Namely,
\begin{align}
    K_D(\gamma(\psi)) & =E_R(\gamma(\psi))=\nonumber E_R^{\infty}(\gamma(\psi))=K_F^\infty(\gamma(\psi))=\\
       K_F(\gamma(\psi))&=K_C(\gamma(\psi))  = S_{A}(\psi)=S_{A_{K}}(\gamma(\psi))
    \end{align}
    (see Theorem~\ref{thm:key_cost_for_private}). We note that these equalities are appealing, in analogy with what happens for pure states $\psi$ in entanglement theory, for which the distillable entanglement, relative entropy of entanglement, and entanglement cost all coincide and are equal to the von Neumann entropy of the marginal of $\psi$. The difference is that system $A_K$ is the full local subsystem for $\gamma(\psi)$ only when $\gamma(\psi)$ is itself a pure state, which is generically not the case.

    More importantly, the analogy with entanglement theory is only partial. Indeed, the squashed entanglement~\cite{CW04} does not lie between the distillable key and the key cost (see detailed argument given in Remark~\ref{rem:squashed}). It is opposed to the fact that every proper entanglement monotone lies between the distillable entanglement and entanglement cost~\cite{H3}.

\subsection{Dilution protocol}
As mentioned, the core of the upper bound on 
$K_C$ via~\eqref{eq:main_char} is a dilution protocol (DP).
This protocol transforms a private state via LOCC into an approximation of a large number of copies of the desired generalized private state. Our idea behind this protocol is ultimately different from that of~\cite{LoPopescu}, where dilution of entanglement has been proposed using pure entanglement and quantum teleportation~\cite{BBPSSW1996}. A key difference arises because private states have, in general, a low amount of pure entanglement. We partly follow the idea of coherence dilution~\cite{YangWinter}, because generalized private states have a basis that is naturally distinguished by the fact that coherence in that basis assures secrecy of the key.

Our main statement regarding this dilution protocol is encapsulated in the following theorem (Theorem~\ref{thm:transformation}):

\begin{theorem}
\label{thm:pdprot_main_text}
For every generalized private state
$\gamma(\psi)$ with $|\psi\>=\sum_{a\in {\cal A}}\lambda_{a}|e_a\>|f_a\>$, and sufficiently large $n$, the state
$\gamma(\psi)^{\otimes n}$  can be created by LOCC from
$\gamma_{d_n}(\Phi^+)\otimes\Phi^+_{d'_n}$ with error less than $2\varepsilon$ in trace distance, and the number of key bits is equal to
\begin{equation}
\log_2 d_n = \lceil n(S_{A_K}(\psi) +\eta)\rceil,    
\end{equation}
 with $\log_2 d'_n= \lceil 4\delta n\rceil$, where $\eta,\varepsilon,\delta \rightarrow 0$ as $n\rightarrow \infty$.
\end{theorem}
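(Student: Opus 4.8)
The plan is to realize the dilution as a composition of two LOCC stages---an exact \emph{key-reshaping} stage and an approximate \emph{coherent twisting} stage that carries all of the error---guided by the coherence-dilution viewpoint. First I would write the target in twisted form $\gamma(\psi)^{\otimes n}=\tau\big(|\psi\>\!\<\psi|^{\otimes n}\otimes\rho^{\otimes n}\big)\tau^\dagger$, where $\tau=\sum_{\vec a}|e_{\vec a}f_{\vec a}\>\!\<e_{\vec a}f_{\vec a}|\otimes U_{\vec a}$ is the controlled twist, diagonal in the Schmidt bases $\{|e_{\vec a}\>\}$, $\{|f_{\vec a}\>\}$ of $\psi^{\otimes n}$, and each $U_{\vec a}\rho^{\otimes n}U_{\vec a}^\dagger$ is separable by strict irreducibility. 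Replacing $|\psi\>^{\otimes n}$ by its normalized typical projection $|\psi_{\mathrm{typ}}\>$ gives a state that is $\varepsilon$-close in trace distance for large $n$ and whose Schmidt rank is at most $2^{n(S_A(\psi)+\eta)}$ with essentially flat coefficients; since $\tau$ is unitary, $\tau(|\psi_{\mathrm{typ}}\>\!\<\psi_{\mathrm{typ}}|\otimes\rho^{\otimes n})\tau^\dagger$ is $\varepsilon$-close to the target and accounts for one factor of $\varepsilon$.

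For the key-reshaping stage I would instantiate the resource private state as $\gamma_{d_n}(\Phi^+)=\Phi^+_{d_n}\otimes\sigma_0$, with a maximally entangled key of dimension $\log_2 d_n=\lceil n(S_A(\psi)+\eta)\rceil$ and an arbitrary separable shield $\sigma_0$; this is strictly irreducible and contains exactly $\log_2 d_n$ key bits. Because $\log_2 d_n$ is at least the typical Schmidt rank, the flat Schmidt vector of $\Phi^+_{d_n}$ is majorized by that of $|\psi_{\mathrm{typ}}\>$, so by Nielsen's theorem there is a deterministic LOCC map reshaping $\Phi^+_{d_n}$ into $|\psi_{\mathrm{typ}}\>$ on the key registers. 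The crucial point is that this map is built from filters and corrections that are diagonal (or permutations) in the Schmidt basis; it therefore commutes with any subsequent twist and leaves the reshaping \emph{exact}, producing $|\psi_{\mathrm{typ}}\>\!\<\psi_{\mathrm{typ}}|$ on $A_KB_K$ at no entanglement cost. This reduces the entire theorem to a single task: applying the coherent twist $\tau$ by LOCC.

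The hard part will be exactly this coherent twisting stage, and it is where the extra resource $\Phi^+_{d'_n}$ and the coherence-dilution idea enter. Conditioned on a \emph{classical} key value $\vec a$ the required shield $U_{\vec a}\rho^{\otimes n}U_{\vec a}^\dagger$ is separable, hence locally preparable from shared randomness; but the key is in coherent superposition, and the essential difficulty---absent in ordinary state preparation---is that any classical message used to set up the correlated local preparations must reveal \emph{nothing} about $\vec a$, since leakage would decohere the key and destroy the very privacy we are trying to create. The plan is to implement $\tau$ as a coherent, privacy-preserving controlled operation that prepares the correct diagonal blocks $U_{\vec a}\rho^{\otimes n}U_{\vec a}^\dagger$ while generating the off-diagonal coherences $U_{\vec a}\rho^{\otimes n}U_{\vec a'}^\dagger$ from the key superposition, consuming the sublinear maximally entangled state $\Phi^+_{d'_n}$ with $\log_2 d'_n=\lceil 4\delta n\rceil$ to establish the conditional correlations coherently on the typical subspace. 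I expect the entire technical weight of the proof to sit in the error analysis of this step: showing that an $o(n)$-ebit coherent correlator reproduces all blocks of $\gamma(\psi_{\mathrm{typ}})$ up to trace distance $\varepsilon$, and that the accompanying communication is independent of the key, with the typicality parameter $\delta\to0$ as $n\to\infty$.

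Finally I would assemble the estimates. Composing the exact key-reshaping LOCC with the approximate coherent-twisting LOCC maps $\gamma_{d_n}(\Phi^+)\otimes\Phi^+_{d'_n}$ to a state that is $\varepsilon$-close to $\tau(|\psi_{\mathrm{typ}}\>\!\<\psi_{\mathrm{typ}}|\otimes\rho^{\otimes n})\tau^\dagger=\gamma(\psi_{\mathrm{typ}})$, which is in turn $\varepsilon$-close to $\gamma(\psi)^{\otimes n}$; the triangle inequality yields total error below $2\varepsilon$, with $\eta,\varepsilon,\delta\to0$ as $n\to\infty$ by letting the typicality and smoothing parameters vanish slowly in $n$. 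To summarize, the commutation of the Schmidt-diagonal key reshaping with the twist makes everything except the coherent, sublinear-entanglement, privacy-preserving realization of $\tau$ routine; that realization is the genuine obstacle.
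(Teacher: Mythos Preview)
Your decomposition into key-reshaping followed by coherent twisting has a genuine gap in the second stage. You correctly identify that implementing the controlled twist $\tau=\sum_{\vec a}|e_{\vec a}f_{\vec a}\rangle\langle e_{\vec a}f_{\vec a}|\otimes U_{\vec a}$ coherently is ``the genuine obstacle'', but you offer no mechanism for it beyond the observation that each conditional shield $U_{\vec a}\rho^{\otimes n}U_{\vec a}^\dagger$ is separable. Separability of the diagonal blocks does not help: producing the off-diagonal blocks $U_{\vec a}\rho^{\otimes n}U_{\vec a'}^\dagger$ in coherent superposition amounts to implementing a bipartite controlled unitary indexed by $\sim n$ symbols of key, and there is no reason this should cost only $O(\delta n)$ ebits. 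Your assertion that ``an $o(n)$-ebit coherent correlator reproduces all blocks'' is stated, not argued; the difficulty is not error analysis but the absence of any protocol.

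The paper sidesteps this obstacle by a different choice of resource state. Rather than taking $\gamma_{d_n}(\Phi^+)=\Phi^+_{d_n}\otimes\sigma_0$ with a trivial shield, it builds the shield and twist of the \emph{target} into the resource itself: $\gamma_{d_n}(\Phi^+)=\frac{1}{d_n}\sum_{r,r'}|rr\rangle\langle r'r'|\otimes U_{\mathbf{s}(r)}\rho^{\otimes n}U_{\mathbf{s}(r')}^\dagger$, where $\mathbf{s}(r)$ bijects compressed key labels to typical sequences. The dilution step (Alice prepares a local coherent state with the target amplitudes, compresses, applies a controlled-XOR onto her key register, measures, sends the outcome $x$; Bob shifts) only reshapes the key amplitudes. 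The shield is already twisted---just by the wrong index, $U_{\beta_x(\mathbf{s})}$ instead of $U_{\mathbf{s}}$, where $\beta_x$ is a bijection on the typical set. Because both $\mathbf{s}$ and $\beta_x(\mathbf{s})$ are strongly $\delta$-typical, their types differ in at most $\lceil 2\delta n\rceil$ positions, so a \emph{local permutation} of shield subsystems corrects all but $O(\delta n)$ of them; only those residual positions are teleported to one side, fixed locally, and teleported back, which is exactly what the $\lceil 4\delta n\rceil$ ebits pay for. The technical work (Permutation Algorithm, Phase Error Correction, and its inverse) is to do this coherently and to uncompute all auxiliary registers so that tracing them out leaks nothing about the key. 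In your decomposition the residual error after reshaping is not a permutation mismatch but the entire twist, and that cannot be repaired with sublinear entanglement.
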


The idea of the proof is as follows. Both parties are allowed at the beginning of the protocol to share an arbitrary private state of their choice. We thus fix their initial private state to be $\gamma_{d_n}(\Phi_+)$ with $d_n\coloneqq  |{\cal A}|^{\lceil n(S_{A_K}(\psi) + \eta) \rceil}$, that has a shield system in the same state and rotated by the same twisting unitary as the target state we want to create,
i.e.,
\begin{equation}
\gamma(\psi)^{\otimes n}=\sum_{\bf{s},\bf{s'}=0}^{|{\cal A}|^{n-1}} \sqrt{\lambda_{\bf{s}}\lambda_{\bf{s'}}} |e_{\bf s}f_{\bf s}\>\!\<e_{\bf s'} f_{\bf s'}| \otimes U_{\bf s}\widetilde{\rho}_{A_SB_S} U_{\bf s'}^{\dagger}.    
\end{equation}
Here
\begin{align}
{\bf s} & =({\bf s}[1],{\bf s}[2],\hdots,{\bf s}[n])\in {\cal A}^n, \\
U_{\bf s} & = U_{{\bf s}[1]}\otimes U_{{\bf s}[2]}\otimes \cdots  \otimes U_{{\bf s}[n]}\\
\widetilde{\rho}_{A_SB_S}& =\rho_{A_SB_S}^{\otimes {n}}.
\end{align}
Alice creates a pure state $|\phi\>_{A''}^{\otimes n}$ with the same amplitudes as the state $|\psi\>^{\otimes n}$ and compresses it, so that it is stored in $\lceil n(S_{A_K}(\psi)+\eta)\rceil$ qubits. She then uses the controlled-XOR operation from $A''$ to $A_K$, measures $A_K$ in the computational basis, and sends the result ${x}$ to Bob. This operation is a coherently applied counterpart of one-time-pad encryption, an analog of quantum teleportation~\cite{Collins2002}. This part of the protocol is similar to what is used in~\cite{YangWinter}, which is only the first part of our dilution protocol here.

The next part of the protocol proceeds as follows.
After Bob corrects his key part $B_K$, shifting his system by $x$, the resulting state
(after decompression on both sides) has the correct amplitudes, but the form of the shield systems is not yet correct. Up to a normalization factor, the overall state is as follows
\begin{align}
\sum_{{\bf s},{\bf s}'}{\sqrt{\lambda_{\bf s}\lambda_{{\bf s}'}}} |{\bf s}{\bf s}\>\!\<{\bf s}'{\bf s}'|\otimes U_{\beta_x({\bf s})}\rho_{A_SB_S}^{\otimes n}U_{\beta_x({\bf s}')}.
\end{align}
Here $\beta_x({\bf s})$ is a certain automorphism on the so-called $\delta$-typical sequences of length $n$, to which $\bf s$ also belongs (for details, see Section~\ref{sec:DPdescription}). The challenging task is then to (in a sense) invert $\beta_x$. To achieve this task, we design two algorithms:  Permutation Algorithm (PA) followed by Phase Error  Correction (PEC) and Inverting~PA (IPA). The idea of PA is to permute the shield systems $A_SB_S$ in order to invert $\beta_x$ in most positions, using the key and the system holding the pattern of errors as a control.
This results in a string ${\bf s}_{\operatorname{cor}}$, which agrees with ${\bf s}$ in most positions. 
The remaining small amount of systems, for which there is still a phase error, i.e., ${\bf s}_{\operatorname{cor}}[i]\neq {\bf s}[i]$,
are teleported to one side in the PEC procedure, corrected locally, and teleported back. This step consumes a nonzero but negligible amount of maximally entangled states. Finally, the Inverting PA procedure ensures that PA and PEC should not leave systems containing valuable information to Eve upon tracing them out. For details of the proof, see Sections~\ref{sec:alg}--\ref{sec:RP}.

\begin{figure}
    \centering
    \includegraphics[width=0.5\linewidth]{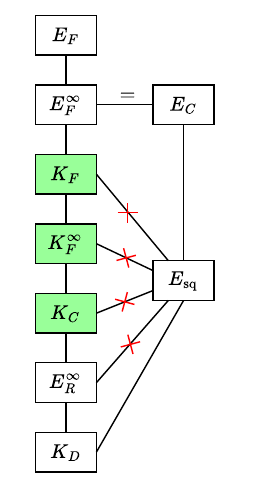}
    \caption{The diagram depicts  relations between well known entanglement measures and the quantities $K_C$, $K_F$, and $K_F^{\infty}$ introduced here (green boxes). A connection between two quantities corresponds to the fact that the higher one is no smaller that the lower one. Red cross means that connected quantities are incomparable. Note however, that the above diagram reduces for generalized private states, as quantities $K_F, K_F^{\infty}, K_C, E_R^{\infty}, K_D$ become equal for such states (see Theorem~\ref{thm:key_cost_for_private}).}
    \label{fig:entanglement-measures-diagram}
\end{figure}

\section{Summary of contents}

We introduce all the necessary preliminaries in Section~\ref{sec:preliminaries_strong_typ}. In Section~\ref{sec:irreversibility_of_privacy_creation}, we prove that the regularised relative entropy of entanglement is a lower bound on the key cost, which enables showing an example of the irreversibility of privacy creation-distillation process. In Section~\ref{sec:key_of_formation}, we define the key of formation and prove some of its basic properties, including monotonicity under certain elementary LOCC operations. Section~\ref{sec:dilution} contains a construction of a Privacy Dilution protocol, which dilutes twisted maximally entangled states to generalized private states. The rate at which privacy dilution can be performed using this protocol is given in the main theorem of this section. In Section~\ref{sec:regularized_key_of_formation_is_cost}, the Privacy Dilution protocol is used to prove that the regularized key of formation is an upper bound on the key cost. Section~\ref{sec:key_cost_of_generalized_private_states} contains proofs of several lemmas, including properties of generalized private states. The main theorem of this section states that the key of formation, the regularized key of formation, the key cost, the distillable key, and the relative and regularized relative entropy of entanglement are equal for strictly irreducible generalized private states. Section~\ref{sec:yield_cost_relation} shows the single-shot yield-cost relation, i.e., that the single-shot key cost is an upper bound (up to a factor) on the distillable key. Section~\ref{sec:behaviors} contains the definition of the key cost of a quantum device. It also introduces the device-independent key cost of a private state. The last section (Section~\ref{sec:subadditivity_of_key_of_formation}) contains the proof of subadditivity of the key of formation.

\section{Preliminaries}\setcurrentname{Preliminaries}
\label{sec:preliminaries_strong_typ}

Let us begin by providing some background on quantum information here and refer the reader to~\cite{BengtssonZyczkowski-book,Wat15} for more details.
The dimension of any quantum system $A$ is defined as the dimension of its Hilbert space $\mathcal{H}_A$. Let $|A|\coloneqq \dim(\mathcal{H}_A)$. The identity operator acting on the Hilbert space $\mathcal{H}_A$ is denoted as $\id_A$. The Hilbert space of a bipartite quantum system $AB$ is $\mathcal{H}_{AB}\coloneqq \mathcal{H}_A\otimes\mathcal{H}_B$, and its dimension is $|AB|=|A||B|$. Let $\mathcal{B}_+(\mathcal{H}_A)$ denote the set of all positive semi-definite operators acting on~$\mathcal{H}_A$. Quantum states are represented by density operators $\rho_A: \mathcal{H}_A\to \mathcal{H}_A$ and they satisfy following three criteria: (i) $\rho=\rho^\dag$, (ii) $\rho\geq 0$, and (iii) $\Tr[\rho]=1$. A pure state is a density operator of rank one, and a state that is not pure is called a mixed state. With slight abuse of nomenclature, we refer to both the pure state $\psi_A=\op{\psi}_A$ and its corresponding state vector $\ket{\psi}_A$, where $\ket{\psi}_A\in\mathcal{H}_A$, as pure states. Let $\mathcal{D}(\mathcal{H}_{A})$ denote the set of all quantum states of the system $A$. Let $\SEP(A\! :\! B)$ denote the set of all separable states of a bipartite quantum system $AB$, i.e., $\rho_{AB}\in\SEP(A\! :\! B)$ if and only if we can express $\rho_{AB}$ as
\begin{equation}
    \rho_{AB}=\sum_{x}p_x\sigma^x_A\otimes\omega^x_B,
\end{equation}
where $\sum_{x}p_x=1$ and for each $x$ we have $0\leq p_x\leq 1$, $\sigma^x_A\in\mathcal{D}(\mathcal{H}_A)$, and $\omega^x_B\in\mathcal{D}(\mathcal{H}_B)$. A bipartite quantum state $\rho_{AB}$ is called entangled if it is not separable, i.e., $\rho_{AB}$ is entangled if $\rho_{AB}\notin\SEP(A\! :\! B)$. 

A quantum channel $\mathcal{N}_{A\to B}: \mathcal{D}(\mathcal{H}_A)\to\mathcal{D}(\mathcal{H}_{B})$ is a completely positive, trace-preserving map. We use $\id_{A\to B}:\mathcal{D}(\mathcal{H}_A)\to\mathcal{D}(\mathcal{H}_{B})$ to denote an identity channel when $|A|=|B|$. The Choi state of a quantum channel $\mathcal{N}_{A\to B}$ is the output state $(\id_{R}\otimes\mathcal{N}_{A\to B})(\Phi^+_{RA})$ of the channel $\id\otimes\mathcal{N}$ when the input state is a maximally entangled state 
\begin{equation}
\Phi^+_{RA}\coloneqq  \frac{1}{d}\sum_{i,j=0}^{d-1}\ket{ii}\!\!\bra{jj}_{RA}, 
\end{equation}
where $d=\min\{|R|,|A|\}$ is the Schmidt rank of $\Phi^+_{RA}$ and $\{\ket{i}\}_{i}$ forms a set of orthonormal kets (vectors). An isometry is a linear map $U:\mathcal{H}_A\to \mathcal{H}_B$ such that $U^\dag U=\id_A$ and $UU^\dag=\Pi_B$, where $\Pi_B$ is a projection on the support of $\mathcal{H}_B$. A positive-operator valued measure (POVM) is a set $\{\Lambda^x\}_x$ of positive semi-definite operators, i.e., $\Lambda^x\geq 0$ for each $x$, such that $\sum_x\Lambda^x=\id$. A quantum instrument~$\Delta$ is a collection $\{\mathcal{M}^x\}_x$ of completely positive, trace-nonincreasing maps such that the sum map $\sum_x\mathcal{M}^x$ is a quantum channel
\begin{equation}
    \Delta_{A'\to AX}(\cdot)=\sum_x\mathcal{M}^x_{A'\to A}(\cdot)\otimes\op{x}_X,
\end{equation}
where quantum system $X$ represents a classical register, because $\{\op{x}\}_x$ forms a set of orthogonal pure states, which are perfectly discriminable.

The fidelity of $\rho_A,\sigma_A\in\mathcal{B}_+(\mathcal{H_A})$ is defined as~\cite{U76}
\begin{equation}
    F(\rho,\sigma)\coloneqq \norm{\sqrt{\rho_A}\sqrt{\sigma_A}}_1^2 ,
\end{equation}
where $\norm{\zeta}_1\coloneqq \Tr\sqrt{\zeta^\dag\zeta}$ is the trace norm for an arbitrary bounded operator $\zeta$. 

An LOCC channel $\mathcal{L}_{A'B'\to AB}$ is a bipartite quantum channel that can be expressed in the separable form as $\sum_{x}\mathcal{E}^x_{A'\to A}\otimes\mathcal{F}^x_{B'\to B}$ where each $\mathcal{E}^x$ and $\mathcal{F}^x$ are completely positive, trace-nonincreasing maps such that the sum map $\mathcal{L}$ is a quantum channel~\cite{CLM+14,Wat15}. Here pairs $A', A$ and $B',B$ are held by two spatially separated parties, say Alice and Bob. However, note that not every separable channel is an LOCC channel~\cite{BDFMRSSW99}. Using only an LOCC channel, Alice and Bob can only create separable states. In what follows, $\operatorname{LOCC}$ will denote also the set of all local (quantum) operations and classical communication (LOCC) channels.

Let $\mathscr{B}$ be the set of all bipartite quantum states $\rho_{AB}$ for all $|A|,|B|\geq 2$. A function $\mathbf{E}:\mathscr{B}\to \mathbb{R}$ is called an \textit{entanglement monotone} if it is monotonic (nonincreasing) under the action of an arbitrary LOCC channel, i.e., $\mathbf{E}(\rho_{A'B'})\geq \mathbf{E}(\mathcal{L}_{A'B'\to AB}(\rho_{A'B'}))$ where pairs of quantum systems $A',A$ and $B',B$ are held at spatially separated labs, say Alice and Bob, respectively. A direct consequence of this definition is that entanglement monotones are
\begin{enumerate}
    \item Invariant under the action of local isometries,
    \begin{equation}
       \mathbf{E}(\rho_{A'B'})= \mathbf{E}((U\otimes V)\rho_{A'B'}(U^\dag\otimes V^\dag)),
    \end{equation}
    for all local isometries $U_{A'\to AE}$ and $V_{B'\to BF}$, and
    \item Invariant under appending of quantum states via local operations,
    \begin{equation}\label{eq:apstate}
        \mathbf{E}(\rho_{AB})=\mathbf{E}(\sigma_{A'}\otimes\rho_{AB}\otimes\omega_{B'}),
    \end{equation}
    for all quantum states $\sigma_{A'}$ and $\omega_{B'}$ appended locally. 
\end{enumerate}

We also need fundamental results from classical information theory (see, e.g.,~\cite{EK11,Cover2006}). Consider a random variable $X$ with probability distribution $\{p_X(x)\}_{x \in \mathcal{X}}$, where $\mathcal{X}$ is a finite alphabet. Let $x^n$ be a sequence drawn from $\mathcal{X}$. The empirical probability mass function of $x^n$, also called its type, is given by
\begin{equation}
    \pi(x|x^n)\coloneqq \frac{\abs{\{i:x_i=x\}}}{n} \ \text{for}\ x\in\mathcal{X}. 
\end{equation}
For $\delta\in(0,1)$, we define the strongly $\delta$-typical  set as
\begin{align}
    T_n^{\delta}(X)\coloneqq 
 \{x^n: \abs{\pi(x|x^n)-p(x)}\leq \delta p(x) \ \forall x\in\mathcal{X} \}.
 \label{eq:def-str-typ}
\end{align}
The size $\abs{T_n^{\delta}(X)}$ of the strongly $\delta$-typical set  is approximately $2^{nH(X)}$, where
\begin{equation}
H(X)\coloneqq -\sum_{x\in\mathcal{X}}p_X(x)\log_2 p_X(x)    
\end{equation}
is the Shannon entropy of $X$~\cite{Sha48}; to be more precise, for all sufficiently large $n$
\begin{equation}
    (1-\delta)2^{n[H(X)-\eta(\delta)]}\leq \abs{T^\delta_n(X)}\leq  2^{n[H(X)+\eta(\delta)]},
    \label{eq:def-str-typ-size}
\end{equation}
where $\eta(\delta)\coloneqq \delta H(X)$, so that $\eta(\delta)\to 0$ as $\delta\to 0$. The probability that a randomly selected  sequence $x^n$ belongs to the strongly $\delta$-typical set $T_n^{\delta}(X)$ converges to one in the asymptotic limit $n\to \infty$; i.e.,
\begin{equation}
\lim_{n\to\infty} {\rm Pr}(X^n\in T^{\delta}_n(X))=1.    
\end{equation}

We will often use shorthand notations or suppress system labels and identity operations for simplicity, whenever the meaning is clear from the context.

\subsection{Entropies and generalized divergence}

The quantum entropy, also called von Neumann entropy, of a quantum system $A$ in a state $\rho_A\in\mathcal{D}(\mathcal{H}_A)$ is defined as
\begin{equation}
    S(A)_{\rho}\coloneqq S(\rho_A)=-\Tr[\rho_A\log_2\rho_A].
\end{equation}
For the sake of readability, we may use the notation $S_A(\rho)$ instead of the above.
 It is known that the quantum entropy satisfies the inequalities $0\leq S(\rho_A)\leq \log_2|A|$. The conditional quantum entropy, also called the conditional von Neumann entropy, $S(A|B)_{\rho}$ of a bipartite quantum state $\rho_{AB}$ conditioned on quantum system $B$ is given by
\begin{equation}
    S(A|B)_{\rho}\coloneqq S(AB)_{\rho}-S(B)_{\rho}.
\end{equation}
It is known that the conditional quantum entropy can be negative for an entangled state $\rho_{AB}$~\cite{CA97}. 

The quantum mutual information $I(A;B)_{\rho}$ of a quantum state $\rho_{AB}$ is defined as
\begin{equation}
I(A;B)_{\rho}\coloneqq S(A)_{\rho}+S(B)_{\rho}-S(AB)_{\rho}.    
\end{equation}
The quantum conditional mutual information $I(A;B|C)_{\rho}$ is the mutual information between $A$ and $B$ conditioned on $C$
\begin{equation}
    I(A;B|C)_{\rho}\coloneqq S(A|C)_{\rho}+S(B|C)_{\rho}-S(AB|C)_{\rho}.
\end{equation}
Both the quantities $I(A;B)_{\rho}$ and $I(A;B|C)_{\sigma}$ for all quantum states $\rho_{AB}$ and $\sigma_{ABC}$, respectively, are always non-negative~\cite{LR73}. The squashed entanglement $E_{\operatorname{sq}}(\rho_{AB})$ of a bipartite quantum state $\rho_{AB}$ is defined as~\cite{CW04}
\begin{equation}
    E_{\operatorname{sq}}(\rho_{AB})\coloneqq \inf_{\substack{\rho_{ABE}:\\ \Tr_E[\rho_{ABE}]=\rho_{AB}}} \frac{1}{2}I(A;B|E)_{\rho},
\label{eq:sq-ent-def}
\end{equation}
where the infimum is taken over every state $\rho_{ABE}$ that is an extension of $\rho_{AB}$.

A function $\mathbf{D}:\mathcal{D}(\mathcal{H}_A)\times\mathcal{D}(\mathcal{H}_A)\to \mathbb{R}$ is called a generalized divergence~\cite{PV10} if it satisfies the data-processing inequality for all quantum states $\rho,\sigma\in\mathcal{D}(\mathcal{H}_A)$ and an arbitrary quantum channel $\mathcal{N}_{A\to B}$
\begin{equation}
    \mathbf{D}(\rho\Vert\sigma)\geq \mathbf{D}(\mathcal{N}(\rho)\Vert\mathcal{N}(\sigma)).
\end{equation}
It directly follows from the above definition that any generalized divergence remains invariant under the following two operations~\cite{WWY14}: the action of an isometry~$U$ and appending of quantum states, i.e.,
\begin{align}
     \mathbf{D}(\rho\Vert\sigma) &= \mathbf{D}(U\rho U^\dag\Vert U\sigma U^\dag),\\
      \mathbf{D}(\rho\Vert\sigma) &= \mathbf{D}(\rho\otimes\omega \Vert \sigma\otimes\omega),
\end{align}
for an arbitrary isometric operator $U$ and an arbitrary quantum state $\omega$. Examples include the quantum relative entropy~\cite{Ume62}, defined for quantum states $\rho,\sigma$ as
\begin{equation}
D(\rho\Vert\sigma)\coloneqq \Tr[\rho (\log_2 \rho - \log_2 \sigma)],
\end{equation}
when $\supp(\rho)\subseteq \supp(\sigma)$ and otherwise $D(\rho\Vert\sigma)\coloneqq +\infty$. Another example is the sandwiched R\'enyi relative entropy~\cite{MDSFT13,WWY14}, which is denoted as $\wt{D}_\alpha(\rho\Vert\sigma)$ and defined for states $\rho,\sigma$, and  $ \alpha\in (0,1)\cup(1,\infty)$ as
\begin{equation}\label{eq:def_sre}
\wt{D}_\alpha(\rho\Vert \sigma)\coloneqq  \frac{1}{\alpha-1}\log_2 \Tr\!\left[\left(\sigma^{ (1-\alpha) / 2\alpha }\rho\sigma^{ (1-\alpha) / 2\alpha }\right)^\alpha \right] ,
\end{equation}
but it is set to $+\infty$ for $\alpha\in(1,\infty)$ if $\supp(\rho)\nsubseteq \supp(\sigma)$. The sandwiched R\'enyi relative entropy $\wt{D}_\alpha(\rho\Vert\sigma)$ of two states is nonincreasing under the action of a quantum channel $\mathcal{N}$ for all $\alpha\in[\frac{1}{2},1)\cup(1,\infty)$~\cite{FL13,W18opt}, i.e.,
\begin{equation}
\wt{D}_\alpha(\rho\Vert \sigma)\geq \wt{D}_\alpha(\mathcal{N}(\rho)\Vert \mathcal{N}(\sigma)).
\end{equation}
It is also monotone in~$\alpha$~\cite{MDSFT13}; that is, 
\begin{equation}
    \wt{D}_\alpha(\rho\Vert\sigma)\geq \wt{D}_\beta(\rho\Vert\sigma)
\end{equation}
for all $\alpha,\beta\in(0,1)\cup(1,\infty)$ such that $\alpha\geq \beta $. 

In the limit $\alpha\to 1$ the sandwiched R\'enyi relative entropy converges to the quantum relative entropy, and in the limit  $\alpha\to \infty$, it converges to the max-relative entropy~\cite{MDSFT13}, which is defined as~\cite{D09,Dat09}
\begin{equation}\label{eq:max-rel}
D_{\max}(\rho\Vert\sigma)\coloneqq \inf\{\lambda\in\mathbb{R}:\ \rho \leq 2^\lambda\sigma\},
\end{equation}
and if $\supp(\rho)\nsubseteq\supp(\sigma)$ then $D_{\max}(\rho\Vert\sigma)\coloneqq + \infty$. It is also known that $I(A;B)_{\rho}=D(\rho_{AB}\Vert \rho_A\otimes\rho_B)$, where $\rho_A$ and $\rho_B$ are reduced states of $\rho_{AB}\in\mathcal{D}(\mathcal{H}_{AB})$.

Another generalized divergence is the $\varepsilon$-hypothesis-testing divergence~\cite{BD10,WR12}, defined as
\begin{equation}
D^\varepsilon_h(\rho\Vert\sigma)\coloneqq -\log_2\inf_{\Lambda: 0\leq\Lambda\leq \id}\{\Tr[\Lambda\sigma]:\  \Tr[\Lambda\rho]\geq 1-\varepsilon\},
\label{def:d_h}
\end{equation}
for $\varepsilon\in[0,1]$ and density operators $\rho$ and $\sigma$. The $\varepsilon$-hypothesis testing relative entropy for $\varepsilon=0$ reduces to the min-relative entropy, i.e.,
\begin{equation}\label{eq:min-h}
    D_{\min}(\rho\Vert\sigma)\coloneqq D^{\varepsilon=0}_h(\rho\Vert\sigma) = -\log_2\operatorname{Tr}[\Pi_\rho \sigma],
\end{equation}
where $\Pi_\rho$ denotes the projection onto the support of $\rho$ (see \cite[Appendix~A-3]{wang_2019} for a proof of this equality).

Some other well-known generalized divergences include the trace distance $\norm{\rho-\sigma}_{1}$ and the negative of fidelity $-F(\rho,\sigma)$.

\subsection{Private states and generalized private states}

A tripartite key state $\gamma_{ABE}$ contains $\log_2 d_k$ bits of secret key, shared between $A$ and $B$, against an eavesdropper possessing $E$, if there exists a state $\sigma_E$ such that \begin{align}
   & (\mathcal{M}_A\otimes\mathcal{M}_B)(\gamma_{ABE}) \nonumber\\
    &\quad\quad = \frac{1}{d_k}\sum_{i=0}^{d_k-1}\op{i}_A\otimes\op{i}_B\otimes\sigma_E
\end{align} 
for a projective measurement channel $\mathcal{M}(\cdot)=\sum_{i}\op{i}(\cdot)\op{i}$, where $\{\ket{i}\}_{i=0}^{d_k-1}$ forms an orthonormal basis. The quantum systems $A$ and $B$ are maximally classically correlated, and the key value is uniformly random and independent of $E$.

A (perfect) bipartite private state $\gamma(\Phi^+)$ containing at least $\log_2d_k$ bits of secret key has the following form~\cite{keyhuge,pptkey,karol-PhD}
\begin{equation}
    \gamma_{A_KA_SB_KB_S}(\Phi^+)=\tau(\Phi^+_{A_KB_K}\otimes\rho_{A_SB_S})\tau^{\dag}, 
\end{equation}
where $\Phi^+_{A_KB_K}$ is a maximally entangled state of Schmidt rank $d_k$ and the state $\rho_{A_SB_S}$ of the shield systems gets twisted by $\tau_{A_KB_KA_SB_S}$, which is a controlled-unitary operation given as
\begin{equation}
    \tau_{A_KB_KA_SB_S}=\sum_{i=0}^{d_k-1}\op{ii}_{A_KB_K}\otimes U_{A_SB_S}^i ,
\end{equation}
where $U_{A_SB_S}^i$ is an arbitrary unitary operator  for each $i$. It is understood that $A_K, A_S$ are held by Alice and $B_S, B_S$ are held by Bob, both being safe against a quantum adversary.

A special class of private states for which $K_D(\gamma(\Phi^+))= \log_2 d_k$ (i.e., there is no more key in this state than that contained in its key systems) are called irreducible private states. This class has been characterized in~\cite{IR-pbits} in the following way: if there are no entangled but key undistillable states, all irreducible private states are strictly irreducible (SIR)~\cite{karol-PhD,FerraraChristandl}, that is, taking the form
 \begin{equation}
     \gamma_{d_k,d_s}(\Phi_+)\coloneqq \frac{1}{d_k} \sum_{i,j=0}^{d_k-1}|ii\>\!\<jj|_{\operatorname{key}}\otimes U_i \rho_{\operatorname{shield}} U_j^{\dagger},
     \label{eq:strict_ir}
 \end{equation}
 where $U_i\rho_{\operatorname{shield}}U_i^{\dagger}$ is a separable state for each $i$ and $\rho_{\operatorname{shield}}$ is an arbitrary state defined over the Hilbert space of dimension $d_s\times d_s$.

 Private states $\gamma(\Phi^+)$ can be naturally generalized to a class of states in which $\Phi^+_{A_KB_K}$ is replaced with an arbitrary pure bipartite state $\psi_{A_SB_S}$, i.e.,\begin{equation}
     \gamma_{A_KA_SB_KB_S}(\psi)=\tau(\psi_{A_KB_K}\otimes\rho_{A_SB_S})\tau^{\dag},
 \end{equation}
 where $\tau_{A_KA_SB_KB_S}$ is a controlled-unitary operation (twisting unitary) defined earlier that twists the state $\psi_{A_KB_K}$ of the key system with the state $\rho_{A_SB_S}$ of the shield system. These states $\gamma(\psi)$ have been called Schmidt-twisted pure states in~\cite{HorLeakage,karol-PhD}. They are, however, pure if and only if $\rho_{A_SB_S}$ is pure; else, they are mixed. Therefore, we will refer to them here as {\it generalized private states}. These states are of the following form
\begin{align}
   \gamma_{A_KA_SB_KB_S}(\psi)\coloneqq  \widetilde{\tau}( \psi_{A_KB_K}\otimes \rho_{A_SB_S}) \widetilde{\tau}^{\dagger},
   \label{eq:Stpb}
\end{align}
where
\begin{align}
 \psi_{A_KB_K} & =\op{\psi}_{A_KB_K},\\
 |\psi\>_{A_KB_K} & = \sum_{i=0}^{d_k-1}\sqrt{\mu_i} |e_i f_i\>_{A_KB_K},
\end{align}
is a pure bipartite state of Schmidt-rank $d_k$ and a twisting unitary 
\begin{equation}
\widetilde{\tau}_{A_KA_SB_KB_S}=\sum_{i=0}^{d_k-1}\op{e_if_i}_{A_KB_K}\otimes U^{i}_{A_SB_S}
\end{equation}
has control in the orthonormal bases $\{|e_i\>\}_i$ and $\{|f_i\>\}_i$, while the state $\rho$ is arbitrary. In~\cite{HorLeakage}, a class of generalized private states $\gamma(\psi)$ called {\it irreducible generalized private states} (called there Schmidt-twisted pure states) were introduced. We relax this class here, by allowing all states that are locally unitarily equivalent to the generalized private states to fall into the same class of generalized private states. 

\begin{definition}
    A generalized private state $\gamma(\psi)_{AB}$ is called irreducible if there exists a unitary transformation
    \begin{align}
    W_{AB}\coloneqq W_{A\rightarrow A_KA_S}\otimes W_{B\rightarrow B_KB_S},  
    \end{align}
    such that the following two conditions are satisfied: 
    \begin{multline}
    W_{AB} \gamma(\psi)_{AB} W_{AB}^{\dagger} = \\
    \sum_{i,j} \sqrt{\mu_i\mu_j}|e_i f_i\>\!\<e_jf_j|_{A_KB_K}\otimes U_i\rho_{A_SB_S} U_j^{\dag},
    \end{multline}
    and
    \begin{equation}
    S_{A_K}(W_{AB} \gamma(\psi) W_{AB}^{\dagger}) = K_D(\gamma(\psi)).
    \end{equation}
    \label{def:gen_p_states}
\end{definition}
We further define strictly irreducible generalized private states.
\begin{definition}
    The generalized private state, equivalent up to local unitary transformations to a state
    \begin{multline}
    \gamma_{A_KB_KA_SB_S}(\psi) = \\
    \sum_{i,j} \sqrt{\mu_i\mu_j}|e_i f_i\>\!\<e_jf_j|_{A_KB_K}\otimes U_i\rho_{A_SB_S} U_j^{\dag}
    \end{multline}
    is called strictly irreducible if $\rho_i = U_i \rho_{A_SB_S} U_i^{\dagger}\in \SEP(A_S\! :\! B_S)$ for each $i$. We denote the set of all strictly irreducible generalized private states as GSIR.
    \label{def:gen_sir_states}
\end{definition}

\section{Irreversibility of privacy creation and distillation}\setcurrentname{Irreversibility of privacy creation and distillation}
\label{sec:irreversibility_of_privacy_creation}

Before we state the main result of this section, we introduce our definition of the asymptotic key cost $K_{C}(\rho)$ and one-shot key cost $K_{C}^{\varepsilon}(\rho)$ of a bipartite quantum state $\rho_{AB}$.

\begin{definition}
The asymptotic key cost $K_{C}(\rho)$ and one-shot key cost $K_{C}^{\varepsilon}(\rho)$ of a state $\rho_{AB}$ are defined as 
\begin{align}
K_{C}(\rho) & \coloneqq  \sup_{\varepsilon \in (0,1)} \limsup_{n\rightarrow\infty} \frac{1}{n} K_{C}^{\varepsilon}(\rho^{\otimes n}) , \\
K_{C}^{\varepsilon}(\rho) & \coloneqq \inf_{\substack{\mc{L} \in \LOCC,\\ \gamma_{d_k,d_s} \in \IR}} \left\{
\begin{array}[c]{c}
    \log_2 d_k :\\
    \frac{1}{2}\norm{\mc{L}(\gamma_{d_k,d_s})- \rho}_1\leq \varepsilon
\end{array}
\right\},
\end{align}
where the infimum is taken over every LOCC channel~$\mathcal{L}$ and every strictly irreducible private state $\gamma_{d_k,d_s}$ with arbitrarily large, finite shield dimension $d_s\geq 1$. 
\end{definition}

To exhibit irreversibility in the privacy distillation-creation process, we use standard techniques to prove that the regularized relative entropy of entanglement bounds the key cost from below.

\begin{theorem}
\label{thm:irrev}
For every bipartite state $\rho_{AB}$, the following bound holds
	\begin{equation}
	K_{C}(\rho) \geq E^{\infty}_R(\rho),
	\end{equation}
	where  
    $E_R^{\infty}(\rho) \coloneqq \lim_{n\rightarrow \infty}\frac{1}{n} E_R(\rho^{\otimes n})$ and 
    $E_R(\rho)  \coloneqq  \inf_{\sigma \in \SEP} D(\rho\|\sigma)$.
\end{theorem}

\begin{proof}
We follow here some standard arguments used in quantum resource theories. Let us begin by giving the basic idea of the proof.
We first note that the relative entropy of entanglement of an irreducible private state~$\gamma_k$
is bounded from above by $k$, i.e., $E_R(\gamma_k) \leq k$ (see Theorem~2 in~\cite{IR-pbits}).
We then note that under the action of an LOCC channel $\mc{L}$ that approximately generates the target state $\rho^{\otimes n}$, the relative entropy does not increase, and for a separable state $\sigma$,
$\mc{L}(\sigma) =\sigma'\in \SEP$.  Furthermore, asymptotic continuity 
of the relative entropy of entanglement corresponds to the following inequality
\begin{equation}
    |E_R(\mc{L}(\gamma_k)) - E_R(\rho^{\otimes n})| \leq f(\varepsilon,d,n),
\end{equation}
where
\begin{align} 
f(\varepsilon,d,n) & \coloneqq \varepsilon \log_2 d^n + g(\varepsilon),\\
    g(\varepsilon) & \coloneqq (\varepsilon+1) \log_2(\varepsilon+1) - \varepsilon \log_2 \varepsilon.
\end{align}
Regularization of the considered quantity by $n$ finishes
the proof.

We state these steps formally now. Consider that
\begin{align}
\log_2 d_k & \geq E_R(\gamma_{d_k}) \\ 
& \geq E_R(\mc{L}(\gamma_{d_k})) \\
& \geq E_R(\rho^{\otimes n})-f(\varepsilon,d,n),
\end{align}
where we have applied \cite[Theorem~2]{IR-pbits} for the first inequality, LOCC monotonicity of the relative entropy of entanglement for the second inequality, and asymptotic continuity of the relative entropy of entanglement \cite[Corollary~8]{W16} for the third.
Since the above inequalities hold for each $\mc{L}\in \operatorname{LOCC}$ and $\gamma_{d_k}\in \operatorname{SIR}$ such that $\frac{1}{2}\left\|\mc{L}(\gamma_{d_k}) - \rho^{\otimes n}\right\|_1 \leq \varepsilon$ and $K_C^{\varepsilon}$ is defined to be the infimum of $\log_2 d_k$ over all such $\mc{L}$ and $ \gamma_{d_k}$, then we conclude that
\begin{align}
   K_C^{\varepsilon}(\rho^{\otimes n}) \geq 
    E_R(\rho^{\otimes n}) -f(\varepsilon,d,n) .
    \end{align}
After dividing by $n$ and applying the definition of key cost and the regularized relative entropy of entanglement, we conclude that
\begin{align}
   K_C(\rho) = \sup_{\varepsilon \in (0,1) }\limsup_{n\to\infty} \frac{1}{n}K_C^{\varepsilon}(\rho^{\otimes n}) & \geq E^{\infty}_R(\rho).
\end{align}
This concludes the proof.
\end{proof}

As an important consequence, it follows that  the process of privacy creation is irreversible. 
Indeed in~\cite{CSW12} (see also~\cite{BCHW}) it was shown that there are states $\widehat{\rho}$ satisfying $E_{\operatorname{sq}} \ll E_R^{\infty}$, where 
$E_{\operatorname{sq}}$ is the squashed entanglement measure defined in~\eqref{eq:sq-ent-def}. We then have 
\begin{equation}
K_D(\widehat{\rho})\leq E_{\operatorname{sq}}(\widehat{\rho})\ll  E_R^{\infty}(\widehat{\rho})\leq K_{C}(\widehat{\rho}),
\end{equation}
which implies that $K_D(\widehat{\rho})\ll K_{C}(\widehat{\rho})$.
The above  inequality  is the first example of irreversibility in the case of privacy. 

\begin{remark}
We note here that a simplified version of the definition of $K_C$ is to consider only those states that take the form of the so-called Bell private states~\cite{FerraraChristandl} whenever they are strictly irreducible.
A Bell private state takes the form
\begin{equation}
    \sum_{i} p_i \op{\Phi_i}_{A_KB_K}\otimes\rho^{(i)}_{A_SB_S}
\end{equation}
where $\rho_{A_SB_S}^{(i)}$ are arbitrary, yet mutually orthogonal states on $A_SB_S$, and the states $\op{\Phi_i}$ are Bell states of local dimension $d_k$ generated
from $|\Phi\> =\frac{1}{\sqrt{d_k}}\sum_{i=0}^{d_k-1} |ii\>$ by the local rotations
\begin{equation}
    X^{m}{ Z}^j = \left(\sum_{l'=0}^{d_k-1} |m\oplus l'\>\!\<l'|\right)\left(\sum_{l=0}^{d_k -1} e^{2\pi i j l/d_k}|l\>\!\<l|\right).
\end{equation}
This simplification comes from the fact (see~\cite[Lemma 1]{FerraraChristandl}) that a one-way LOCC channel can reversibly transform any private state to a Bell private state. 
\end{remark}

\begin{remark} \label{rem:broaden-Kc-definition}
The existence of entangled but zero distillable key states would result in a redefinition of the key cost to a novel quantity, say $K_C'$, which is smaller
than our key cost $K_C$, since it would be based on taking the infimum over a larger class of private states, those with conditional states that are not
necessarily separable.
\end{remark}

\section{Key of formation}\setcurrentname{Key of formation}
\label{sec:key_of_formation}

Here, we generalize the notion of entanglement of formation to the case
of the key of formation. We will need
a notion corresponding to a pure state
in the domain of private states. For this purpose,
we consider irreducible generalized private states $\gamma_{A_KA_SB_KB_S}(\psi)$ (see Eq.~\eqref{eq:Stpb}). We denote such a state by $\gamma(\psi)$, where $\psi$ is the pure state under consideration.

We note that subsystem $A_KB_K$, i.e., the key part of a private state, is not the same for different private states as they can differ by some product $W_{A_KA_S}\otimes W_{B_KB_S}$ of local unitary
transformations,  and the key part of one private state can be of a different dimension than the one for the other private state.
For this reason, by the {\it key} $A_{K}$ (similarly $B_{K}$), we will mean the one from which after a von Neumann measurement in some product basis $\{|e_i\>\}_i$ ($\{|f_i\>\}_i$) the whole key of the generalized private state given in Eq.~\eqref{eq:Stpb}
is obtained. 
The key of formation is then defined as follows:
 \begin{definition} 
 \label{def:formation}
 The key of formation of a bipartite state~$\rho$ is defined as
\begin{align}
    K_F(\rho)\coloneqq \inf_{\scriptsize
    \left\{\begin{array}{c}
        \sum_{k=1}^K p_k\gamma(\psi_k)=\rho,\\
        \gamma(\psi_k) \in \operatorname{GSIR}
    \end{array}\right\}} \sum_{k=1}^K p_kS_{A_{K}}[\gamma(\psi_k)],
    \label{eq:kf}
\end{align}
where $\gamma(\psi_k)$ is a strictly irreducible generalized private state and $S(\cdot)$ denotes the von Neumann entropy. Additionally, the infimum is taken over finite, yet arbitrarily large  $K$.
\end{definition}
The definition provided above directly parallels the definition of the entanglement of formation, but it is applicable to the private key resource theory developed here. Intuitively, $K_F$ quantifies the minimal average amount of (i) directly accessible, (ii) ideally secure correlations, where the minimum runs over the set of states having this property. 
Here, by ``directly accessible,'' we mean the correlations that can be generated by two local, in general incomplete, von Neumann measurements. Furthermore, we consider ideally secure correlations, i.e., decoupled from the purifying system of a purification of $\rho$ held by the eavesdropper, yet possibly biased. They generalize notion of secret key, which has the outcomes of the mentioned von Neumann measurements necessarily equally likely.
Generalized private states satisfy the above two properties (i)-(ii); thus, they are a natural choice in the definition.
However, we need to narrow this class of states to irreducible ones only, since, without this requirement, the proposed definition would be trivially equal to zero. Indeed, all the secret correlations could be considered as a part of the shield subsystem, which is not taken into account in the definition. Finally, we decided to consider only strictly irreducible generalized private states because this class is precisely characterized in~\cite{HorLeakage}. 

Let us note that the defined key of formation has an operational interpretation. Namely, it quantifies the minimal average amount of ideally secure, directly accessible correlations required to generate a state by sampling from its ensemble. 

As a first step, we show that in the above definition, the infimum can be taken over ensembles with at most $(|A||B|)^2+1$ elements. This technical result is crucial in proving
the main result in Theorem~\ref{thm:regularization}.

\begin{lemma}
\label{lem:finite}
For a state $\rho_{AB}$, the infimum in Definition~\ref{def:formation} of $K_F(\rho_{AB})$ can 
be taken on ensembles with a number of elements at most $(|A|\times |B|)^2+1$.
\end{lemma}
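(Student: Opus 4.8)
The plan is to reduce the statement to a Carath\'eodory-type dimension count, following the technique used for the entanglement of formation in \cite{Yang2005}. The first observation is that every strictly irreducible generalized private state $\gamma(\psi)$ appearing in a decomposition $\sum_k p_k\gamma(\psi_k)=\rho$ is a state on the \emph{same} space $\mathcal{H}_{AB}$ as $\rho$: by Definitions~\ref{def:gen_p_states}--\ref{def:gen_sir_states} each such state is locally unitarily equivalent to a canonical form via unitaries $W_{A\to A_KA_S}$ and $W_{B\to B_KB_S}$, which forces $|A_K||A_S|=|A|$ and $|B_K||B_S|=|B|$, so that the shield dimensions are bounded. Hence all ensemble members, together with their entropies, can be regarded as points in a single finite-dimensional real space. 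Concretely, writing $\mathrm{Herm}(\mathcal{H}_{AB})$ for the real vector space of Hermitian operators on $\mathcal{H}_{AB}$, of real dimension $(|A||B|)^2$, I would associate to each $\gamma(\psi)\in\operatorname{GSIR}$ the point
\begin{equation}
\Gamma(\psi)\coloneqq \left(\gamma(\psi),\, S_{A_K}[\gamma(\psi)]\right)\in\mathrm{Herm}(\mathcal{H}_{AB})\oplus\mathbb{R},
\end{equation}
and let $M$ denote the set of all such points. A finite decomposition realizing objective value $v\coloneqq \sum_k p_k S_{A_K}[\gamma(\psi_k)]$ is then exactly a representation of the single point $(\rho,v)$ as a convex combination of points of $M$.

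Next I would do the bookkeeping that pins down the constant. Since each operator coordinate is a density operator, every point of $M$ lies in the affine subspace cut out by $\Tr(\cdot)=1$ in the operator slot, whose affine dimension is $((|A||B|)^2-1)+1=(|A||B|)^2$ once the real slot is included. Carath\'eodory's theorem, applied in this affine space, states that any point of $\operatorname{conv}(M)$, and in particular $(\rho,v)$, can be written as a convex combination of at most $(|A||B|)^2+1$ of the points already present in the given representation.

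Finally I would transport this back to ensembles. The standard proof of Carath\'eodory discards members and reweights the survivors, so the resulting collection is a sub-collection of $\{\Gamma(\psi_k)\}_k$ and therefore still consists only of strictly irreducible generalized private states. Reading off the two coordinates of $(\rho,v)=\sum_j q_j\Gamma(\psi_{k_j})$ yields a valid $\operatorname{GSIR}$ decomposition $\rho=\sum_j q_j\gamma(\psi_{k_j})$ with at most $(|A||B|)^2+1$ terms and the \emph{same} objective value $v$. Thus every admissible decomposition is matched by one of bounded size with identical cost, the infimum over ensembles of size at most $(|A||B|)^2+1$ is no larger than the infimum in Definition~\ref{def:formation}, and the reverse inequality holds trivially because these ensembles form a subfamily. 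I expect the only genuinely delicate point to be the dimension count together with the verification that the Carath\'eodory reduction cannot leave the class $\operatorname{GSIR}$; both are handled by the observations above, so that---unlike in arguments relying on attainment of the infimum---no compactness or closure of $M$ is required, since Carath\'eodory is applied directly to each explicitly given finite ensemble.
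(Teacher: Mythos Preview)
Your argument is correct and follows essentially the same approach as the paper: both attach to each ensemble member the pair $(\gamma(\psi_k),S_{A_K}[\gamma(\psi_k)])$, observe that these points lie in an affine space of real dimension $(|A||B|)^2$ (Hermitian, unit trace, plus one entropy coordinate), and apply Carath\'eodory's theorem to obtain a sub-ensemble of size at most $(|A||B|)^2+1$ with the same barycenter and hence the same objective value. Your writeup is in fact slightly more explicit than the paper's about why the reduced ensemble remains in $\operatorname{GSIR}$ and why no attainment or compactness assumption is needed.
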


\begin{proof}
We closely follow the idea of Proposition~3 in~\cite{Yang2005}. Consider a decomposition of $\rho_{AB}$ into a finite number of states, as $\rho_{AB} =  \sum_{k=1}^{K}p_k\gamma(\psi_k)$, with $K>D+1$ and $D\coloneqq (|A|\times |B|)^2$. We aim to find a decomposition $\rho_{AB} = \sum_{k=1}^{D+1} q_k \widetilde{\gamma}(\psi_k)$, where $\{q_k\}_k$ is a probability distribution. One can construct a polytope $C(S)$, which is the convex hull of the set
\begin{equation}
S\coloneqq \{(\gamma(\psi_k),S_{A_K}(\gamma(\psi_k))\}_{k=1}^K.     
\end{equation}
The set $C(S)$ is convex by definition and a proper subset of $\mathbb{R}^{D}$. This holds because private states are spanned by Hermitian operators and have unit trace, and one more parameter corresponds to the value of $S_{A_K}(\gamma(\psi_k))$. Now, by definition
\begin{equation}
y \coloneqq \left(\sum_{k=1}^K p_k \gamma(\psi_k) ,\sum_{k=1}^{K} p_kS_{A_K}(\gamma(\psi_k))\right)\in C(S). 
\end{equation}
It is then important to note that the extremal points of $C(S)$ are contained within the set $S$. Hence by Caratheodory's theorem, there exist at most $D+1$ extremal points $\{(\widetilde{\gamma}(\psi_j),S_{A_K}(\widetilde{\gamma}(\psi_j))\}_{j=1}^{D}\subseteq S$ in $C(S)$ such that
$y= \sum_{j=1}^{D+1}q_j(\widetilde{\gamma}(\psi_j),S_{A_K}(\widetilde{\gamma}_j))$, which implies the statement of the lemma. 
\end{proof}

Note that, while we show above that the infimum in the definition of $K_F$ can be taken over ensembles of bounded size, it remains open to determine if the infimum is attained. In what follows, we will use that $K_F$, constructed by the convex-roof method, is a convex function. We prove it below for the sake of the completeness of the presentation.

\begin{lemma}
    $K_F$ is a convex function.
    \label{lem:kf_is_conv}
\end{lemma}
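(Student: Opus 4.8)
The plan is to prove convexity of $K_F$ directly from its definition as a convex-roof construction, i.e.\ an infimum of an average entropy over all decompositions of the state into strictly irreducible generalized private states. The essential observation is that a convex combination of two states can be decomposed using the optimal (or near-optimal) decompositions of each constituent, and the resulting decomposition is a valid competitor in the infimum defining $K_F$ of the mixture. I would argue through $\varepsilon$-optimal ensembles rather than attained optima, since the preceding remark explicitly flags that it is open whether the infimum in Definition~\ref{def:formation} is achieved.

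Concretely, fix two states $\rho^{(0)}_{AB}$ and $\rho^{(1)}_{AB}$ and $\lambda \in [0,1]$, and set $\rho \coloneqq \lambda \rho^{(0)} + (1-\lambda)\rho^{(1)}$. First I would choose, for each $\varepsilon>0$, decompositions $\rho^{(0)} = \sum_k p_k\, \gamma(\psi_k)$ and $\rho^{(1)} = \sum_j q_j\, \gamma(\phi_j)$ into GSIR states that are $\varepsilon$-close to optimal, meaning
\begin{align}
\sum_k p_k\, S_{A_K}(\gamma(\psi_k)) &\leq K_F(\rho^{(0)}) + \varepsilon, \\
\sum_j q_j\, S_{A_K}(\gamma(\phi_j)) &\leq K_F(\rho^{(1)}) + \varepsilon.
\end{align}
Then the combined ensemble $\{\lambda p_k,\, \gamma(\psi_k)\}_k \cup \{(1-\lambda) q_j,\, \gamma(\phi_j)\}_j$ is a decomposition of $\rho$ into strictly irreducible generalized private states with a legitimate probability distribution, hence is admissible in the infimum defining $K_F(\rho)$. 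Since $K_F(\rho)$ is a lower bound on the average entropy of any admissible decomposition, I obtain
\begin{equation}
K_F(\rho) \leq \lambda \sum_k p_k\, S_{A_K}(\gamma(\psi_k)) + (1-\lambda) \sum_j q_j\, S_{A_K}(\gamma(\phi_j)) \leq \lambda K_F(\rho^{(0)}) + (1-\lambda) K_F(\rho^{(1)}) + \varepsilon.
\end{equation}
Letting $\varepsilon \to 0$ yields $K_F(\rho) \leq \lambda K_F(\rho^{(0)}) + (1-\lambda) K_F(\rho^{(1)})$, which is exactly convexity.

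The argument is routine in structure, mirroring the standard proof that entanglement of formation is convex, so I do not expect a deep obstacle. The one point that genuinely requires the structure of the GSIR class is verifying that the merged ensemble remains admissible: each member is already a strictly irreducible generalized private state by hypothesis, and the definition of $K_F$ only demands a finite ensemble of such states summing to the target, so concatenation preserves admissibility. I would also confirm the bookkeeping that the weights $\{\lambda p_k\} \cup \{(1-\lambda)q_j\}$ sum to one and are nonnegative, and invoke Lemma~\ref{lem:finite} only if one wishes to bound the ensemble size of the resulting decomposition---though for convexity alone this is unnecessary, since the definition already permits arbitrarily large finite ensembles. Thus the chief subtlety is simply working with $\varepsilon$-optimal rather than optimal decompositions, which circumvents the open question of whether the infimum is attained.
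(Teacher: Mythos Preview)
Your proof is correct and follows essentially the same route as the paper's: both pick $\varepsilon$-optimal GSIR decompositions of the constituents, merge them into a single admissible decomposition of the mixture, and let the slack parameter tend to zero. The only cosmetic difference is that the paper states the argument for an arbitrary finite convex combination $\sum_k p_k \rho_k$ with separate slacks $\eta_k$, whereas you treat the two-term case with a single $\varepsilon$; these are equivalent.
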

\begin{proof}
    Let $\sum_k^K p_k \rho_k$ be a convex mixture of quantum states and let $\eta_1, \ldots, \eta_K>0$. Since $K_F$ is by the definition an infimum over ensembles into generalized strictly irreducible private states, then for each $\rho_k$ there exist an ensemble $\{(q_{kl}, \gamma_{kl})\}_l$ such that $\sum_{l} q_{kl} \gamma_{kl} = \rho_k$ and $\sum_{l} q_{kl} S_{A_K}[\gamma_{kl}] \leq K_F(\rho_k) + \eta_k$. Now observe that $\sum_{k,l} p_k q_{kl} \gamma_{kl}$ is a valid decomposition into generalized strictly irreducible private states of the state $\sum_k^K p_k \rho_k$. Again, since $K_F$ is an infimum we obtain
    \begin{align}
        &K_F\!\left(\sum_k p_k \rho_k\right) \leq \sum_{k,l} p_k q_{kl} S_{A_K}[\gamma_{kl}] \nonumber\\
        &\leq \sum_{k} p_k (K_F(\rho_k) + \eta_k) = \sum_k p_k K_F(\rho_k) + \sum_k p_k\eta_k.
    \end{align}
    Since it holds true for every $\eta_1, \ldots, \eta_K>0$,  it follows that
    \begin{equation}
        K_F\!\left(\sum_k p_k \rho_k\right) \leq \sum_k p_k K_F(\rho_k),
    \end{equation}
    concluding the proof.
\end{proof}

\subsection{Monotonicity of the key of formation under some elementary operations}\setcurrentname{Monotonicity of the key of formation under some elementary operations}
\label{sec:monotonicity}

 In this section, we prove that $K_F$ is monotonic
 under certain LOCC operations. We note first that it is easy to observe that the key of formation does not increase after local measurements on pure states.

\begin{observation}
\label{pbs:KF-LOCC-mono}
$K_F$ does not increase on pure states under LOCC. 
\end{observation}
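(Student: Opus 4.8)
The plan is to reduce the claim to the already-known LOCC monotonicity of the entanglement of formation $E_F$ by sandwiching $K_F$ on pure inputs between $E_F$ from above and from below. First I would record the two ingredients the excerpt provides or that are standard: (i) $K_F\le E_F$ for every bipartite state, since the set of pure states is contained in the set of (strictly irreducible) generalized private states, so every pure-state decomposition used in $E_F$ is an admissible ensemble in the $K_F$ infimum; and (ii) $E_F$ is an LOCC monotone that equals the marginal entropy $S_A(\psi)$ on any pure state $\psi$.

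The crux is the pure-state identity $K_F(\psi)=E_F(\psi)=S_A(\psi)$. The upper bound $K_F(\psi)\le S_A(\psi)$ is immediate: regarding $\psi$ as a generalized private state with a trivial one-dimensional shield makes it strictly irreducible with key system $A_K=A$, so the singleton $\{(1,\psi)\}$ is an admissible decomposition evaluating to $S_{A_K}(\psi)=S_A(\psi)$. For the matching lower bound I would exploit that $\psi$ is pure, hence an extreme point of the set of density operators: any decomposition $\psi=\sum_k p_k\gamma(\psi_k)$ with $p_k>0$ forces $\gamma(\psi_k)=\psi$ for all $k$. Thus the infimum ranges only over representations of $\psi$ itself as a strictly irreducible generalized private state, and each such representation has $S_{A_K}(\gamma(\psi_k))=K_D(\psi)=S_A(\psi)$, using that strictly irreducible states are irreducible (so that $S_{A_K}=K_D$) and that $K_D(\psi)=S_A(\psi)$ for pure states. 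Hence $K_F(\psi)=S_A(\psi)$.

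With the identity established, the observation follows immediately: for any LOCC channel $\mathcal{L}$ and pure $\psi$,
\begin{equation}
K_F(\mathcal{L}(\psi))\;\le\;E_F(\mathcal{L}(\psi))\;\le\;E_F(\psi)\;=\;S_A(\psi)\;=\;K_F(\psi),
\end{equation}
where the first step is $K_F\le E_F$, the second is LOCC monotonicity of $E_F$, and the last two are the pure-state identity. Applying the same chain branchwise to the post-measurement ensemble $\{(p_x,\psi_x)\}$ produced by a local measurement gives the on-average form $\sum_x p_x K_F(\psi_x)\le \sum_x p_x S_A(\psi_x)\le S_A(\psi)=K_F(\psi)$.

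I do not anticipate a genuine obstacle. The only delicate point is the lower bound $K_F(\psi)\ge S_A(\psi)$, and it is precisely here that purity is essential, through the extreme-point argument; for a mixed input this step collapses and $K_F$ can no longer be identified with $E_F$, which is consistent with the open status of on-average monotonicity of $K_F$ under measurements noted later in the paper.
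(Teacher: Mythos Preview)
Your proposal is correct and follows essentially the same route as the paper: both arguments establish $K_F(\psi)=E_F(\psi)$ for pure inputs via the extreme-point observation that any ensemble decomposition of a pure state is trivial, then invoke LOCC monotonicity of $E_F$ together with the general bound $K_F\le E_F$ to obtain the chain $K_F(\psi)=E_F(\psi)\ge E_F(\sigma)\ge K_F(\sigma)$ for any LOCC output $\sigma$. Your justification of the lower bound $K_F(\psi)\ge S_A(\psi)$ is slightly more explicit than the paper's (which simply asserts $K_F=K_D=E_F$ ``by definition''), but the substance is identical.
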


\begin{proof}
We note that, for a pure state, $K_F=K_D=E_F$ by definition. Since $E_F$ is an LOCC monotone, we have 
\begin{equation}
    E_F(\psi_{AB})  \geq \sum_k p_k E_F(\sigma_{AB}^k),
\end{equation}
where $\psi_{AB}$ is a pure state, $k$ is an outcome of the LOCC operation, $p_k$ is the probability of outcome $k$, and $\sigma_{AB}^k$ is a state given that outcome $k$ was obtained. Combining the above statements, we obtain
\begin{align}
    K_F(\psi_{AB}) & = E_F(\psi_{AB}) \label{eq:equal_for_pure}\\
    & \geq 
    \sum_k p_k E_F(\sigma_{AB}^k) \\
    & \geq
    \sum_k p_k K_F(\sigma_{AB}^k).
\end{align}
Here the equality comes from the fact that $|\psi\>$ is pure; hence it cannot be expressed as a mixture of a non-pure members ensemble such as non-pure generalized private states. The first inequality follows from the monotonicity of~$E_F$ under LOCC. The second inequality follows from the definition, i.e., $K_F\leq E_F$ for all states. 
\end{proof}

We note, however, that in general, $E_F$ can be much larger than $K_F$ for a private
state. To give an example, consider the state provided in~\cite{smallkey} and recalled explicitly below

\begin{equation}
\gamma_{2,d_s} = 
\frac{1}{2}\left[\begin{array}{cccc}
\sigma & 0 & 0 & \frac{1}{d_s}U^T \\
0 & 0 & 0 & 0 \\
0 & 0 & 0 & 0 \\
\frac{1}{d_s}U^* & 0 & 0 & \sigma 
\end{array}\right],
\label{eq:X_form}
\end{equation}
where
\begin{align}
    U & =\sum_{i,j=0}^{d_s-1}u_{ij}|ii\>\!\<jj|,\\
    \sigma & = \frac{1}{d_s}\sum_{i=0}^{d_s-1} |ii\>\!\<ii|,
\end{align}
with the matrix $\sum_{ij} u_{ij} |i\>\!\<j|$ a unitary transformation. 
In~\cite{locking}, it was shown that the above state satisfies
\begin{equation}
    E_C(\gamma_{2,d_s}) \geq \frac{\log_2 d_s}{2},
    \label{eq:E_Cgreater}
\end{equation}
while (as this state is a strictly irreducible private state), from Corollary~\ref{prop:pbitbounds}, its asymptotic key cost equals $\log_2 d_k =1$. Hence, whenever $d_s > 4$, the
entanglement cost exceeds the key cost of this state.
Now, since $E_F\geq E_F^{\infty}=E_C$, we have
that for the above state $E_F(\gamma_{2,d_s})>K_F(\gamma_{2,d_s})$ for $d_s \geq 4$.
Due to this strict inequality holding, we cannot prove monotonicity of $K_F$ for mixed states by the same approach given in the proof of Observation~\ref{pbs:KF-LOCC-mono}.

We do not provide a general proof of monotonicity under LOCC operations. However, here we present partial results. In~\cite{Horodecki2009}, it is stated that any convex-roof type measure, to which $K_F$ belongs,
is monotonic under LOCC if and only if it is monotonic under local measurements on average. However for the sake of completeness, we show here explicitly the proofs of monotonicity of $K_F$ under all other elementary operations, including adding a local auxiliary state, local partial trace, and random unitary channels.
\begin{observation}
    If $K_F$ is non-increasing under the action of some operation $\Lambda\in \operatorname{LOCC}$ on strictly irreducible generalized private states, then it is non-increasing under the action of $\Lambda$.
\end{observation}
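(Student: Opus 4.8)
The plan is to use the standard convex-roof extension of monotonicity: I would reduce the claim for an arbitrary state $\rho$ to the assumed monotonicity on the elementary building blocks of $K_F$, namely the strictly irreducible generalized private states, by pushing $K_F$ through a decomposition of $\rho$ using the convexity established in Lemma~\ref{lem:kf_is_conv}. This turns a statement about arbitrary inputs into a statement about the $\operatorname{GSIR}$ inputs that the hypothesis already controls.

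First I would fix an arbitrary bipartite state $\rho$ and an arbitrary $\eta>0$, and invoke Definition~\ref{def:formation} (with the finiteness guaranteed by Lemma~\ref{lem:finite}) to select a near-optimal decomposition $\rho=\sum_{k} p_k \gamma(\psi_k)$ into states $\gamma(\psi_k)\in\operatorname{GSIR}$ satisfying $\sum_k p_k S_{A_K}[\gamma(\psi_k)]\le K_F(\rho)+\eta$. Since the channel $\Lambda$ is linear, $\Lambda(\rho)=\sum_k p_k \Lambda(\gamma(\psi_k))$, so convexity of $K_F$ gives $K_F(\Lambda(\rho))\le \sum_k p_k K_F(\Lambda(\gamma(\psi_k)))$.

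Next I would apply the hypothesis to each summand: because every $\gamma(\psi_k)$ lies in $\operatorname{GSIR}$, the assumed monotonicity yields $K_F(\Lambda(\gamma(\psi_k)))\le K_F(\gamma(\psi_k))$. Combining this with the trivial single-element-decomposition bound $K_F(\gamma(\psi_k))\le S_{A_K}[\gamma(\psi_k)]$ produces the chain
\begin{align}
K_F(\Lambda(\rho)) &\le \sum_k p_k K_F(\Lambda(\gamma(\psi_k))) \le \sum_k p_k K_F(\gamma(\psi_k)) \nonumber\\
&\le \sum_k p_k S_{A_K}[\gamma(\psi_k)] \le K_F(\rho)+\eta .
\end{align}
Letting $\eta\to 0$ then gives $K_F(\Lambda(\rho))\le K_F(\rho)$ for every $\rho$, which is the assertion.

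The one point that needs care---and where the convexity lemma does the real work---is that applying $\Lambda$ to an element $\gamma(\psi_k)$ generally destroys the generalized-private-state structure, so $\Lambda(\gamma(\psi_k))$ need not itself belong to $\operatorname{GSIR}$; consequently the hypothesis cannot be applied directly to any decomposition of $\Lambda(\rho)$. Convexity is precisely what lets us upper bound $K_F$ on the mixture $\sum_k p_k \Lambda(\gamma(\psi_k))$ without exhibiting an explicit $\operatorname{GSIR}$ decomposition of it, thereby reducing everything to the controlled inputs $\gamma(\psi_k)$. A secondary technicality is that the infimum defining $K_F$ is not known to be attained, which is why I would work with an $\eta$-optimal decomposition and take $\eta\to 0$ at the end rather than with an exact minimizer.
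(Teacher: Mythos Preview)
Your proof is correct and follows essentially the same approach as the paper: pick an $\eta$-near-optimal $\operatorname{GSIR}$ decomposition of $\rho$, apply the hypothesis termwise, and use convexity of $K_F$ (Lemma~\ref{lem:kf_is_conv}) together with linearity of $\Lambda$ to pass from the mixture back to $\Lambda(\rho)$, then let $\eta\to 0$. The only cosmetic difference is that you make the step $K_F(\gamma(\psi_k))\le S_{A_K}[\gamma(\psi_k)]$ explicit, whereas the paper absorbs it directly into the inequality $\sum_i p_i K_F(\gamma_i)\le K_F(\rho)+\eta$.
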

    
\begin{proof}
Let $\rho$ be an arbitrary bipartite state. Since $K_F$ is equal to the infimum over ensembles of $\rho$ into strictly irreducible generalized private states, then for all $\eta > 0 $,  there exists an ensemble $\{(p_i,\gamma_i)\}_i$ such that
\begin{equation}
    \eta + K_F(\rho) \geq \sum_i p_i K_F(\gamma_i).
\end{equation}
Let $\Lambda\in \operatorname{LOCC}$. Suppose that $K_F(\gamma) \geq K_F(\Lambda(\gamma))$ for each strictly irreducible generalized private state $\gamma$. Then the following chain of inequalities holds
    \begin{align}
        \eta + K_F(\rho) & \geq \sum_i p_i K_F(\gamma_i) \\
        & \geq \sum_i p_i K_F(\Lambda(\gamma_i)) \\
        & \geq 
         K_F\!\left(\sum_i p_i \Lambda(\gamma_i)\right) \\
         & = K_F\!\left(\Lambda\!\left(\sum_i p_i \gamma_i\right)\right) \\
         & = K_F(\Lambda(\rho)).
    \end{align}
    For the second inequality, we applied the assumption that $K_F$ does not increase under the action of LOCC on generalized private states.
    Since  the inequality $\eta + K_F(\rho) \geq K_F(\Lambda(\rho))$ holds for all $\eta > 0$, then we conclude that $K_F(\rho) \geq K_F(\Lambda(\rho))$.
\end{proof}

\begin{observation}\label{obs:invariance-unitary}
    $K_F$ is invariant under local unitary transformations.
\end{observation}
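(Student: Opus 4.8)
The plan is to show that $K_F$ is invariant under the action of a local unitary $(U_A \otimes U_B)$, and I would do this by proving the two inequalities $K_F(\rho) \geq K_F((U_A\otimes U_B)\rho(U_A\otimes U_B)^\dagger)$ and its reverse. Since a local unitary is a reversible LOCC operation (its inverse $U_A^\dagger \otimes U_B^\dagger$ is also a local unitary, hence LOCC), establishing one direction automatically yields the other by applying the same inequality to the transformed state and the inverse unitary. So the essential task reduces to a single monotonicity-style inequality under the specific class of local unitaries.

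\medskip

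First I would fix an arbitrary bipartite state $\rho_{AB}$ and let $\{(p_k, \gamma(\psi_k))\}_{k}$ be an ensemble achieving the infimum in Definition~\ref{def:formation} up to $\eta > 0$, so that $\sum_k p_k \gamma(\psi_k) = \rho_{AB}$ with each $\gamma(\psi_k) \in \operatorname{GSIR}$ and $\sum_k p_k S_{A_K}[\gamma(\psi_k)] \leq K_F(\rho) + \eta$. The key structural observation is that applying $U_A \otimes U_B$ to each member of this decomposition yields a valid decomposition of the transformed state, that is,
\begin{equation}
(U_A\otimes U_B)\,\rho_{AB}\,(U_A\otimes U_B)^\dagger = \sum_k p_k\, (U_A\otimes U_B)\,\gamma(\psi_k)\,(U_A\otimes U_B)^\dagger,
\end{equation}
and crucially each transformed member $(U_A\otimes U_B)\gamma(\psi_k)(U_A\otimes U_B)^\dagger$ remains a strictly irreducible generalized private state. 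This last point is where I would be careful: by Definition~\ref{def:gen_sir_states}, membership in $\operatorname{GSIR}$ is defined up to local unitary equivalence, so composing the existing local unitary $W_{AB}$ that brings $\gamma(\psi_k)$ to its strictly irreducible form with $U_A^\dagger \otimes U_B^\dagger$ still gives a local unitary that witnesses the GSIR structure; the separability condition $U_i\rho_{A_SB_S}U_i^\dagger \in \SEP(A_S:B_S)$ is untouched because it is a statement about the shield systems, which the outer local unitary acts on only by a further local rotation preserving separability.

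\medskip

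The final step is to observe that the entropic quantity being averaged is itself invariant under this operation: the value $S_{A_K}[\gamma(\psi_k)]$ depends only on the reduced state of the key system $A_K$, and since local unitaries on $A$ act by conjugation that commutes with the partial trace structure defining the key part, we have $S_{A_K}[(U_A\otimes U_B)\gamma(\psi_k)(U_A\otimes U_B)^\dagger] = S_{A_K}[\gamma(\psi_k)]$. Therefore the transformed ensemble achieves the same average key entropy, and by the definition of $K_F$ as an infimum,
\begin{equation}
K_F\!\left((U_A\otimes U_B)\,\rho\,(U_A\otimes U_B)^\dagger\right) \leq \sum_k p_k S_{A_K}[\gamma(\psi_k)] \leq K_F(\rho) + \eta.
\end{equation}
Letting $\eta \to 0$ gives one inequality, and applying it to $(U_A\otimes U_B)\rho(U_A\otimes U_B)^\dagger$ with the inverse unitaries gives the reverse, establishing equality. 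I expect the main obstacle to be the bookkeeping in the previous paragraph: one must verify that the GSIR property genuinely survives composition with an arbitrary local unitary, which hinges on the fact that the definition already quotients by local-unitary equivalence and that the separability of the conditional shield states is a local-unitary invariant. Once that is granted, the proof is a short and clean application of the infimum definition.
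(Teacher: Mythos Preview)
Your proposal is correct and follows essentially the same approach as the paper's proof: both take an $\eta$-optimal decomposition, push the local unitary through to each ensemble member, use that the GSIR definition is closed under local unitaries (via composing the witness $W_{AB}$ with $U_A^\dagger\otimes U_B^\dagger$), observe that the resulting canonical form---and hence $S_{A_K}$---is unchanged, and then invoke reversibility for the other inequality. The paper phrases the entropy invariance step slightly differently, routing it through $S_{A_K}(W\gamma W^\dagger)=K_D(\gamma)$ and the local-unitary invariance of $K_D$, whereas you argue it directly from the fact that the new witness brings the transformed state to the same canonical form; both justifications are valid and amount to the same observation.
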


\begin{proof}
    Let us fix $\kappa>0$, and let $\{(p_i,\gamma(\psi_i))\}_i$ be an arbitrary ensemble decomposition of $\rho_{AB}$ in terms of generalized private states, such that 
    \begin{equation}
    \sum_i p_i S_{A_K}(\gamma(\psi_i)) = K_F(\rho)+\kappa.
    \label{eq:nearlyKF}
    \end{equation}
    Let us note that Definition~\ref{def:gen_sir_states} for strict irreducibility of each $\gamma(\psi_i)$ implies that for each $i$ there exists a unitary transformation of the form
    \begin{align}
    W_{AB}^{(i)}\coloneqq W^{(i)}_{A\rightarrow A_KA_S}\otimes W^{(i)}_{B\rightarrow B_KB_S},  
    \end{align}
    such that 
    \begin{multline}
    W_{AB}^{(i)} \gamma(\psi_i)_{AB} {W_{AB}^{(i)}}^{\dagger} =\\
    \sum_{l,m} \sqrt{\mu_l^{(i)}\mu_m^{(i)}}|e_l^{(i)} f_l^{(i)}\>\!\<e_m^{(i)}f_m^{(i)}|_{A_KB_K}\\
    \otimes U_l^{(i)}\rho_{A_SB_S}^{(i)} {U_m^{(i)}}^{\dag},
    \label{eq:local-unitary-strict-irr}
    \end{multline}
    where $|\psi_i\> = \sum_l \mu_l^{(i)}|e_l^{(i)}\>|f_l^{(i)}\>$, the operator $U_l^{(i)}$ is a unitary transformation for each $l$ and $i$, and $\rho_{A_SB_S}^{(i)}$ is a quantum state for each $i$.
    Moreover
    \begin{equation}
    S_{A_k}(W_{AB}^{(i)} \gamma(\psi_i)
    {W_{AB}^{(i)}}^{\dagger}) =K_D(\gamma(\psi_i)).
    \label{eq:equalsKey}
    \end{equation}
    
    We will now show that the 
    ensemble
    \begin{equation}
    \left\{(p_i,  (U_A\otimes U_B)\gamma(\psi_i) (U^{\dagger}_A\otimes U_B^{\dagger}))\right\}_i    
    \end{equation}
    is a valid
    decomposition of $\rho'\coloneqq  (U_A\otimes U_B)\rho_{AB} (U_A^{\dagger}\otimes U_B^{\dagger})$
    into strictly irreducible generalized private states,
    achieving the rate in~\eqref{eq:nearlyKF}.
    This will show that $K_F(\rho)\geq K_F(\rho')$.
    To show this, let us fix $i$ arbitrarily and note that, by linearity,
    $\left\{(p_i,  (U_A\otimes U_B)\gamma(\psi_i) (U^{\dagger}_A\otimes U_B^{\dagger}))\right\}_i$ is 
    an ensemble decomposition of $\rho'$. We define
    $V_i \coloneqq  W_{AB}^{(i)} \circ (U_A^{\dagger}\otimes U_B^{\dagger})$ and observe that
    \begin{multline}
    V_i \rho' V_i^{\dagger} = 
    \sum_{l,m} \sqrt{\mu^{(i)}_l\mu^{(i)}_m}|e_l^{(i)} f_l^{(i)}\>\!\<e_m^{(i)}f_m^{(i)}|_{A_KB_K}\\
    \otimes U^{(i)}_l\rho^{(i)}_{A_SB_S} (U^{(i)}_m)^{\dag}.
    \end{multline}
    Moreover
    \begin{multline}
    S_{A_K}(V_i \rho' V_i^{\dagger}) 
    =K_D(\gamma(\psi_i))\\
    =K_D((U_{A}\otimes U_{B})\gamma(\psi_i) (U_{A}^\dagger\otimes U_{B}^\dagger))=K_D(\rho').
    \end{multline}
    The first equality follows from~\eqref{eq:equalsKey} and the second from the fact that distillable key is invariant under local unitary transformations. 
    Hence for every $\kappa$ and ensemble decomposition of $\rho$
    achieving $K_F(\rho)+\kappa$, we constructed
    an ensemble decomposition  of $\rho'$ achieving the same value.
    Since $\kappa>0$ is arbitrary, we conclude that
    $K_F(\rho)\geq K_F(\rho')$. Finally, since $\rho=(U_A^{\dagger}\otimes U_B^{\dagger})(\rho')(U_A\otimes U_B)$, repeating the above argument with the roles of $\rho$ and $\rho'$ exchanged, we obtain $K_F(\rho')\geq K_F(\rho)$, which implies the desired equality.
    \color{black}
\end{proof}

\begin{observation}
    $K_F$ does not increase under addition of local ancilla, i.e. $K_F(\rho\otimes \rho') \leq K_F(\rho))$, where $\rho'$ is a local state of either Alice or Bob.
\end{observation}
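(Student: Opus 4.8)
The plan is to lift any near-optimal ensemble decomposition of $\rho$ into strictly irreducible generalized private states to a decomposition of $\rho\otimes\rho'$ of the same cost, simply by adjoining $\rho'$ to every member of the ensemble and treating it as part of the shield. By the symmetry of the two honest parties we may assume without loss of generality that $\rho'$ is held by Alice, and we attach it to Alice's shield system. First I would fix $\eta>0$ and, by Definition~\ref{def:formation}, pick an ensemble $\{(p_k,\gamma(\psi_k))\}_k$ with $\gamma(\psi_k)\in\operatorname{GSIR}$, $\sum_k p_k\gamma(\psi_k)=\rho$, and $\sum_k p_k S_{A_K}(\gamma(\psi_k))\leq K_F(\rho)+\eta$.

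The key step is to verify that each $\gamma(\psi_k)\otimes\rho'$ is again a strictly irreducible generalized private state. Writing $\gamma(\psi_k)$ in the form of Definition~\ref{def:gen_sir_states}, with shield $A_SB_S$, twist unitaries $U_i^{(k)}$, and shield state $\rho^{(k)}_{A_SB_S}$, I would enlarge the shield to $A_S A'' B_S$ (where $A''$ carries $\rho'$), replace the twist unitaries by $U_i^{(k)}\otimes\id_{A''}$, and take the new shield state to be $\rho^{(k)}_{A_SB_S}\otimes\rho'_{A''}$. The resulting conditional shield states are
\begin{equation}
\left(U_i^{(k)}\rho^{(k)}_{A_SB_S}(U_i^{(k)})^\dagger\right)\otimes\rho'_{A''}.
\end{equation}
Since $U_i^{(k)}\rho^{(k)}_{A_SB_S}(U_i^{(k)})^\dagger\in\SEP(A_S\! :\! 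B_S)$ by strict irreducibility of $\gamma(\psi_k)$, and tensoring a separable state with a state held entirely on Alice's side preserves separability across the cut $(A_S A'')\! :\! B_S$, each conditional shield state remains separable, so $\gamma(\psi_k)\otimes\rho'\in\operatorname{GSIR}$. Moreover, the key systems $A_KB_K$ are untouched, so the reduced state on $A_K$ is unchanged and $S_{A_K}(\gamma(\psi_k)\otimes\rho')=S_{A_K}(\gamma(\psi_k))$.

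Finally I would assemble the conclusion. Since $\sum_k p_k(\gamma(\psi_k)\otimes\rho')=\rho\otimes\rho'$, the ensemble $\{(p_k,\gamma(\psi_k)\otimes\rho')\}_k$ is a valid $\operatorname{GSIR}$ decomposition of $\rho\otimes\rho'$, so by the infimum in Definition~\ref{def:formation},
\begin{align}
K_F(\rho\otimes\rho') &\leq \sum_k p_k S_{A_K}(\gamma(\psi_k)\otimes\rho') \nonumber\\
&= \sum_k p_k S_{A_K}(\gamma(\psi_k)) \leq K_F(\rho)+\eta.
\end{align}
Letting $\eta\to 0$ yields $K_F(\rho\otimes\rho')\leq K_F(\rho)$. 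The case where $\rho'$ belongs to Bob is identical after exchanging the roles of the two parties.

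The only genuinely substantive point, the main obstacle if any, is the stability of strict irreducibility under this construction: one must confirm that adjoining the local ancilla to the shield cannot spoil the separability of the conditional shield states. This is precisely where the hypothesis that $\rho'$ is a \emph{local} state is essential — tensoring with a state straddling the $A_S\! :\! B_S$ cut could create entanglement and break membership in $\operatorname{GSIR}$, whereas a one-sided ancilla cannot. Everything else is routine: convex combinations of the tensored ensemble reproduce $\rho\otimes\rho'$, and the key marginal, hence $S_{A_K}$, is manifestly invariant under the addition.
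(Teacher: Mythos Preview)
Your proof is correct and follows essentially the same approach as the paper: adjoin the local ancilla to each member of a (near-)optimal $\operatorname{GSIR}$ ensemble by absorbing it into the shield with trivially extended twist unitaries, observe that strict irreducibility is preserved and $S_{A_K}$ is unchanged, and conclude via the infimum. If anything, your version is slightly more careful than the paper's, making the $\eta$-approximation and the separability check of the conditional shield states explicit.
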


\begin{proof}
    Let $\{(p_i, \gamma(\psi_i))\}_i$ be an arbitrary ensemble decomposition of $\rho$ into strictly irreducible generalized private states. Then the ensemble 
    \begin{equation}
       \{(p_i, \gamma(\psi_i)\otimes \rho')\}_i
    \end{equation}
    is a valid ensemble decomposition of $\rho\otimes \rho'$ (not necessarily optimal). Indeed, local ancilla can be treated as a part of a shield system where each unitary operator that acts on the shield is extended by identity, written explicitly as follows:
    \begin{multline}
    \widetilde{\gamma}(\psi_i) = \gamma(\psi_i)\otimes \rho' =\\
    \sum_{l,m}\sqrt{\lambda^{(i)}_l\lambda^{(i)}_m}|e_l^{(i)}f_l^{(i)}\>\!\<e_m^{(i)}f_m^{(i)}|\otimes\\ \left(U^{(i)}_l\otimes \id\right) \left(\sigma^{(i)}\otimes \rho'\right) \left((U^{(i)}_m)^{\dagger} \otimes \id\right).
    \end{multline}
    From the above formula, it is clear that 
    \begin{align}
    K_F(\rho) & = \sum_i p_i K_F(\gamma(\psi_i)) \\
    & = \sum_i p_i K_F(\widetilde{\gamma}(\psi_i)) \\
    & \geq K_F(\rho\otimes \rho').            
    \end{align}
    We have used the fact that extending the shield of the strictly irreducible generalized private state $\gamma(\psi_i)$  by a local state results in a strictly irreducible generalized private state $\widetilde{\gamma}(\psi_i)$.
\end{proof}

\begin{observation}
    $K_F$ is non-increasing under the action of a local random unitary channel, i.e.
    \begin{equation*}
        K_F\!\left(\sum_x p_x (U_x\otimes \id) \gamma_{a\widetilde{A}B}(U_x\otimes \id)^{\dag} \right) \leq K_F(\gamma_{a\widetilde{A}B}).
    \end{equation*}
    where the channel $\sum_x U_x(.)U_x^\dagger$ acts on subsystem $a$.
\end{observation}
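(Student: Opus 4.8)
The plan is to prove monotonicity of $K_F$ under a local random unitary channel $\Phi(\cdot) = \sum_x p_x (U_x \otimes \id)(\cdot)(U_x \otimes \id)^\dagger$ acting on subsystem $a$ by exhibiting a concrete decomposition of the output state into strictly irreducible generalized private states and bounding the average entropy. Since $K_F$ is defined by an infimum over GSIR decompositions, I first fix $\eta > 0$ and choose an ensemble $\{(q_i, \gamma(\psi_i))\}_i$ with $\sum_i q_i \gamma(\psi_i) = \gamma_{a\widetilde{A}B}$ and $\sum_i q_i S_{A_K}[\gamma(\psi_i)] \leq K_F(\gamma_{a\widetilde{A}B}) + \eta$, using the observation from earlier in the paper that it suffices to verify monotonicity on GSIR states themselves (so in fact I may take the input to be a single $\gamma(\psi)$).

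The central step is to check that applying $(U_x \otimes \id)$ to a strictly irreducible generalized private state again yields a strictly irreducible generalized private state. Here I would invoke Observation~\ref{obs:invariance-unitary}: each $U_x$ is a local unitary on Alice's side, and the GSIR structure from Definition~\ref{def:gen_sir_states} is preserved under local unitaries, with the separability condition $U_l \rho_{A_SB_S} U_l^\dagger \in \SEP(A_S\!:\!B_S)$ unaffected because a local unitary on the key-plus-shield systems can be absorbed into the defining unitary $W_{AB}$. Consequently $(U_x \otimes \id)\gamma(\psi)(U_x \otimes \id)^\dagger \in \operatorname{GSIR}$ with $S_{A_K}$ value unchanged, since $K_F(\gamma(\psi)) = S_{A_K}[\gamma(\psi)] = K_D(\gamma(\psi))$ is itself local-unitary invariant. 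Thus the collection $\{(p_x q_i,\, (U_x\otimes\id)\gamma(\psi_i)(U_x\otimes\id)^\dagger)\}_{x,i}$ is a valid GSIR decomposition of $\Phi(\gamma_{a\widetilde{A}B})$.

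From here the conclusion follows by the definition of $K_F$ as an infimum together with Observation~\ref{obs:invariance-unitary}, writing
\begin{align}
K_F(\Phi(\gamma_{a\widetilde{A}B})) &\leq \sum_{x,i} p_x q_i\, S_{A_K}[(U_x\otimes\id)\gamma(\psi_i)(U_x\otimes\id)^\dagger]\nonumber\\
&= \sum_{x,i} p_x q_i\, S_{A_K}[\gamma(\psi_i)] = \sum_i q_i\, S_{A_K}[\gamma(\psi_i)]\nonumber\\
&\leq K_F(\gamma_{a\widetilde{A}B}) + \eta,
\end{align}
where the middle equality uses $\sum_x p_x = 1$. Since $\eta > 0$ is arbitrary, letting $\eta \to 0$ gives the claim. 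The main obstacle I anticipate is the bookkeeping in the central step: one must carefully confirm that conjugating by $U_x$ does not disturb the separability of the conditional shield states and that the distinguished key measurement basis transforms covariantly, so that the $S_{A_K}$ value genuinely coincides with the unconjugated one rather than merely being bounded by it. This is exactly where Observation~\ref{obs:invariance-unitary} does the heavy lifting, reducing what could be a delicate structural verification to an already-established local-unitary invariance.
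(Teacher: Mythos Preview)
Your proof is correct and follows essentially the same approach as the paper. The paper's argument is a two-line application of convexity of $K_F$ (Lemma~\ref{lem:kf_is_conv}) followed by local-unitary invariance (Observation~\ref{obs:invariance-unitary}); your version simply unpacks the convexity step by working directly with a near-optimal GSIR decomposition of the input and pushing the unitaries through, which is exactly the content of those two lemmas combined. The only cosmetic difference is that the paper keeps convexity and invariance as separate modular ingredients, whereas you fuse them into a single explicit decomposition argument.
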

\begin{proof}
    \begin{align}
        &K_F\!\left(\sum_x p_x\left(U_x \otimes \id \right) \gamma_{a\widetilde{A}B} \left(U_x \otimes \id \right)^\dag\right) \notag  \\
        & \leq\sum_x p_x K_F( \left(U_x \otimes \id \right) \gamma_{a\widetilde{A}B} \left(U_x \otimes \id \right)^\dag) \\
        &\leq  \sum_x p_x K_F(\gamma_{a\widetilde{A}B}) = K_F(\gamma_{a\widetilde{A}B}).
    \end{align}
    The first inequality holds by Lemma \ref{lem:kf_is_conv}, i.e., convexity of $K_F$, and the second by Observation~\ref{obs:invariance-unitary}.
\end{proof}
As a consequence of the above observation we notice that $K_F$ is also non-increasing under the action of local von Neumann measurements and local partial traces. Indeed, the action of a local von Neumann measurement can written as
\begin{equation*}
    \gamma_{a\widetilde{A}B} \mapsto \frac{1}{d_a} \sum_\ell\left(V_\ell \otimes \id \right) \gamma_{a\widetilde{A}B} \left(V_\ell \otimes \id \right)^\dag,
\end{equation*}
where $V_\ell = \sum_j \exp{2\pi i \ell j /d}|j\>\!\<j|$ and local partial trace can be performed as a composition of three operations for which $K_F$ is non-increasing: local von Neumann measurement, quantum Fourier transform, and local von Neumann measurement.

We note here that, from the above, it follows that~$K_F$ is monotonic under arbitrary local quantum operations. Indeed, we showed that $K_F$ is monotonic under basic operations from which any quantum operation can be composed (adding auxiliary system, unitary and tracing out). What remains a challenge is to show that $K_F$ does not increase under classical communication. This fact would imply monotonicity under LOCC in the sense given in the preliminary section, i.e., $K_F(\rho)\geq K_F(\Lambda(\rho))$ for $\Lambda \in \mathrm{LOCC}$. What is less clear is if $K_F$ is collectively monotonic~\cite{Horodecki2009,Eisert2022}, that is, if $K_F(\rho)\geq \sum_i p_i K_F(\rho_i)$ where $\rho_i$ is a post-measurement state (after certain LOCC) and $p_i$ is the corresponding probability of its occurrence.
 Let us stress here that the approach used in entanglement theory does not apply. In particular, it is not clear to us how to follow the general approach of~\cite{Vidal_2000} for establishing $K_F$ to be an entanglement measure  because the amount of key in the private state does not seem to be expressible as a direct entropic function of the reduced state of the $AB$ systems (as opposed to subsystems $A_KB_{K}$, which can be different for each private state). We also note that
$K_F$ is not a so-called ``mixed convex roof,'' as introduced in~\cite{Yang_2009}. Indeed, in the latter reference, the mixed convex roof is defined as the infimum over ensembles of {\it all} mixed states that apply to a given state. Instead, $K_F$ is a function of splitting only into special (in general mixed) states, the generalized private states.

Another important property that we would wish from $K_F$ is asymptotic continuity. However, all known methods from entanglement theory do not immediately apply to show it. To give one example, the method of~\cite{W16} for proving asymptotic continuity cannot be directly used. This is because a measurement on a purifying system of $\rho$ that splits it into an ensemble of generalized private states does not necessarily generate an ensemble of generalized private states when applied to the purifying system of a state $\rho'$ that is close to $\rho$ in trace distance. Indeed, this is contrary to the fact used in the proof of continuity of $E_F$~\cite{W16}, according to which any rank-one POVM applied to the purifying system results in an ensemble with pure members, no matter to which state ($\rho$ or $\rho'$) it is applied. 

\section{Privacy dilution}\setcurrentname{Privacy dilution}
\label{sec:dilution}

This section is devoted to the description of how the privacy of a private state $\gamma(\Phi^+)$ can be diluted into an $\varepsilon$-approximation of $n$ copies $\gamma(\psi)^{\otimes n}$ of a generalized private state, with $|\psi\>=\sum_{a\in {\cal A}} \lambda_a |e_a\>|f_a\>$. The protocol is such that the error $\varepsilon$ in dilution approaches zero with increasing $n$.

Before providing details of the protocol, we describe two essential subroutines:  Permutation Algorithm (PA) and
Phase Error Correction (PEC). These procedures take care of the phase errors that occur during the execution of the first six steps of the Dilution Protocol. These two
procedures use some auxiliary systems. We need these systems
to return at the end of the protocol to their original state. Otherwise, they may leak some information about the key. We therefore also describe the inverse of the PA protocol (IPA), which takes care of this issue.

Since this section is rather technical, below we provide a table of notations used, with the goal of helping the reader in understanding various proofs and constructions.
\begin{table}[h!]
\centering
\begin{tabular}{ | p{1.5cm} | p{4.5cm} | }
 \hline
 ${\cal A}^n$ & a set of all $n$ element sequences of symbols from alphabet ${\cal A}$\\
 \hline
 ${\bf s}, \widehat{\bf s}$  & a sequence of symbols \\ 
 \hline
 ${\bf s}[i]$ & $i$-th element of a sequence~${\bf s}$ \\  
 \hline
 $\perp_i$ & a symbol $\perp$ at $i$-th position \\
 \hline
 $|{\bf s}|$ & length of a sequence ${\bf s}$ \\
 \hline
$|a({\bf s})|$ & number of occurrences of symbol $a$ in  sequence ${\bf s}$\\
 \hline
 $X[1\ldots n]$ & a quantum system consisting of $n$ subsystems \\
 \hline
$X[i]$ & $i$-th subsystem of a composite quantum system \\
 \hline
$X[i:k]$ & subsystems of system $X$ of indices from $i$ to $k$ \\
 \hline
$\left|\perp\right\rangle_i$ & a quantum state $\left|\perp\right\rangle$ at $i$-th subsystem of a composite quantum system\\
 \hline
 $X_{{\bf s}_k}$ & $U_{{\bf s}_k}\rho^{\otimes |{\bf s}_k|} U_{{\bf s}_k}^{\dag}$ \\
 \hline
\end{tabular}
\caption{Notations used in this section.}
\end{table}

\subsection{Permutation Algorithm}\setcurrentname{Permutation Algorithm}
\label{sec:alg}

The Permutation Algorithm makes use of two permutations $\pi_{\mathrm{key}}({\bf s},\hat{\bf s})$ and $\pi_{\text{shield}}({\bf s},\hat{\bf s})$ that allow Alice and Bob to rearrange their subsystems into a desired order. 
The idea behind these permutations is as follows. By constructing a private state, given that the key takes a value~$\bf s$, the shield should be rotated by the twisting unitary $U_{\bf s}$. More precisely, the shield system should be in a state $U_{\bf s}\rho^{\otimes n}U_{\bf s}^{\dag}$, where
\begin{align}
    {\bf s} & = ({\bf s}[1],\ldots ,{\bf s}[n]) \in {\cal A}^n, \\
    U_{\bf s} & = U_{{\bf s}[1]}\otimes U_{{\bf s}[2]}\otimes \cdots  \otimes U_{{\bf s}[n]}. 
\end{align}
However, when a phase error occurs (as will take place in our dilution protocol described in the next section), then the unitary $U_{\widehat{\bf s}}$ occurs instead.
If $\bf s$ and $\widehat{\bf s}$ were of the same type (with the same number of occurrences of symbols in it), then it would be sufficient to
permute the systems of the shield, i.e., transform $\widehat{\bf s}$ into $\bf s$ by a permutation, thereby correcting the error. However, $\widehat{\bf s}$ may be of a different type. In what follows, both sequences under consideration are from the same $\delta$-typical set $T_n^\delta$ \cite[Section~2.4]{EK11}; that is, each symbol from the alphabet ${\cal A}$ has an empirical frequency that satisfies
\begin{equation}
\abs{\lambda_a -\frac{|a({\bf s})|}{n}}\leq \delta\lambda_a, 
\end{equation}
where $|a({\bf s})|$ is the number of occurrences of $a$ in string ${\bf s}$ and $\lambda_a$ is the probability that symbol $a$ occurs. 
Then for all symbols in ${\cal A}$ appearing in ${\bf s}$ such that their number of occurrences in $\widehat{{\bf s}}$ is larger, there is no problem; we can assign some entries of $\widehat{{\bf s}}$ to correspond with the same values as in $\bf s$ in a proper place. However, for all symbols in $\bf s$ that occur a larger number of times in ${\bf s}$ than in $\widehat{\bf s}$, there is a problem that cannot be resolved by a local permutation. Later on, we will see that $\widehat{\bf s} =\beta_x({\bf s})$ where  $\beta_x: T_n^\delta \to T_n^\delta$ (defined in Lemma~\ref{lem:beta-map}) is a bijection.

The role of the permutation $\pi_{\text{shield}}({\bf s},\hat{\bf s})$ is to correct most of the errors by using the fact that $\bf s$ and $\widehat{\bf s}$ are close to being related to each other by a permutation. The permutation $\pi_{\text{shield}}({\bf s},\hat{\bf s})$ also swaps aside to systems $T_{A_S}T_{B_S}$ all the shielding systems for which the phase error cannot be corrected by a local permutation. The number of such errors is bounded from above by $L_{\max}\coloneqq  \lceil 2\delta n\rceil$. These systems will be treated later by teleportation, local correction on one side, and teleportation back with a small amount of pure form of key, i.e., the singlet state.

In spite of the fact that the permutation $\pi_{\mathrm{key}}({\bf s},\hat{\bf s})$ is constructed in the same way as $\pi_{\mathrm{shield}}({\bf s},\hat{\bf s})$, its role is different. This operation, from a copy of the system $\widehat{\bf s}$ and the auxiliary system $S_{\operatorname{out}}$ in the state $\left|\perp\right\>^{\otimes (n+L_{\max})}$, creates a system $S_{\operatorname{out}}$ composed of two subsystems in the states ${\bf s}_{\operatorname{cor}}$ and ${\bf s}_{\operatorname{err}}$, respectively (and leaves $\widehat{\bf s}$ in the state $\left|\perp\right\>^{\otimes n}$). The states ${\bf s}_{\operatorname{cor}}$ and ${\bf s}_{\operatorname{err}}$ indicate how to correct the error
on systems $B_ST_{A_S}T_{B_S}$ when they are in Bob's possession after teleportation of $T_{A_S}$ by Alice. The state~${\bf s}_{\operatorname{err}}$ indicates which unitary transformations should be inverted as they are incorrect. The state ${\bf s}_{\operatorname{cor}}$ indicates that if the symbol $\perp$ occurs at position $p_k$, then the unitary~$U_a$ should be applied if the symbol $a$ is in position~$p_k$ in ${\bf s}$. When applied to auxiliary systems, the permutation $\pi_{\mathrm{key}}$ creates a state 
that is further a control system for the Phase Error Correction algorithm.
The register holding ${\bf s}_{\operatorname{err}}$ indicates how to undo the errors, while the
registers holding ${\bf s}$ and ${\bf s}_{\operatorname{cor}}$ possess information about the correct unitary transformation that needs to be applied on system $T_{A_S}T_{B_S}$.
Since the action of the above mentioned permutations is described on a pair $({\bf s},\widehat{\bf s})$ of sequences, then they are applied in a control-target manner (with control on the value of ${\bf s}$ and $\widehat{\bf s}$ ).

At the end of this section, we describe the Permutation Algorithm in more detail. For now, we first claim the existence of the desired permutations, which we argue by proving a number of combinatorial lemmas.

\begin{lemma}\label{lem:permutation-typical-seq}
    Suppose that ${\bf x}, \widehat{\bf x}\in T_n^\delta$. Then there exists an increasing sequence of indices $\{i_k\}_k$ selected from the set $\{1,\ldots, n\}$ such that ${\bf x}$ can be written as
    \begin{equation}
        {\bf x} = {\bf x}_1 {\bf x}[i_1] {\bf x}_2 {\bf x}[i_2] \cdots {\bf x}_{f({\bf x}, \widehat{\bf x})} {\bf x}[i_{f({\bf x}, \widehat{\bf x})}] {\bf x}_{f({\bf x}, \widehat{\bf x})+1},
    \end{equation}
    where ${\bf x}_k$ are blocks (not necessarily nonempty) of contiguous elements from the string ${\bf x}$, 
    \begin{equation}
        f({\bf x}, \widehat{\bf x}) \coloneqq  \sum_{a\in {\cal A}} \big||a({\bf x})| - |a(\widehat{\bf x})|\big| \leq \lceil 2\delta n\rceil ,
    \end{equation}
    and there exists a permutation $\pi$ such that
    \begin{equation}
        \pi(\widehat{\bf x}) = {\bf x}_1 \widehat{\bf x}[j_1] {\bf x}_2 \widehat{\bf x}[j_2] \cdots {\bf x}_{f({\bf x}, \widehat{\bf x})} \widehat{\bf x}[j_{f({\bf x}, \widehat{\bf x})}] {\bf x}_{f({\bf x}, \widehat{\bf x})+1},
    \end{equation}
    where $\{j_k\}_k$ is an increasing sequence of indices selected from the set $\{1,\ldots, n\}$.
\end{lemma}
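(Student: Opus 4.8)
\emph{Proof proposal.} The plan is to reduce the statement to a purely combinatorial matching between the two type histograms and then exhibit the permutation explicitly. Write $c_a \coloneqq |a({\bf x})|$ and $\widehat{c}_a \coloneqq |a(\widehat{\bf x})|$ for the number of occurrences of each symbol $a \in {\cal A}$, so that $\sum_{a} c_a = \sum_a \widehat{c}_a = n$ and $f({\bf x},\widehat{\bf x}) = \sum_a |c_a - \widehat{c}_a|$ is simply the $\ell_1$-distance between the two histograms. First I would dispose of the size bound: since both sequences lie in $T_n^\delta$, the triangle inequality gives $|c_a - \widehat{c}_a|/n \leq |c_a/n - \lambda_a| + |\lambda_a - \widehat{c}_a/n| \leq 2\delta\lambda_a$ for every $a$, and summing over $a$ together with $\sum_a \lambda_a = 1$ yields $f({\bf x},\widehat{\bf x}) \leq 2\delta n$; as $f$ is an integer this gives $f({\bf x},\widehat{\bf x}) \leq \lceil 2\delta n\rceil$.

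For the structural part, the idea is to permute $\widehat{\bf x}$ so that it reproduces ${\bf x}$ everywhere except on a small set of ``excess'' positions. Concretely, for each symbol $a$ with $c_a > \widehat{c}_a$ I would mark $c_a - \widehat{c}_a$ of the positions at which ${\bf x}$ equals $a$ as \emph{special}; collecting these over all $a$ produces an index set $I = \{i_1 < \cdots < i_m\}$ with $m = \sum_a \max\{c_a - \widehat{c}_a,0\} = \tfrac12 f({\bf x},\widehat{\bf x})$. Since $m \le f({\bf x},\widehat{\bf x})$, one may pad $I$ with arbitrary further positions to make its cardinality exactly $f({\bf x},\widehat{\bf x})$ as in the statement (the blocks adjacent to padded positions will simply match trivially). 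The blocks ${\bf x}_1,\dots,{\bf x}_{m+1}$ are then the maximal runs of ${\bf x}$ lying strictly between consecutive special positions, which realizes the claimed decomposition of ${\bf x}$.

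It remains to build the permutation $\pi$, and the key observation is that the required matching \emph{factorizes over the alphabet}, so no global Hall-type condition is needed. For each symbol $a$, the number of \emph{non}-special positions of ${\bf x}$ carrying $a$ is $c_a - \max\{c_a-\widehat{c}_a,0\} = \min\{c_a,\widehat{c}_a\} \le \widehat{c}_a$, so I can injectively assign to each such position a distinct index of $\widehat{\bf x}$ carrying the same symbol. Defining $\pi$ to send those indices to the corresponding non-special positions makes $\pi(\widehat{\bf x})$ agree with ${\bf x}$ off $I$, hence reproduces the blocks ${\bf x}_k$ verbatim. The indices of $\widehat{\bf x}$ left unused number exactly $|I|$; listing them in increasing order as $j_1 < \cdots < j_{m}$ and letting $\pi$ send $j_k$ to the special position $i_k$ completes the definition of $\pi$ as a genuine bijection and yields $\pi(\widehat{\bf x}) = {\bf x}_1\,\widehat{\bf x}[j_1]\,{\bf x}_2\cdots{\bf x}_{m}\,\widehat{\bf x}[j_{m}]\,{\bf x}_{m+1}$ with both $\{i_k\}$ and $\{j_k\}$ increasing, as required.

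The step I expect to be the main obstacle is the \emph{simultaneous} control of the two orderings: reproducing ${\bf x}$ on the blocks already forces most of $\pi$, while the statement additionally demands that the leftover indices $\{j_k\}$ appear in increasing order at the special slots. The resolution above — first matching symbol by symbol to pin down which $\widehat{\bf x}$-indices are consumed by the blocks, and only then sorting the genuinely leftover indices — keeps these requirements from conflicting. A minor bookkeeping point worth isolating is the identity $\sum_a \max\{c_a-\widehat{c}_a,0\} = \sum_a \max\{\widehat{c}_a - c_a,0\} = \tfrac12 f({\bf x},\widehat{\bf x})$, which follows from $\sum_a (c_a-\widehat{c}_a)=0$; it fixes the unpadded number of special positions and guarantees that, after the block-matching, the count of leftover $\widehat{\bf x}$-indices equals $|I|$, so that $\pi$ is indeed well defined.
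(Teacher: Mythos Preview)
Your argument is correct and follows the same line as the paper's proof: bound $f$ via strong typicality, match at least $n-f$ symbols of $\widehat{\bf x}$ to positions of ${\bf x}$ symbol-by-symbol to form the contiguous blocks, sort the leftover $\widehat{\bf x}$-indices, and assemble $\pi$. Your version is in fact more careful than the paper's—you observe that the genuine excess is only $\tfrac12 f({\bf x},\widehat{\bf x})$ and then pad to $f$, whereas the paper simply invokes the looser ``at least $n-f$ in common'' directly; the only slip is that after padding $I$ to cardinality $f$ you continue to index blocks and leftovers by $m$ rather than $f$.
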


\begin{proof}
    From strong typicality, for all $a \in {\cal A}$, we have that
    \begin{equation}
    \big||a({\bf x})| - |a(\widehat{\bf x})|\big|\leq 2\delta\lambda_a n.
    \end{equation}
    Therefore $0 \leq f({\bf x}, \widehat{\bf x}) \leq L_{\mathrm{max}}$, where
    \begin{equation}
        L_{\mathrm{max}} \coloneqq  \lceil 2\delta n\rceil.
    \end{equation}
    Since $f({\bf x}, \widehat{\bf x})$ counts the difference between the occurrences of each symbol in strings ${\bf x}$ and $\widehat{\bf x}$, then ${\bf x}$ and $\widehat{\bf x}$ have at least $n-f({\bf x}, \widehat{\bf x})$ in common. So we can group $n-f({\bf x}, \widehat{\bf x})$ elements from $\widehat{\bf x}$ into at most $f({\bf x}, \widehat{\bf x}) + 1$ blocks (not necessarily non-empty) $\widehat{\bf x}_1, \widehat{\bf x}_2, \ldots, \widehat{\bf x}_{f({\bf x}, \widehat{\bf x})}, \widehat{\bf x}_{f({\bf x}, \widehat{\bf x})+1}$ which are equal to blocks ${\bf x}_1, {\bf x}_2 , \ldots, {\bf x}_{f({\bf x}, \widehat{\bf x})}, {\bf x}_{f({\bf x}, \widehat{\bf x})+1}$ of contiguous elements from ${\bf x}$ such that
    \begin{equation}
        {\bf x} = {\bf x}_1 {\bf x}[i_1] {\bf x}_2 {\bf x}[i_2] \ldots {\bf x}_{f({\bf x}, \widehat{\bf x})} {\bf x}[i_{f({\bf x}, \widehat{\bf x})}] {\bf x}_{f({\bf x}, \widehat{\bf x})+1},
    \end{equation} 
    where $i_k = i_{k-1} + |{\bf x}_k|+1$ and $i_1 = |{\bf x}_1|+1$. The remaining $f({\bf x}, \widehat{\bf x})$ elements from the string $\widehat{\bf x}$ can be denoted as $\widehat{\bf x}[j_1]\ldots \widehat{\bf x}[j_{f({\bf x}, \widehat{\bf x})}]$, where $j_k$ are indices of elements in the sequence $\widehat{\bf x}$. Observe that the elements $\widehat{\bf x}[j_1]\ldots \widehat{\bf x}[j_{f({\bf x}, \widehat{\bf x})}]$ can be ordered in such a way that $j_k < j_m$ for $k<m$. Now it is clear that there exists a permutation $\pi$ that rearranges elements of $\widehat{\bf x}$ into the desired order.
\end{proof}

\begin{lemma}\label{lem:permutation-pi-key-typical-seq}
    Suppose that ${\bf x}, \widehat{\bf x}\in T_n^\delta$. Then there exists a permutation $\pi$ such that 
    \begin{multline}
        \pi(\widehat{\bf x} \underbrace{\perp \ldots \perp}_{n+L_{\mathrm{max}}}) = \underbrace{\perp \ldots \perp}_{n}\\
        {\bf x}_1 \perp {\bf x}_2 \perp \ldots {\bf x}_{f({\bf x}, \widehat{\bf x})} \perp {\bf x}_{f({\bf x})+1} \\
        \widehat{\bf x}[j_1] \ldots \widehat{\bf x}[j_{f({\bf x}, \widehat{\bf x})}] \underbrace{\perp \ldots \perp}_{L_{\mathrm{max}} - f({\bf x}, \widehat{\bf x})}.
    \end{multline}
\end{lemma}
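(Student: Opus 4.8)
The plan is to reduce the statement to the elementary fact that two finite strings are related by a permutation of their positions precisely when they share the same multiset of symbols. Writing $f \equiv f({\bf x}, \widehat{\bf x})$ for brevity, I would first invoke Lemma~\ref{lem:permutation-typical-seq} to obtain the block decomposition ${\bf x} = {\bf x}_1 {\bf x}[i_1] \cdots {\bf x}_{f} {\bf x}[i_{f}] {\bf x}_{f+1}$ together with the $f$ leftover positions $\widehat{\bf x}[j_1], \ldots, \widehat{\bf x}[j_{f}]$ of $\widehat{\bf x}$ that are not absorbed into the common blocks. The content of that lemma is exactly that $\widehat{\bf x}$, up to a permutation, equals the blocks ${\bf x}_1, \ldots, {\bf x}_{f+1}$ interlaced with these leftover symbols; consequently the multiset of symbols appearing in $\{{\bf x}_k\}_{k=1}^{f+1}$ together with $\{\widehat{\bf x}[j_k]\}_{k=1}^{f}$ coincides with the multiset of symbols of $\widehat{\bf x}$.

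Next I would match the two multisets of symbols. The input string $\widehat{\bf x}\,\underbrace{\perp\ldots\perp}_{n+L_{\mathrm{max}}}$ has length $2n + L_{\mathrm{max}}$ and is composed of the symbols of $\widehat{\bf x}$ together with $n + L_{\mathrm{max}}$ copies of $\perp$. For the target string I would count the $\perp$ symbols and the non-$\perp$ symbols separately. The placeholders split as $n$ in the leading block, $f$ interspersed among the $f+1$ blocks ${\bf x}_k$, and $L_{\mathrm{max}} - f$ in the trailing block, for a total of $n + L_{\mathrm{max}}$, matching the input. The non-$\perp$ symbols of the target are precisely the blocks ${\bf x}_1, \ldots, {\bf x}_{f+1}$ together with $\widehat{\bf x}[j_1], \ldots, \widehat{\bf x}[j_{f}]$, which by the previous paragraph form the same multiset as the symbols of $\widehat{\bf x}$. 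A length check confirms consistency: the leading placeholders contribute $n$, the interlaced block ${\bf x}_1 \perp \cdots \perp {\bf x}_{f+1}$ contributes $(n-f) + f = n$, the leftover symbols contribute $f$, and the trailing placeholders contribute $L_{\mathrm{max}} - f$, summing to $2n + L_{\mathrm{max}}$.

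Since the input and target strings are built from identical multisets of symbols, there exists a bijection of the $2n + L_{\mathrm{max}}$ positions carrying one onto the other, and this is the sought permutation $\pi$. I do not anticipate a real obstacle: the genuine combinatorial work is already contained in Lemma~\ref{lem:permutation-typical-seq}, and only the bookkeeping of symbol counts remains. The one point demanding care is to keep the $\perp$-counts and the index ranges straight, so that the target truly has length $2n + L_{\mathrm{max}}$ and the leftover symbols land in their designated slots; if an explicit $\pi$ were desired, one could compose the permutation supplied by Lemma~\ref{lem:permutation-typical-seq} with a fixed relabeling that shifts the key positions to the front and distributes the placeholders, but for the existence claim the multiset argument is enough.
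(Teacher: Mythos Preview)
Your proof is correct and rests on the same key input as the paper's, namely Lemma~\ref{lem:permutation-typical-seq}. The difference is one of presentation: the paper first applies the permutation from Lemma~\ref{lem:permutation-typical-seq} to the first $n$ positions (leaving the $n+L_{\mathrm{max}}$ trailing $\perp$'s fixed) and then composes with an explicit $n + f({\bf x},\widehat{\bf x})$ transpositions to reach the target, whereas you argue non-constructively by matching multisets of symbols. Your route is slightly more economical for a pure existence claim, while the paper's gives a concrete construction; you already anticipate this in your final remark, so the two arguments are essentially equivalent.
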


\begin{proof}
    From Lemma~\ref{lem:permutation-typical-seq}, we know that there exists a permutation $\pi$ that  rearranges the string $\widehat{\bf x} \underbrace{\perp \ldots \perp}_{n+L_{\mathrm{max}}}$ into
    \begin{equation}
        {\bf x}_1 \widehat{\bf x}[j_1] {\bf x}_2 \widehat{\bf x}[j_2] \ldots {\bf x}_{f({\bf x}, \widehat{\bf x})} \widehat{\bf x}[j_{f({\bf x}, \widehat{\bf x})}] {\bf x}_{f({\bf x}, \widehat{\bf x})+1} \underbrace{\perp \ldots \perp}_{n+L_{\mathrm{max}}}.
    \end{equation}
    Now it remains to compose it with $n + f({\bf x}, \widehat{\bf x})$ transpositions to obtain the desired permutation.
\end{proof}

\begin{lemma}\label{lem:quantum-pi-key}
    Suppose that ${\bf s}, \widehat{\bf s}\in T_n^\delta$. Then there exists a unitary operation corresponding to $\pi_{\mathrm{key}}({\bf s}, \widehat{\bf s})$ that maps systems $\widehat{S}$ and $S_{\operatorname{out}}$, which are initially in the state $|\widehat{{\bf s}}\>_{\widehat{S}}\left|\perp \right\rangle_{S_{\operatorname{out}}}^{\otimes (n+L_{\mathrm{max}})}$, to systems in the state $|{\bf s}_{\operatorname{all}}\>$
    \begin{multline}
    |{\bf s}_{\operatorname{all}}\>= \left|\perp\right\rangle_{\widehat{S}}^{\otimes n}\otimes 
    |{\bf s}_1,\perp_{i_1},{\bf s}_2,\perp_{i_2},\hdots,\\
    {\bf s}_{f({\bf s},\widehat{\bf s})},\perp_{i_{f({\bf s},\widehat{\bf s})}},{\bf s}_{f({\bf s},\widehat{\bf s})+1},
    \widehat{\bf s}[j_1], \widehat{\bf s}[j_2],  \ldots, \\
    \widehat{\bf s}[j_{f({\bf s},\widehat{\bf s})}],\perp_{(f({\bf s},\widehat{\bf s})+1)}, \ldots ,\perp_{L_{\max}}\>_{S_{\operatorname{out}}} \\ 
    \equiv |\underbrace{\perp,  \ldots \perp}_{\text{$n$ times}}\>_{\widehat{S}}|{\bf s}_{\operatorname{cor}},{\bf s}_{\operatorname{err}}\>_{S_{\operatorname{out}}}.
    \end{multline}
    Here for abbreviation, by $\perp_{i_k}$ we mean the state $\left|\perp\right\>$ at position $i_k$.
\end{lemma}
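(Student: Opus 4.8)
The plan is to reduce the statement to the purely combinatorial rearrangement already secured in Lemma~\ref{lem:permutation-pi-key-typical-seq}, and then to lift the resulting classical permutation of positions to a unitary on the tensor-product register. First I would fix the pair ${\bf s},\widehat{\bf s}\in T_n^\delta$ and invoke Lemma~\ref{lem:permutation-pi-key-typical-seq} to obtain a permutation $\pi$ of the $N\coloneqq 2n+L_{\mathrm{max}}$ positions of the joint register $\widehat{S}S_{\operatorname{out}}$ that sends the classical string $\widehat{\bf s}\perp\cdots\perp$ (with $n+L_{\mathrm{max}}$ trailing $\perp$ symbols) onto the target string that is read off from $|{\bf s}_{\operatorname{all}}\>$.

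The one point that genuinely requires checking is that $\pi$ is a bona fide permutation of positions, i.e.\ that the output multiset of symbols equals the input one, so that no symbol is created or destroyed. This is inherited from Lemma~\ref{lem:permutation-typical-seq}: the blocks ${\bf s}_1,\ldots,{\bf s}_{f({\bf s},\widehat{\bf s})+1}$ that populate ${\bf s}_{\operatorname{cor}}$ coincide with the common blocks of $\widehat{\bf s}$, while the remaining entries $\widehat{\bf s}[j_1],\ldots,\widehat{\bf s}[j_{f({\bf s},\widehat{\bf s})}]$ deposited into ${\bf s}_{\operatorname{err}}$ are exactly the leftover symbols of $\widehat{\bf s}$. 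I would then tally both sides: the $n$ non-$\perp$ symbols of $\widehat{\bf s}$ reappear as the $n-f({\bf s},\widehat{\bf s})$ common-block entries together with the $f({\bf s},\widehat{\bf s})$ error entries, while the $n+L_{\mathrm{max}}$ initial $\perp$ symbols reappear as $n$ copies on $\widehat{S}$, $f({\bf s},\widehat{\bf s})$ copies at the positions $i_k$ inside ${\bf s}_{\operatorname{cor}}$, and $L_{\mathrm{max}}-f({\bf s},\widehat{\bf s})$ copies padding ${\bf s}_{\operatorname{err}}$. This confirms multiset conservation, hence $\pi$ is a permutation of positions.

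With that in hand, the quantum step is routine. Embedding each of the $N$ subsystems in a common Hilbert space of dimension $|{\cal A}|+1$, spanned by $\{|a\>\}_{a\in{\cal A}}$ together with $|\perp\>$, I define $V_\pi$ through its action on the product basis, $V_\pi\,|t_1\>\otimes\cdots\otimes|t_N\> = |t_{\pi^{-1}(1)}\>\otimes\cdots\otimes|t_{\pi^{-1}(N)}\>$, which reproduces the classical rearrangement effected by $\pi$. Such a factor-permutation operator maps an orthonormal product basis onto itself and is therefore unitary. Applying $V_\pi$ to the product input $|\widehat{\bf s}\>_{\widehat{S}}|\perp\>_{S_{\operatorname{out}}}^{\otimes(n+L_{\mathrm{max}})}$ then yields precisely $|{\bf s}_{\operatorname{all}}\>$, and I would identify this $V_\pi$ with the operation $\pi_{\mathrm{key}}({\bf s},\widehat{\bf s})$. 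The main (and essentially only) obstacle is the multiset-conservation bookkeeping above; everything else is the standard fact that permuting subsystems is a unitary. Since the lemma concerns a fixed pair $({\bf s},\widehat{\bf s})$ there is no control register to handle here, the coherent control-target version used in the Permutation Algorithm being obtained afterwards by summing $|{\bf s}\>\!\<{\bf s}|\otimes|\widehat{\bf s}\>\!\<\widehat{\bf s}|\otimes V_{\pi_{\mathrm{key}}({\bf s},\widehat{\bf s})}$ over typical pairs.
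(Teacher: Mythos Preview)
Your proposal is correct and follows essentially the same route as the paper: invoke Lemma~\ref{lem:permutation-pi-key-typical-seq} to obtain the classical permutation of positions, then lift it to a unitary on the tensor-product register (the paper phrases this as a composition of SWAP operations, you write out the permutation operator on the product basis directly, but these are the same thing). Your multiset-conservation bookkeeping is not wrong but is redundant, since Lemma~\ref{lem:permutation-pi-key-typical-seq} already asserts the existence of a \emph{permutation} $\pi$ with the specified action, so symbol conservation is built into its conclusion rather than something you need to re-derive.
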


\begin{proof}
    The desired operation is a quantum realization (i.e., a permutation of quantum systems) of a permutation from Lemma~\ref{lem:permutation-pi-key-typical-seq}. Since every permutation is a composition of transpositions, it is clear that such a quantum operation can be constructed as a composition of SWAP operations.
\end{proof}

\begin{lemma}\label{lem:quantum-pi-shield}
    Suppose that ${\bf s}, \widehat{\bf s}\in T_n^\delta$, the permutation $\pi_{\mathrm{key}}({\bf s}, \widehat{\bf s})$  corresponds to a unitary operation from Lemma~\ref{lem:quantum-pi-key}, and system $A_S$ is in the state
    \begin{equation*}
        U_{\widehat{\bf s}} \rho^{\otimes n} U_{\widehat{\bf s}}^{\dag}.
    \end{equation*}
    Then there exists a unitary operation $\pi_{\mathrm{shield}}({\bf s}, \widehat{\bf s})$ that maps systems $A_S[1 \ldots n],\widetilde{A}_S[1 \ldots n]$ and $T_{A_S}[1 \ldots L_{\max}]$ where $\widetilde{A}_S[1 \ldots n]$ is initially in state $\left|\perp\right\rangle ^{\otimes n}$ and $T_{A_S}[1 \ldots L_{\max}]$ is initially in state $\left|\perp\right\rangle^{\otimes L_{\mathrm{max}}}$ into systems in the states
    \begin{align}
        &A_S \to \left|\perp \right\rangle^{\otimes n},\\
        &\widetilde{A}_S[1 \ldots n] \to \\
        &U_{{\bf s}_1}\rho^{\otimes |{\bf s}_1|} U_{{\bf s}_1}^{\dag}\otimes \left|\perp\right\rangle\!\left\langle \perp\right |_{i_1} \otimes U_{{\bf s}_2}\rho^{\otimes |{\bf s}_2|} U_{{\bf s}_2}^{\dag}\otimes \left|\perp\right\rangle\!\left\langle \perp\right |_{i_2} \otimes\nonumber\\
        &  \ldots \otimes \left|\perp\right\rangle\!\left\langle \perp\right |_{i_{f({\bf s},\widehat{\bf s})}} \otimes U_{{\bf s}_{f({\bf s},\widehat{\bf s}) + 1}}\rho^{\otimes |{\bf s}_{f({\bf s},\widehat{\bf s}) + 1}|} U_{{\bf s}_{f({\bf s},\widehat{\bf s}) + 1}}^{\dag}, \\
        &T_{A_S} \to A_S[j_1]\ldots A_S[j_{f({\bf s},\widehat{\bf s})}]\underbrace{\perp \ldots \perp}_{L_{\mathrm{max}}- f({\bf s},\widehat{\bf s})},
    \end{align}
    where indices $i_k, j_k$ and blocks ${\bf s}_k$ are the same as in a given quantum operation $\pi_{\mathrm{key}}({\bf s}, \widehat{\bf s})$.
\end{lemma}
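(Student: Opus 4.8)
The plan is to realize $\pi_{\mathrm{shield}}({\bf s},\widehat{\bf s})$ as the physical permutation of shield subsystems induced by \emph{exactly the same} position bijection that underlies $\pi_{\mathrm{key}}({\bf s},\widehat{\bf s})$ in Lemma~\ref{lem:quantum-pi-key}. Recall that the permutation $\pi$ furnished by Lemma~\ref{lem:permutation-pi-key-typical-seq} is a bijection on the $2n+L_{\max}$ positions labelling the combined register, and that its quantum realization in Lemma~\ref{lem:quantum-pi-key} is nothing more than a reshuffling of the underlying subsystems, written as a composition of SWAP gates. Since a composition of SWAPs acts identically no matter what the subsystems contain, I would apply the very same sequence of SWAPs to the shield register $A'[1\ldots n]\,{\widetilde A}'[1\ldots n]\,T_{A'}[1\ldots L_{\max}]$, where $A'$ plays the role of $\widehat S$ and ${\widetilde A}'\,T_{A'}$ plays the role of $S_{\operatorname{out}}$. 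This defines $\pi_{\mathrm{shield}}({\bf s},\widehat{\bf s})$, which is manifestly a permutation unitary.

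The key point is a covariance between the symbol-level permutation and its shield-level counterpart. Introduce the position-wise correspondence $a\mapsto U_a\rho U_a^{\dag}$ for $a\in{\cal A}$ and $\perp\mapsto \left|\perp\right\rangle\!\left\langle\perp\right|$, extended to strings by tensoring. Under this correspondence the initial shield state
\begin{equation}
U_{\widehat{\bf s}}\,\rho^{\otimes n}\,U_{\widehat{\bf s}}^{\dag}\otimes\left|\perp\right\rangle\!\left\langle\perp\right|_{{\widetilde A}'}^{\otimes n}\otimes\left|\perp\right\rangle\!\left\langle\perp\right|_{T_{A'}}^{\otimes L_{\max}}
\end{equation}
is precisely the image of the key-level input $|\widehat{\bf s}\rangle_{\widehat S}\left|\perp\right\rangle_{S_{\operatorname{out}}}^{\otimes(n+L_{\max})}$. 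Because the permutation acts only on positions while the correspondence acts only on contents, the two commute; hence the output shield state is the image, under the same correspondence, of the output string $|{\bf s}_{\operatorname{all}}\rangle$ produced in Lemma~\ref{lem:quantum-pi-key}. Reading this image block by block yields the three claimed components: the blank positions of $|{\bf s}_{\operatorname{all}}\rangle$ on $\widehat S$ map to $\left|\perp\right\rangle^{\otimes n}$ on $A'$; the entries $\perp_{i_k}$ map to $\left|\perp\right\rangle\!\left\langle\perp\right|_{i_k}$; and each block ${\bf s}_k$ maps to $U_{{\bf s}_k}\rho^{\otimes|{\bf s}_k|}U_{{\bf s}_k}^{\dag}$.

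The one substantive thing to verify is that the matching blocks carry the correct twisting. By the construction in Lemma~\ref{lem:permutation-typical-seq}, the blocks extracted from $\widehat{\bf s}$ are equal, element by element, to the blocks ${\bf s}_k$ of ${\bf s}$; consequently the shield subsystems inside such a block are already twisted by exactly the unitaries $U_{{\bf s}_k[1]},\ldots,U_{{\bf s}_k[|{\bf s}_k|]}$, so after relocation they furnish $U_{{\bf s}_k}\rho^{\otimes|{\bf s}_k|}U_{{\bf s}_k}^{\dag}$ in the corresponding block of ${\widetilde A}'$, with no residual twist to repair. The $f({\bf s},\widehat{\bf s})$ problematic subsystems, sitting at the indices $j_1<\cdots<j_{f({\bf s},\widehat{\bf s})}$, carry the unmatched twists $U_{\widehat{\bf s}[j_k]}$ and are relocated wholesale to $T_{A'}$ as $A'[j_1]\cdots A'[j_{f({\bf s},\widehat{\bf s})}]$, padded by the remaining $L_{\max}-f({\bf s},\widehat{\bf s})$ blanks. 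I expect the main obstacle to be purely bookkeeping: tracking which subsystem is routed to which slot and confirming that the blank ancillas exactly fill the vacated positions $i_k$ and the tail of $T_{A'}$. The covariance observation reduces this entirely to the already-established key-level statement of Lemma~\ref{lem:quantum-pi-key}, so there is no genuinely new analytic difficulty.
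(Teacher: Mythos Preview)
Your proposal is correct and follows essentially the same approach as the paper: the paper's proof is a single sentence stating that $\pi_{\mathrm{shield}}({\bf s},\widehat{\bf s})$ is ``almost the same as the given $\pi_{\mathrm{key}}({\bf s},\widehat{\bf s})$, but it acts on different subsystems.'' Your covariance argument between the symbol-level permutation and its shield-level counterpart is simply an explicit unpacking of why this one-line justification works, and the bookkeeping you supply is exactly what the paper leaves implicit.
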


\begin{proof}
    This operation is almost the same as the given $\pi_{\mathrm{key}}({\bf s}, \widehat{\bf s})$, but it acts on different subsystems.
\end{proof}

We now aim to define the action of PA for Alice, which uses the  permutations defined above. See Algorithm~\ref{alg:PPA} below. Bob's version is almost the same, but with the interchange of symbols $A\leftrightarrow B$. The procedure first creates auxiliary systems $\widehat{S}$, $\widehat{S}_1$, $T_{A_S}$, and ${\widetilde A}_S[1 \ldots n]$ and then performs permutations $\tau_{\mathrm{key}}^{PA}$ and $\tau_{\text{shield}}^{PA}$, which are based on   $\pi_{\mathrm{key}}({\bf s},\widehat{\bf s})$ and $\pi_{\text{shield}}({\bf s},\widehat{\bf s})$ as  defined in Lemmas~\ref{lem:quantum-pi-key} and~\ref{lem:quantum-pi-shield}, respectively.

\begin{algorithm}[H]
\caption{Permutation Algorithm}
\begin{algorithmic}[1]
\Procedure{PA}{$A_K[1 \ldots n], A_S[1 \ldots n]$}\\
From $|{\bf s}\>_{A_K}$, locally create the state $|{\bf s}\>_{A_K}|\beta_x({\bf s})\>_{\widehat{S}}|\beta_x({\bf s})\>_{\widehat{S}_1}\equiv |{\bf s}\>_{A}\left|\widehat{\bf s}\right\>_{\widehat{S}} \left|\widehat{\bf s}\right\>_{\widehat{S}_1}$. \\
create a register $T_{A_S}$ of size $L_{\max}$ in the state $\left|\perp\right\>^{\otimes L_{\max}}$. \\
create a register ${\widetilde A}_S[1 \ldots n]$ in the state $\left|\perp\right\>^{\otimes n}$.\\
create a register $S_{\operatorname{out}}$ in the state $\left|\perp\right\>^{\otimes (n+L_{\max})}$\\
Alice performs the following unitary on  $A_K{\widehat S}{\widehat{S}_1}S_{\operatorname{out}}$:
$\tau^{PA}_{\mathrm{key}}\coloneqq \sum_{{\bf s},\widehat{\bf s}\in T^{\delta}_n}|{\bf s},\widehat{\bf s}\>\!\<{\bf s},\widehat{\bf s}|_{A_K\widehat{S}}\otimes \pi_{\mathrm{key}}({\bf s},\widehat{\bf s})_{\widehat{S}_1S_{\operatorname{out}}}$ \\
Alice performs the following unitary on  $A_K\widehat{S}A_S\widetilde{A}_ST_{A_S}$:
$\tau_{\text{shield}}^{PA}\coloneqq \sum_{{\bf s},\widehat{\bf s}\in T^{\delta}_n}|{\bf s},\widehat{\bf s}\>\!\<{\bf s},\widehat{\bf s}|_{A_K\widehat{S}}\otimes \pi_{\text{shield}}({\bf s},\widehat{\bf s})_{A_S\widetilde{A}_ST_{A_S}}+$
$\sum_{({\bf s},\widehat{\bf s})\notin T^{\delta}_n\times T^{\delta}_n}|{\bf s},\widehat{\bf s}\>\!\<{\bf s},\widehat{\bf s}|_{A_K\widehat{S}}\otimes \id_{A_S\widetilde{A}_ST_{A_S}}$

\EndProcedure  
\end{algorithmic} \label{alg:PPA}
\end{algorithm}

To summarize this subsection, we provide an example for the reader who is more interested in the idea of the protocol rather than in a strict mathematical proof.
Let $n=9$, $\delta = \frac{7}{9}$, ${\cal A} = \{a,b,c,d\}$ and $\lambda_x = \frac{1}{4}$ for each $x\in {\cal A}$. For this example, we choose the following two strings from ${\cal A}^n$
\begin{align}
    {\bf s} & = (b,c,c,b,d,b,a,a,c) , \\
    \widehat{\bf s} & = (c,b,b,c,c,d,a,d,c).
\end{align}
It is easy to check that these two strings belong to the $\delta$-typical set $T_n^\delta$. Given $n$ and $\delta$ as chosen above, we find that $L_{\mathrm{max}} = 14$. The states of the systems $\widehat{S}_1$, $S_{\mathrm{out}}$, $A_S$, $ \widetilde{A}_S$, and  $T_{A_S}$ before and after the application of $\tau_{\mathrm{key}}^{PA}$ and $\tau_{\mathrm{shield}}^{PA}$ are depicted in Figures~\ref{fig:before-PA} and \ref{fig:after-PA}, respectively.
\begin{figure}[h]
	\centering
	\includegraphics[trim={2cm 2 3 2},width=7cm]{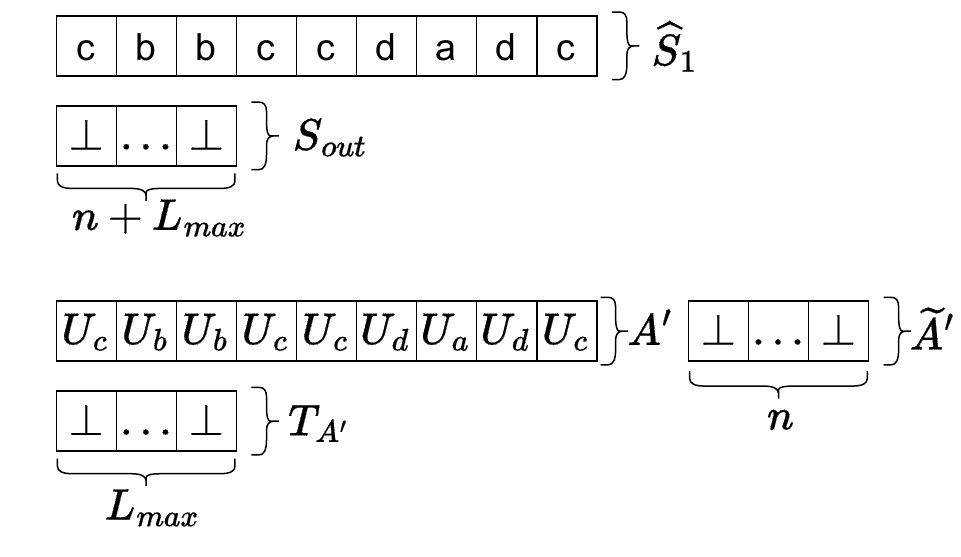}
	\caption{Systems $\widehat{S}_1, S_{\mathrm{out}}, A_S, \widetilde{A}_S, T_{A_S}$ before the application of~$\tau_{\mathrm{key}}^{PA}$ and~$\tau_{\mathrm{shield}}^{PA}$.}
	\label{fig:before-PA}
\end{figure}

\begin{figure}[h]
	\centering
	\includegraphics[trim={2cm 2 3 2},width=7cm]{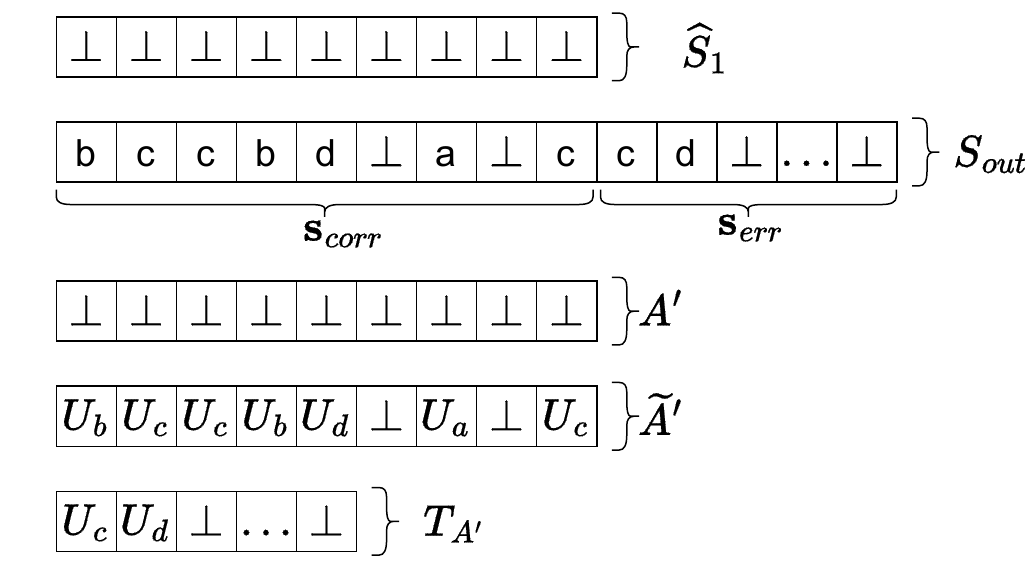}
	\caption{Systems $\widehat{S}_1, S_{\mathrm{out}}, A_S, \widetilde{A}_S, T_{A_S}$ after the application of $\tau_{\mathrm{key}}^{PA}$ and $\tau_{\mathrm{shield}}^{PA}$.}
	\label{fig:after-PA}
\end{figure}

\subsection{Phase error correction algorithm}\setcurrentname{Phase error correction algorithm}
\label{sec:PEC}

We now explain how the output of the PA (applied both by Alice and by Bob), taken as the input 
of the Phase Error Correction (PEC) algorithm, is used to correct coherently the errors that remain after the PA procedure. Most of the errors have been corrected by PA.
However, the errors that came out
due to a type mismatch of ${\bf s}$ and 
$\widehat{{\bf s}}\equiv \beta_x({\bf s})$,
collected in the system $T_{A_S}T_{B_S}$ need to be handled separately. By notation, $|x\>_j$ is equal to the $j$-th subsystem of the state $|x\>$. Further, $S_{\operatorname{out}}^{B}$ denotes system $S_{\operatorname{out}}$ obtained by applying $\tau_{\mathrm{key}}^{PA}$ by Bob during the execution of PA given in Algorithm~\ref{alg:PPA}. The 
aforementioned errors can be corrected in two steps:
\begin{enumerate}
    \item untwisting incorrect unitaries,
    \item applying correct unitaries.
\end{enumerate}

The first step is easy to perform because it suffices for Bob to apply the following unitary operation
\begin{multline}
    \tau_{1}^{\mathrm{PEC}}\coloneqq \bigotimes_{i=1}^{L_{\mathrm{max}}}\Big[\sum_{a\neq \perp}|a\>\!\<a|_{{\bf s}_{\operatorname{err}}[i]}\otimes U^{\dagger}_{a,T_{A_S}[i]T_{B_S}[i]}+\\
    \left|\perp\right\rangle\!\left\langle \perp\right |_{{\bf s}_{\operatorname{err}}[i]}\otimes \id_{a,T_{A_S}[i]T_{B_S}[i]}\Big].
\end{multline}
Performing the second step requires auxiliary system $C$ initially in state $|1\>_C$ that will be used as a counter. To make use of this counter, we define the following unitary operation
\begin{equation}
    U_{XC}^{\mathrm{shift}} \coloneqq  \sum_{a\in {\cal A}}|a\>\!\<a|_X\otimes \id_C + \left|\perp\right\rangle\!\left\langle \perp\right |_X\otimes U_C^{\oplus 1},
    \label{eq:shift}
\end{equation}
where $U^{\oplus 1}$ is a unitary operation that transforms $|t\>$ into $|(t\oplus 1) \mod L_{\mathrm{max}}\>$. To perform the second step, Bob has to do the following procedure. For each $i \in \{ 1, \ldots, n\}$, Bob applies 
\begin{multline}
     \tau_{2}^{\mathrm{PEC}}(i) \coloneqq  \\
    \sum_{b\in {\cal A}}\sum_{c=1}^{L_{\mathrm{max}}}\sum_{a\in {\cal A}\cup\{\perp\}}|a\>\!\<a|_{{\bf s}[i]}\otimes |b\>\!\<b|_{{\bf s}_{\mathrm{cor}}[i]}\otimes |c\>\!\<c|_C\otimes \id_{T_{A_S}T_{B_S}} \\
     + |a\>\!\<a|_{{\bf s}[i]}\otimes \left|\perp\>\!\<\perp\right|_{{\bf s}_{\mathrm{cor}}[i]}\otimes |c\>\!\<c|_C\otimes U_{T_{A_S}T_{B_S}}^{a,c},
\end{multline}
where 
\begin{equation}
    U_{T_{A_S}T_{B_S}}^{a,c} \coloneqq  U_{a,T_{A_S}[c]T_{B_S}[c]}\otimes \id_{T_{A_S}[\neq c]T_{B_S}[\neq c]}
\end{equation}
and $U_{a}$ is a unitary operation from the definition of private states and corresponding to the symbol $a$ (see~\eqref{eq:strict_ir}). Next he applies $U_{{\bf s}_{\mathrm{cor}}[i]C}^{\mathrm{shift}}$, as defined in~\eqref{eq:shift}.
We conclude this section by presenting the PEC algorithm in a compact format:

\begin{algorithm}[H]
\caption{Phase Error Correction Algorithm}
\begin{algorithmic}[1]
\Procedure{PEC}{$B_KS^{B}_{\operatorname{out}}T_{A_S}T_{B_S}$}\\
Alice teleports the system $T_{A_S}$ to Bob.\\
Bob creates system $C$ in state $|1\>$.\\
Bob performs
\State{$\tau_{1}^{\mathrm{PEC}}\coloneqq \otimes_{i=1}^{L_{\mathrm{max}}}\big[\sum_{a\neq \perp}|a\>\!\<a|_{{\bf s}_{\operatorname{err}}[i]}\otimes U^{\dagger}_{a,T_{A'_i}T_{B'_i}}+\left|\perp\right\rangle\!\left\langle \perp\right |_{{\bf s}_{\operatorname{err}}[i]}\otimes \id_{a,T_{A'_i}T_{B'_i}}\big]$}
\For{$\mathrm{i}=1$ to $n$}
    \State{$\tau_{2}^{\mathrm{PEC}}(i),
        $}
    \State $U_{{\bf s}_{\mathrm{cor}}[i]C}^{\mathrm{shift}} 
    $
\EndFor
\State \textbf{endfor}\\
Bob teleports the system $T_{A_S}$ back to Alice.\\
Bob resets his counter system $C$ to state $|1\>$ by applying $\left( U_{{\bf s}_{\mathrm{cor}}[i]C}^{\mathrm{shift}}\right)^{\dag}$ $n$ times.
\EndProcedure  
\end{algorithmic}\label{alg:PEC}
\end{algorithm}

To continue the example given in the previous section, we present the action of $\tau_1^{\mathrm{PEC}}$ in Figure~\ref{fig:tau_1_PEC}, $\tau_2^{\mathrm{PEC}}(6)$ in Figure~\ref{fig:tau_2_PEC_6}, and $U_{{\bf s}_{\mathrm{cor}}C}^{\mathrm{shift}}$ in Figure~\ref{fig:U_shift}.

\begin{figure}[h]
	\centering
	\includegraphics[trim={2cm 2 3 2},width=7cm]{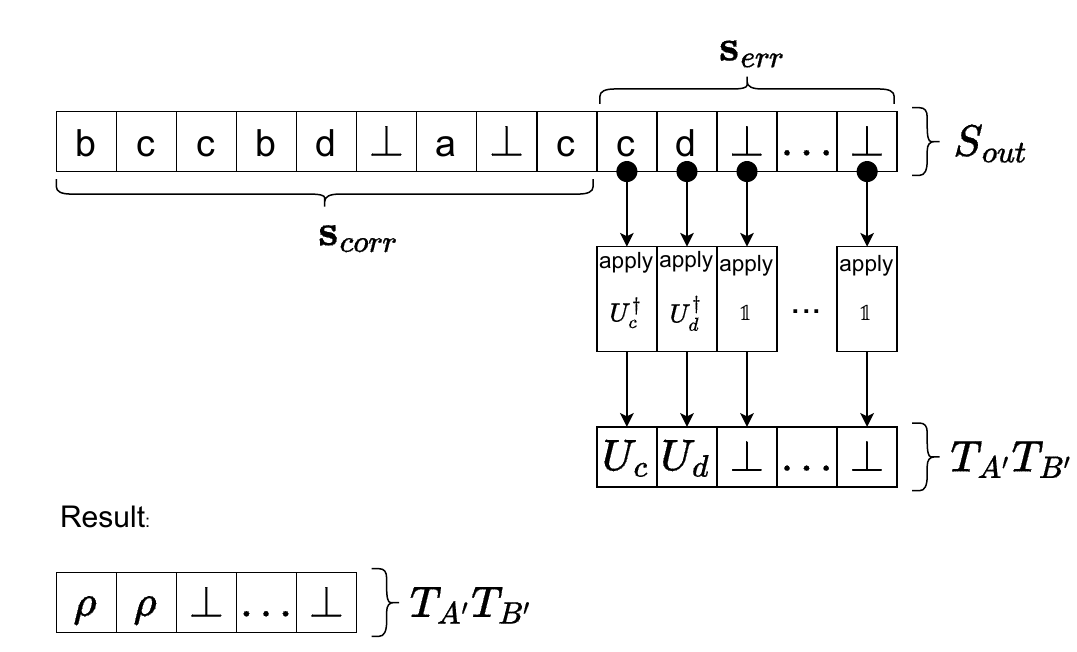}
	\caption{$\tau_1^{\mathrm{PEC}}$ operation. Control systems are marked with a black dot. If there is a symbol $a\in {\cal A}$ on ${\bf s}_{\mathrm{err}}[k]$, then it applies $U^\dag_a$ on the system $T_{A_S}T_{B_S}[k]$. On the other hand, if there is a symbol $\perp$ on ${\bf s}_{\mathrm{err}}[k]$, then it applies $\id$. After the action of $\tau_1^{\mathrm{PEC}}$, states stored in the system $T_{A_S}T_{B_S}$ become untwisted.}
	\label{fig:tau_1_PEC}
\end{figure}

\begin{figure}[h]
	\centering
	\includegraphics[trim={2cm 2 3 2},width=7cm]{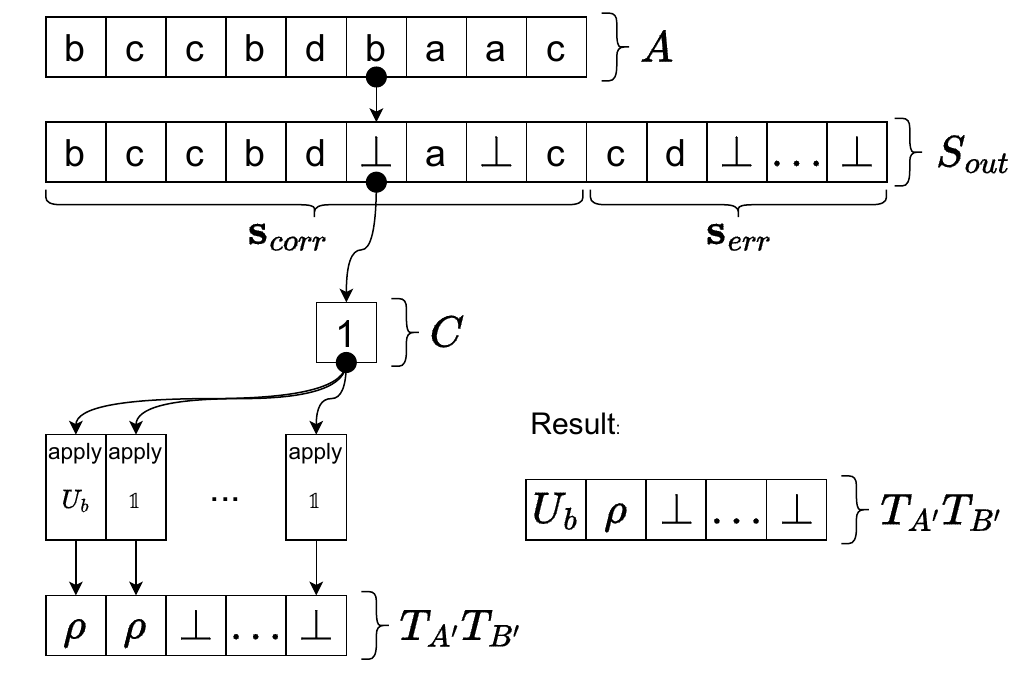}
	\caption{The action of $\tau_2^{\mathrm{PEC}}(6)$ operation. Notice that this is only one step from the loop which iterates from $i=1$ to $i=9$ and for each $i$, the operation $\tau_2^{\mathrm{PEC}}(i)$ is performed. $\tau_2^{\mathrm{PEC}}(6)$ operation has a control on $A_K[6]$, ${\bf s}_{\mathrm{cor}}[6]$, $C$ and target on $T_{A_S}T_{B_S}$. If it `sees' $\perp$ on ${\bf s}_{\mathrm{cor}}[6]$, then it applies $U_b$ on $T_{A_S}T_{B_S}[1]$ where $b$, $1$ are states of system ${\bf s}_{\mathrm{cor}}[6]$ and $C$ respectively. Otherwise, it applies $\id$. As a result of this operation, the state $T_{A_S}T_{B_S}[1]$ is twisted with a correct unitary operation. Notice also that after this operation state of the system~$C$ will be shifted, because ${\bf s}_{\mathrm{cor}}[6]$ is in a $\perp$ state (see Figure~\ref{fig:U_shift} or Step~8 of Algorithm~\ref{alg:PEC}).}
	\label{fig:tau_2_PEC_6}
\end{figure}

\begin{figure}[h]
	\centering
	\includegraphics[trim={2cm 2 3 2},width=6.2cm]{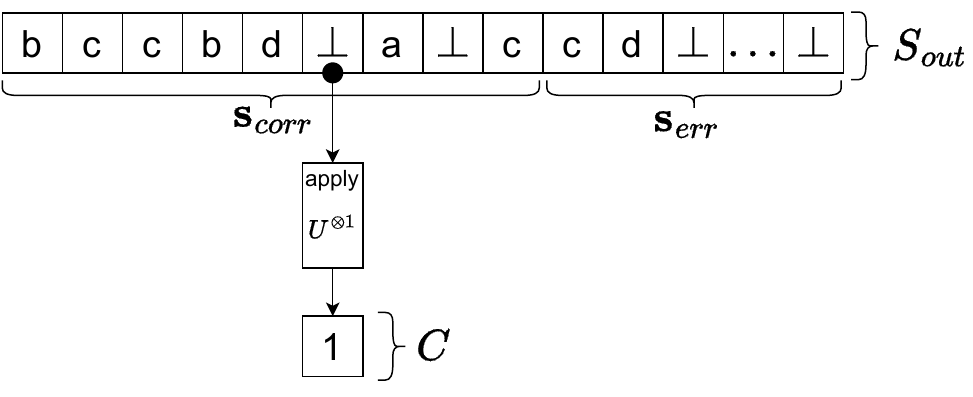}
	\caption{The action of $U_{{\bf s}_{\mathrm{cor}}[6]C}^{\mathrm{shift}}$ operation. It has control on ${\bf s}_{\mathrm{cor}}[6]$ system and target on $C$ system. If there is a $\perp$ symbol on ${\bf s}_{\mathrm{cor}}[6]$, then the state of $C$ is incremented by $1$.}
	\label{fig:U_shift}
\end{figure}

\subsection{Inverting PA Protocol}\setcurrentname{Inverting PA Protocol}
\label{sec:RP}

In the preceding two sections, we have seen that the composition of PA and PEC is, in fact, a composition of twistings, i.e., controlled-unitary operations. The first one has a control on $|{\bf s}\>_{A_K}$ and creates $|\widehat{\bf s}\>$. The next one $\tau_{\mathrm{key}}^{PA}$ has a control on $|{\bf s}\>_{A_K}|\widehat{\bf s}\>_{\widehat{S}}$ and permutes $|\widehat{\bf s}\>_{\widehat{S}_1}|{\bf s}_{\operatorname{out}}\>$. Further $\tau_{\text{shield}}^{PA}$ is applied with a control on $|{\bf s}\>_{A_K}|\widehat{\bf s}\>_{\widehat{S}}$ and with a target on $A_S[1\ldots n]\widetilde{A}_S[1\ldots n]T_{A_S}[1\ldots L_{\max}]$. 
Next, PEC also performs twistings, specified by $\tau_{1}^{\mathrm{PEC}}$ and $\tau_{2}^{\mathrm{PEC}}$ given in lines $5$ and $8$ of the definition of PEC. Note that the 
twisting operations mentioned above involve auxiliary systems initiated in a tensor power of  $|0\>$ or $\left|\perp\right\>$. These auxiliary systems were changed via further operations.

In this section, we show that all the auxiliary systems can be uncomputed to their original state and therefore traced out with no leakage of information about the key of systems $A_KB_K$ of the generalized private state to Eve. It is the final building block of the privacy dilution protocol. Note here that if we trace out the auxiliary systems to Eve without resetting them to the initial state, she will learn about the key part of the state to be produced. The output of this procedure would no longer be the $\widetilde{\gamma}(\rho_{\delta,n})$ state, which is undesirable.

In what follows, we describe the Inverting Permutation Algorithm, which outputs the desired state only on the systems $A_KA_SB_KB_S$. 
However, at first, we will describe a procedure of rearranging subsystems of systems $A_S,\widetilde{A}_S,T_{A_S}, B_S,\widetilde{B}_S,T_{B_S}$ to obtain the correct state on $A_S,B_S$ and $\left|\perp\right\rangle^{\otimes (n+L_{\mathrm{max}})}$ on the remaining auxiliary systems. To do this, we have to define another unitary operation, namely 
\begin{multline}
    \tau_1^{\mathrm{IPA}}(i)\coloneqq 
    \sum_{c=1}^{L_{\mathrm{max}}}\sum_{a\in{\cal A}}|a\>\!\<a|_{{\bf s}_{\mathrm{cor}}[i]}\otimes |c\>\!\<c|_C \otimes \id_{T_{A_S}\widetilde{A}_S} + \\
     \left|\perp\>\!\<\perp\right|_{{\bf s}_{\mathrm{cor}}[i]}\otimes |c\>\!\<c|_C \otimes W_{T_{A_S}\widetilde{A}_S}^{c,i},
\end{multline}
where
\begin{equation}
    W_{T_{A_S}\widetilde{A}_S}^{c,i} \coloneqq  \operatorname{SWAP}[\widetilde{A}_S[i],T_{A_S}[c]]\otimes \id_{\widetilde{A}_S[\neq i]T_{A_S}[\neq c]}.
\end{equation}
 To perform this operation, Alice creates an auxiliary system $C$ that will serve as a counter (Bob has already created it during the PEC algorithm). Now, for each $i\in \{1,\ldots, n\}$, both of them have to perform the same steps on appropriate systems, as follows:
\begin{enumerate}
    \item apply $\tau_1^{\mathrm{IPA}}(i)$,
    \item apply $U_{{\bf s}_{\mathrm{cor}}[i]C}^{\mathrm{shift}}$.
\end{enumerate}
After these steps, both of them have to swap systems~$\widetilde{A}_S$ and $A_S$ to obtain a corrected shield in system $A_S$ and the state~$\left|\perp \right\rangle^{\otimes (n+L_{\mathrm{max}})}$ in systems $\widetilde{A}_S$ and $T_{A_S}$. It only remains to reset the counter by performing $\left( U_{{\bf s}_{\mathrm{cor}}[i]C}^{\mathrm{shift}} \right)^{\dag}$ $n$ times.

The description of IPA is as follows. 
\begin{algorithm}[H]
\caption{Inverting PA}
\begin{algorithmic}[1]
\Procedure{IPA}{$A_KA_S$$\widehat{S}_1$$S_{\operatorname{out}}$$T_{A_S}$,$\beta_x$}\\
Alice creates system $C$ in state $|1\>$\\
Alice does:
\For{$\mathrm{i}=1$ to $n$}
    \State{$\tau_1^{\mathrm{IPA}}(i)$
    ,}
    \State{$U_{{\bf s}_{\mathrm{cor}}[i]C}^{\mathrm{shift}}$}
\EndFor
\State \textbf{endfor}\\
Alice swaps systems $A_S$ and $\widetilde{A}_S$\\
Alice performs the following unitary on  $B_K{\widehat S}{\widehat{S}_1}S_{\operatorname{out}}$:
${\tau^{PA}_{\mathrm{key}}}^{\dagger}\equiv\sum_{{\bf s},\widehat{\bf s}\in T^{\delta}_n}|{\bf s},\widehat{\bf s}\>\!\<{\bf s},\widehat{\bf s}|_{B_K\widehat{S}}\otimes \pi_{\mathrm{key}}^{-1}({\bf s},\widehat{\bf s})_{\widehat{S}_1S_{\operatorname{out}}}$ \\
Alice undoes the creation of $|\widehat{s}\>_{\widehat{S}}|\widehat{s}\>_{\widehat{S}_1}$ done in Step~1 of PA by applying $U_{\beta_x}^{\dagger}$ twice\\
Alice applies CNOTs to reset the state $|{\bf s}\>$ to $| 0\>^{\otimes n}$  in systems $\widehat{S}, \widehat{S}_1$\\
Alice traces out the auxiliary systems $\widehat{S}\widehat{S}_1S_{\operatorname{out}}\widetilde{A}_ST_{A_S}$.
\EndProcedure  
\end{algorithmic}\label{alg:IPPA}
\end{algorithm}
Lastly, we depict the action of $\tau_1^{\mathrm{IPA}}(8)$ in Figure~\ref{fig:tau_1_IPA_8}.
\begin{figure}[h]
	\centering
	\includegraphics[trim={2cm 2 3 2},width=7cm]{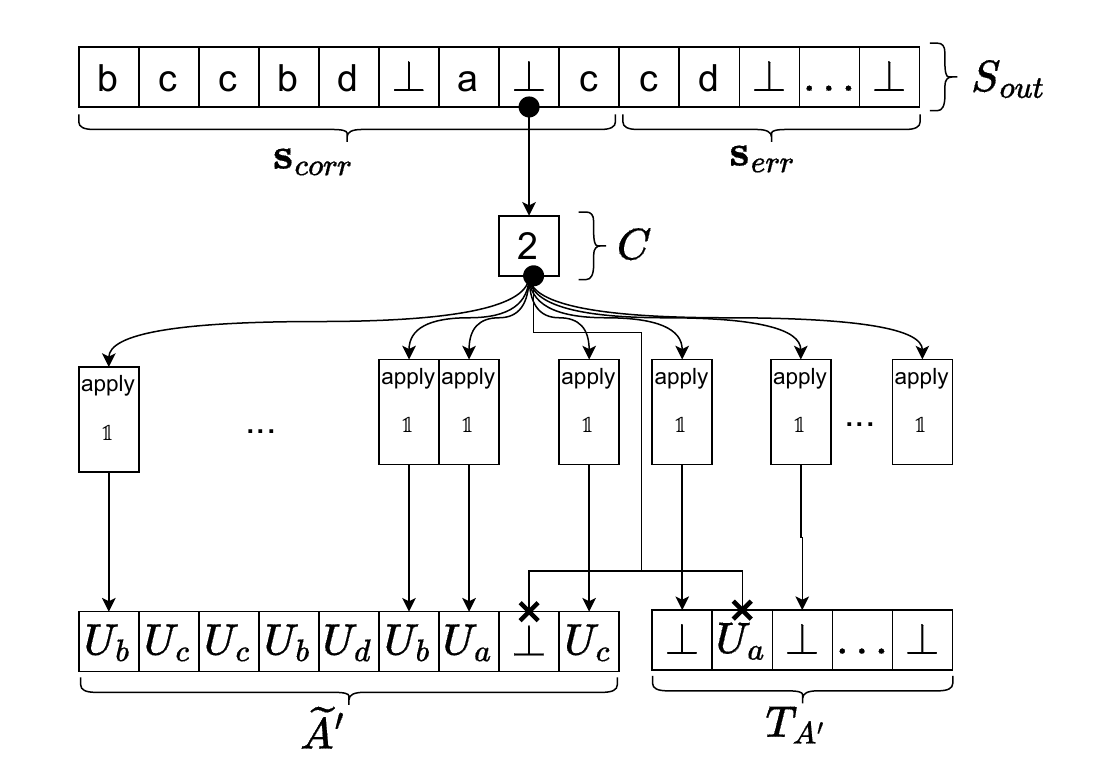}
	\caption{$\tau_1^{\mathrm{IPA}}(8)$ operation. Notice that this is only one step from the loop, which iterates from $i=1$ to $i=9$. The state of the counter system $C$ is $2$ since it has been shifted when $i=6$ (see Figure~\ref{fig:U_shift} and Step~6 of Algorithm~\ref{alg:IPPA}). This operation swaps $\widetilde{A}_S[8]$ (since $i=8$) and $T_{A_S}[2]$ (since state of the counter system $C$ is $2$) and applies $\id$ to systems $\widetilde{A}_S[\neq 8]$ and $T_{A_S}[\neq 2]$. The swap operation occurs, because system ${\bf s}_{\mathrm{cor}}[i]$ is in a state $\perp$. If this system was in a state other than~$\perp$, then the operation $\tau_1^{\mathrm{IPA}}(8)$ would apply the identity $\id$ to all systems.}
	\label{fig:tau_1_IPA_8}
\end{figure}

\subsection{Privacy dilution protocol}\setcurrentname{Privacy dilution protocol}
\label{sec:DPdescription}

We now introduce the privacy dilution protocol, which transforms a private state into a generalized private state.
It uses the fact that the secure one-time pad protocol is a classical correspondent of quantum teleportation.

\begin{theorem}
\label{thm:transformation} 
For any generalized private state
$\gamma(\psi)$ with $|\psi\>=\sum_{a\in {\cal A}}\lambda_a|e_a\>|f_a\>$ and sufficiently large number $n \in \mathbb{N}$, the state
$\gamma(\psi)^{\otimes n}$  can be created by LOCC from a strictly irreducible private state
$\gamma_{d_n}(\Phi^+)\otimes|\Phi^+\>\!\<\Phi^+|_{d'_n}$ with error less than $2\varepsilon$ in trace distance, key rate $\log_2 d_n = \lceil n(S(A_K)_{\psi} +\eta)\rceil$, and $\log_2 d'_n= \lceil 4\delta n\rceil$, where $\eta,\varepsilon,\delta \rightarrow 0$ as $n\rightarrow \infty$.
\end{theorem}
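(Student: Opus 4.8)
The plan is to assemble the dilution protocol from the three subroutines developed above---the Permutation Algorithm (PA, Algorithm~\ref{alg:PPA}), Phase Error Correction (PEC, Algorithm~\ref{alg:PEC}), and Inverting PA (IPA, Algorithm~\ref{alg:IPPA})---prefaced by a coherent one-time-pad step, and then to bound the accumulated trace-distance error by $2\varepsilon$ using strong typicality. First I would fix the initial resource to be $\gamma_{d_n}(\Phi^+)\otimes\Phi^+_{d'_n}$, where $\gamma_{d_n}(\Phi^+)$ is the strictly irreducible private state whose shield is $\rho^{\otimes n}$ twisted by the same family $\{U_{\mathbf{s}}\}$ that defines the target $\gamma(\psi)^{\otimes n}$, with key dimension fixed so that $\log_2 d_n=\lceil n(S_A(\psi)+\eta)\rceil$. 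Alice prepares a purification carrying the amplitudes of $|\psi\rangle^{\otimes n}$, compresses it onto the typical subspace of dimension $\approx 2^{nS_A(\psi)}$ (which sets the key rate $\log_2 d_n$ and the vanishing slack $\eta$), applies the controlled-shift (coherent one-time pad) into the key register, measures, and announces the outcome $x$; Bob shifts his key register by $x$.

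Second, I would track the resulting state. After this teleportation-like step the key carries the correct amplitudes $\sqrt{\lambda_{\mathbf{s}}}\,|e_{\mathbf{s}}f_{\mathbf{s}}\rangle$, but---because the shift relabels the control value that selects the twisting---the shield is twisted by $U_{\beta_x(\mathbf{s})}$ rather than $U_{\mathbf{s}}$, where $\beta_x$ is the bijection of Lemma~\ref{lem:beta-map} on $T_n^\delta$. The task is then to implement, coherently in the key basis, the correction $U_{\beta_x(\mathbf{s})}\mapsto U_{\mathbf{s}}$ on both the ket index $\mathbf{s}$ and the bra index $\mathbf{s}'$ simultaneously, so that the off-diagonal coherences $|\mathbf{s}\mathbf{s}\rangle\langle\mathbf{s}'\mathbf{s}'|$---which are exactly what guarantee secrecy---are preserved. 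I would run PA to permute the shield subsystems so that $\beta_x(\mathbf{s})$ and $\mathbf{s}$ are matched in all but at most $L_{\max}=\lceil 2\delta n\rceil$ positions (Lemmas~\ref{lem:permutation-typical-seq}--\ref{lem:quantum-pi-shield} guarantee this whenever both sequences are $\delta$-typical), swapping the uncorrectable positions into $T_{A'}T_{B'}$. I would then run PEC, teleporting $T_{A'}$ to Bob and back using $\Phi^+_{d'_n}$ with $\log_2 d'_n=\lceil 4\delta n\rceil$ ebits, untwisting the wrong unitaries and applying the correct ones via the control registers $\mathbf{s}_{\mathrm{err}},\mathbf{s}_{\mathrm{cor}}$.

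Third, I would invoke IPA to uncompute every auxiliary register ($\widehat{S},\widehat{S}_1,S_{\mathrm{out}},\widetilde{A}',T_{A'}$) back to its fixed initial state $\left|\perp\right\rangle^{\otimes(\cdot)}$; this is the security-critical step, since it guarantees these systems end in a product state independent of the key and can thus be traced out (handed to Eve) with no leakage, so that the output on $AA'BB'$ is genuinely $\gamma(\psi)^{\otimes n}$ and not merely correct on the key marginals. Finally I would bound the error: compression and both PA matchings succeed exactly on $\delta$-typical sequences, whose total probability is at least $1-\varepsilon$ for large $n$; a gentle-operator and triangle-inequality argument then yields trace-distance error below $2\varepsilon$, while $\eta,\delta,\varepsilon\to 0$ and the consumed ebit rate $\tfrac{1}{n}\lceil 4\delta n\rceil\to 0$ as $n\to\infty$.

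I expect the main obstacle to be the coherent bookkeeping in the second and third steps: verifying that the composition of $\tau^{PA}_{\mathrm{key}},\tau^{PA}_{\mathrm{shield}}$, the PEC twistings $\tau_1^{\mathrm{PEC}},\tau_2^{\mathrm{PEC}}$, and the IPA operations acts as the intended diagonal twisting $\sum_{\mathbf{s}}|\mathbf{s}\mathbf{s}\rangle\langle\mathbf{s}\mathbf{s}|\otimes U_{\mathbf{s}}(\cdot)U_{\mathbf{s}}^\dagger$ on the typical subspace \emph{while simultaneously} resetting all ancillas---each controlled on both $\mathbf{s}$ and $\beta_x(\mathbf{s})$ so that it is applied consistently to the bra and ket indices. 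The delicate point is that a single misalignment between the $\mathbf{s}$- and $\mathbf{s}'$-conditioned branches would dephase the key and destroy privacy, so the argument must establish correctness on the full double-indexed operator, not just on diagonal (classical) key values.
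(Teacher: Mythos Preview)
Your proposal is correct and follows essentially the same route as the paper: fix the resource private state with the target's shield and twisting, use compression plus a coherent one-time-pad (controlled-XOR) and measurement to imprint the amplitudes $\sqrt{\lambda_{\mathbf s}}$ onto the key while incurring the $\beta_x$ mismatch on the shield, then apply PA\,$\to$\,PEC\,$\to$\,IPA to realign the twisting and uncompute all ancillas, with the error controlled by strong typicality. The paper's proof differs only in presentation details---it replaces atypical sequences by an explicit error symbol $|\emptyset\rangle$ and a two-outcome POVM rather than invoking a gentle-operator bound, and it separately verifies that the publicly announced outcome $x$ decorrelates from the shield once $\beta_x$ has been inverted---but the structure and the critical insight (that every auxiliary register must return to its $|\perp\rangle$ initial state so that tracing it out leaks nothing about $\mathbf s$) are exactly as you describe.
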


Before proving Theorem~\ref{thm:transformation} in Section~\ref{sec:DPproof}, we 
describe the Privacy Dilution protocol, which assures the validity of the claim made here. To this end, we first define a private state $\gamma_{d_n}(\Phi^+)$, followed by the steps of the Dilution Protocol. 

Let $|\psi\>=\sum_{a\in{\cal A}}\sqrt{\lambda_a}|e_a\>|f_a\>$.
Further denote
\begin{align}
 \psi^{\otimes n} & = \sum_{\bf s} \sqrt{\lambda_{\bf s}} |e_{\bf s} f_{\bf s}\>,   \\
\gamma(\psi)^{\otimes n} & =\sum_{{\bf s,s'}=0}^{|{\cal A}|^{n}-1} \sqrt{\lambda_{\bf s}\lambda_{\bf s'}} |e_{\bf s}f_{\bf s}\>\!\<e_{\bf s'} f_{\bf s'}| \otimes U_{\bf s}\widetilde{\rho}_{A_SB_S} U_{\bf s'}^{\dagger} ,
\end{align}
where more precisely,
\begin{align}
 \widetilde{\rho}_{A_SB_S} & =\rho^{\otimes n}_{A_SB_S},   \\
 {\bf s} & = ({\bf s}[1],\ldots ,{\bf s}[n]) \in {\cal A}^n,\\
U_{\bf s} & = U_{{\bf s}[1]}\otimes U_{{\bf s}[2]}\otimes \cdots  \otimes U_{{\bf s}[n]}, \\
\lambda_{\bf s} & =\lambda_{{\bf s}[1]}\times \lambda_{{\bf s}[2]}\times \cdots \times \lambda_{{\bf s}[n]}.
\end{align}
It will be convenient to work with an approximate version of this state. It is sufficient because $\gamma(\psi)^{\otimes n}$ is isomorphic to $\widetilde{\gamma}(\psi^{\otimes n})$, where in the latter state the shielding system is $\widetilde{\rho}_{A_SB_S}$ and the twisting is constructed by $\{U_{\bf s}\}_{\bf s}$.
We further note that we will be able to construct only an approximate version of $\widetilde{\gamma}_{A_SB_S}$,
namely, one that twists only the typical
subspace, i.e. $\operatorname{span}(\{|e_{\bf s}f_{\bf s}\>\}_{{\bf s}\in T_n^{\delta}})$, of the space $\operatorname{span}(\{|e_{\bf s}f_{\bf s}\>\}_{{\bf s}\in {\cal A}^n})$.
To this end, note that, for each $\varepsilon >0$ and $\delta>0$, there exists sufficiently large $n$ such that
\begin{equation}
    \left\|  \psi^{\otimes n} - \frac1N\sum_{{\bf s},{\bf s}'\in T^{\delta}_{n}}|\psi_{\bf s}\>\!\<\psi_{{\bf s}'}| \right\|_1 \leq \varepsilon,
\end{equation}
where $|\psi_{\bf s}\> \coloneqq \sqrt{\lambda_{\bf s}}|e_{\bf s}\>|f_{\bf s}\>$, $T^{\delta}_n$ is the set of $\delta$-typical sequences of $n$ symbols from the alphabet ${\cal A}$, and $N=\sum_{s\in T^{\delta}_n} \lambda_{\bf{s}}$, so that $N\geq 1-\varepsilon$, due to the law of large numbers (see Section~\ref{sec:preliminaries_strong_typ}). These are sequences with a number of each symbol close to its expected value by $\delta$ in modulus.
Let
\begin{equation}
    \rho_{\delta,n} \coloneqq \frac{1}{N}\sum_{{\bf s},{\bf s}'\in T^{\delta}_{n}}|\psi_{\bf s}\>\!\<\psi_{{\bf s}'}|.
\end{equation}
According to the above, by the data-processing inequality for the trace distance and the triangle inequality, for every $\varepsilon,\delta>0$ and sufficiently large $n$,
\begin{equation}
    \left\|  \widetilde{\gamma}(\psi^{\otimes n})-\widetilde{\gamma}(\rho_{\delta,n})\right\|_1  \leq 3 \varepsilon,
    \label{eq:close_as_desired}
\end{equation}
where $\widetilde{\gamma}(\rho_{\delta,n})$ is defined as
\begin{multline}
    \widetilde{\gamma}(\rho_{\delta,n})\coloneqq  \\
    (1-\varepsilon) \sum_{{\bf s},{\bf s}' \in T_{n}^{\delta}}\frac{\sqrt{\lambda_{\bf s}\lambda_{\bf s'}}}{N}|e_{\bf s}f_{\bf s}\>\!\<e_{{\bf s}'} f_{{\bf s}'}|\otimes U_{\bf s} \widetilde{\rho}_{A_SB_S} U_{{\bf s}'}^{\dagger}\\
    +\varepsilon |\emptyset\emptyset\emptyset\emptyset\>\!\<\emptyset\emptyset\emptyset\emptyset|_{A_KB_KA_SB_S},
    \label{eq:correct_output}
\end{multline}
with $\emptyset \notin \{{\cal A}\cup \{\perp\}\}$.

We now construct a private state from which we will be able to produce the state $\widetilde{\gamma}(\rho_{\delta,n})$, which is sufficient for us due to the above closeness relation.
We design it using an optimal compression encoding~\cite{Cover2006}. This encoding, call it $C$, maps one-to-one $|{\cal A}|$-ry strings ${\bf s}$ of symbols from ${\cal A}$ with entropy
\begin{equation}
    H(\{\lambda_{\bf s}\}_{\bf s})=H\!\left(\left\{\prod_{i=1}^{n}\lambda_{({\bf s})[i]}\right\}_{\bf s}\right)=S(A_K)_{\psi}
\end{equation}
into codewords
 $c({\bf s})$ of length at most $L(n,\eta)\equiv\lceil n( S(A_K)_{\psi} +\eta)\rceil$ for some $\eta>0$, which converges to $0$ as $n\rightarrow \infty$ (here, $H(\cdot)$ denotes the Shannon entropy of a distribution). These codewords are orthogonal when treated as pure states in a computational basis. A simple example of such an encoding is mapping $\delta_1$-strongly typical sequences to all sequences (in lexicographic order) of length $\lceil nS(A_K)_{\psi}(1+\delta_1)\rceil$ and all not $\delta_1$-strongly typical sequences (see definition in~\eqref{eq:def-str-typ} and its properties) to an error symbol $\emptyset$ (which happens with negligible probability $\leq \varepsilon$ for sufficiently large $n$). In that case we identify $\eta$ with $\delta_1 S(A_K)_{\psi}$.  However, any other compression scheme with rate $L(n,\eta)$, which does not map a strongly typical sequence to an error state, works in this case.

We will consider now only the $\delta$-typical sequences $\bf s$,
as these are the only ones 
appearing in $\rho_{\delta,n}$ that get twisted in $\widetilde{\gamma}(\rho_{\delta,n})$.
Let the encoding $C$ assign ${\bf s} \mapsto c({\bf s})$, where $c({\bf s})$ is a codeword that corresponds to ${\bf s}$ via the chosen encoding of string ${\bf s}$, which is of size $L(n,\eta)$ at most. Further, we order each codeword $c({\bf s})$ lexicographically, denoted by $r(c({\bf s}))$, so that they correspond further one-to-one to $|{\cal A}|$-ary strings of length $L(n,\eta)$ through the mapping $R$. Because in the case of the strongly typical sequences, the Shannon compression scheme is lossless, $C$~is a bijection. Therefore there exists a bijection ${\bf s}(r)\coloneqq C^{-1}\circ R^{-1}(r)$ that maps $r$ back to ${\bf s}$.

We are ready to construct $\gamma_{d_n}(\Phi^+)$ from which we will create $\widetilde{\gamma}(\rho_{\delta,n})$, along with the assistance of a sublinear amount of entanglement. The construction is as follows
\begin{align}
 \gamma_{d_n}(\Phi^+) & =\sum_{r,r'} \frac{1}{d_n}|rr\>\!\<r'r'|\otimes U_{{\bf s}(r)}\rho_{A_SB_S}^{\otimes n}U_{{\bf s}(r')}^{\dagger}  
 \label{eq:resource_private_state}
\end{align}
where
\begin{align}
U_{{\bf s}(r)} & = U_{{\bf s}(r)[1]}\otimes \cdots  \otimes U_{{\bf s}(r)[n]}  ,  \\
{\bf s}(r) & \equiv ({\bf s}(r)[1],\ldots ,{\bf s}(r)[n]), \\
\rho_{A_SB_S}^{\otimes n} & \equiv \widetilde{\rho}_{A_SB_S},
\end{align}
as given in the definition
of the state $\widetilde{\gamma}(\rho_{\delta,n})$.

Due to the above choice of a private state, the proposed protocol only modifies the key part of the involved state, while the shield is prepared in advance in an almost correct form, that is up to the permutation of the local subsystems.

Consider $|\phi\> = \sum_{a \in {\cal A}} \sqrt{\lambda_a} |e_a\>$. We first note that by an isometry between entangled and coherent superposition states, for the $\varepsilon, \delta >0$ and $n$ chosen as above, the following inequality holds~\cite{YangWinter}
\begin{equation}
    \left\|  \phi^{\otimes n} - \frac{1}{N}\sum_{{\bf s},{\bf s}'\in T^{\delta}_{n}}|\phi_{\bf s}\>\!\<\phi_{{\bf s}'}| \right\|_1 \leq \varepsilon,
\end{equation}
where $\phi_{\bf s} = \sqrt{\lambda_{\bf s}}|e_{\bf s}\>$.
Next, we will use the typical state 
$\sigma_{\delta,n}\coloneqq \frac{1}{N}\sum_{{\bf s},{\bf s}'\in T^{\delta}_{n}}|\phi_{\bf s}\>\!\<\phi_{{\bf s}'}|$.

The Dilution Protocol consists of the steps we define in what follows. In each step $2,\ldots ,7$, when a defined action encounters a system in the $\emptyset$ symbol, it is assumed to be defined as an identity operation. However, the situation when the system is in the $\emptyset$ state happens with negligible probability $\leq O(\varepsilon)$. We note that steps 1-6 resemble the coherence dilution protocol from~\cite{YangWinter}. We give below the steps of the protocol.
\begin{enumerate}
    \item Alice creates $\phi^{\otimes n}\coloneqq \sum_{{\bf s }=0}^{|{\cal A}|^n-1} \sqrt{\lambda_{\bf s}} |e_{\bf s}\>_{A''}$,
    and transforms it into $\widetilde{\sigma}_{\delta,n}\coloneqq (1-\varepsilon)\sigma_{\delta,n}+\varepsilon |\emptyset\>\!\<\emptyset|$ by replacing an atypical $\bf s$ with an error state $|\emptyset\>\!\<\emptyset|$, which however occurs with probability less than~$O(\varepsilon)$. 
    
    \item Alice applies a    unitary transformation that maps $|e_{\bf s}\>_{A''}$ into  $|{\bf s}\>_{A''}$. Then she performs coherently a classical encoding $C$ based compression algorithm~\cite{Cover2006} which
    transforms $\widetilde{\sigma}_{\delta,n}$ into 
    \begin{equation}
    \sigma'_{\delta,n}=(1-\varepsilon)\frac{1}{N}\sum_{c,c'=0}^{d_n-1} \sqrt{\lambda_{{\bf s}(c)}\lambda_{{\bf s}(c')}} |c\>\!\<c'|+\varepsilon |\emptyset\>\!\<\emptyset| ,     
    \end{equation}
    where $c$ is a codeword of the encoding $C$,
    the classical encoding maps ${\bf s} \mapsto c$,  and ${\bf s}(c)=C^{-1}(c)$.
    On symbol $\emptyset$, the encoding is defined as the identity operation.
    
    \item Alice performs a two-outcome POVM on the system $A''$ that determines if the system is in the state~$|\emptyset\>$ or not. If the system $A''$ is in the state $\emptyset$, Alice publicly communicates this to Bob, and they both replace the system $A'B'$ with the $|\emptyset\emptyset\emptyset\emptyset\>_{A_KB_KA_SB_S}$ state, trace out system $A''$, and stop the Privacy Dilution in a failure state, which happens with probability $\leq O(\varepsilon)$.
    
    Otherwise,
    Alice performs the controlled-XOR operation with $A''$  as the control  and the subsystem~$A_K$  of $\gamma_{d_n}(\Phi^+)_{A_KA_SB_KB_S}$ as the target. This operation transforms the state $|c\>_{A_K}$ into
    $|c\oplus r\>$, with $\oplus$ being a digit-wise addition modulo $|{\cal A}|$ operation.
    
     \item Alice measures the system $A_K$ and communicates the result  of this measurement $x\equiv c\oplus r$ publicly to Bob. 
     
    \item Bob performs the unitary transformation $|x\ominus c\>\rightarrow |c\>$ on system $B_K$.
    
    \item Alice and Bob coherently apply the inverse of the compression algorithm $C$, which transforms  $|c\>$ back into $|{\bf s}\>$ and $|c'\>$ into $|{\bf s}'\>$, respectively.
    
    \item Alice and Bob perform the algorithms to correct phase errors, i.e., apply IPA$\circ$ PEC$\circ$ PA.
    Alice traces out system $A_K$, and Bob traces out the corresponding system holding the state of the message $|x\>\!\<x|$.
    
    \item Alice changes coherently a basis state $|{\bf s}\>_{A''}$ into 
    $|e_{\bf s}\>_{A''}$, by applying 
    \begin{equation}
        W_{A''}=\sum_{{\bf s}\in T^{\delta}_n}|e_{\bf s}\>\!\<{\bf s}|+\sum_{{\bf s} \notin T^{\delta}_n}|{\bf s}\>\!\<{\bf s}|+ |\emptyset\>\!\<\emptyset|,
    \end{equation}
     while Bob changes $|{\bf s}'\>_{B_K}$ to $|f_{{\bf s}'}\>_{B_K}$ analogously. 
\end{enumerate}
In the next section, we prove that
the resulting state (taking into account an error event) of the Dilution Protocol is equal to $\widetilde{\gamma}_{\delta,n}$ given in~\eqref{eq:correct_output}.

\subsection{Correctness of the Privacy Dilution Protocol}\setcurrentname{Correctness of the Privacy Dilution Protocol}
\label{sec:DPproof}
We now give a proof of Theorem~\ref{thm:transformation}.

\begin{proof}[Proof of Theorem~\ref{thm:transformation}]
We comment on the steps of the Privacy Dilution Protocol to show its correctness,  thereby proving Theorem~\ref{thm:transformation}.

For the sake of readability, we omit in the notation the fact discussed in point $2$ that the sequence ${\bf s}$ may be atypical, resulting in state $|\emptyset \emptyset \emptyset \emptyset\>_{A_KB_KA_SB_S}$. We will assume further that each sequence $\bf s$ is typical, i.e., belongs to the set $T_{n}^{\delta}$.

After the second step, that is, compressing the ancillary state by mapping ${\bf s}\mapsto c$ on system $A''$, the total state of $A''A_KB_KA_SB_S$ systems is
\begin{multline}
    \sum_{c,c',r,r'}\frac{1}{d_n}\frac{\sqrt{\lambda_{{\bf s}(c)}\lambda_{{\bf s}(c')}}}{N}|c\>\!\<c'| \otimes |rr\>\!\<r'r'| \\
    \otimes U_{{\bf s}(r)} \widetilde{\rho}_{A_SB_S} U_{{\bf s}(r')}^{\dagger},
\end{multline}
where $d_n = |{\cal A}|^{\lceil L(n,\eta)\rceil}$.
After the third step, which applies the controlled-XOR operation, the total state of the system is as follows
\begin{multline}
    \sum_{c,c',r,r'}\frac{1}{d_n} \frac{\sqrt{\lambda_{{\bf s}(c)}\lambda_{{\bf s}(c')}}}{N}|c\>\!\<c'|_{A''}\otimes  \\
     |r\oplus c,r\>\!\<r'\oplus c',r'|_{AB} \otimes U_{{\bf s}(r)} \widetilde{\rho}_{A_SB_S} U_{{\bf s}(r')}^{\dagger}.
\end{multline}
After measurement of system $A_K$ in step $4$, the state collapses
with probability $\frac{1}{d_n}$ 
to state $|x\>\!\<x|$ with $x = r\oplus c=r'\oplus c'$ on system $A$. Since this implies $r= x \ominus c$  and $r'=x\ominus c'$, (where $\ominus$ is the inverse of the modulo-XOR operation $\oplus$, i.e., it adds an opposite element in a group of addition modulo $|{\cal A}|$), then after tracing out $A_K$, the total state on systems $A''B_KA_SB_S$ is
\begin{multline}
    \sum_{c,c'} \frac{\sqrt{\lambda_{{\bf s}(c)}\lambda_{{\bf s}(c')}}}{N}|c\>\!\<c'|_{A''} \otimes |x\ominus c\>\!\<x\ominus c'|_{B_K} \\
    \otimes U_{{\bf s}(x\ominus c)} \widetilde{\rho}_{A_SB_S} U_{{\bf s}(x\ominus c')}^{\dagger}.
\end{multline}
Further in step $5$, Bob, knowing $x$, transforms $|x\ominus c\>\!\<x\ominus c'| \equiv |r\>\!\<r'|\rightarrow |c\>\!\<c'|$ which can be done unitarily by applying $W\equiv \sum_c|c\>\!\<x\ominus c|$ since $\{|x\ominus c\rangle \}_c$ forms a basis, as does $\{|c\rangle \}_c$. Hence, the total state is as follows
\begin{multline}
    \sum_{c,c'} \frac{\sqrt{\lambda_{{\bf s}(c)}\lambda_{{\bf s}(c')}}}{N}|c\>\!\<c'|_{A''} \otimes |c \>\!\<c'|_{B_K} \\
    \otimes U_{{\bf s}(x\ominus c)} \widetilde{\rho}_{A_SB_S} U_{{\bf s}(x\ominus c')}^{\dagger}.
\end{multline}
In step $6$, both Alice and Bob apply the decompression algorithm, which maps
$|c\>$ back to $|{\bf s}(c)\>$ on Alice's side, and $|c'\> \mapsto |{\bf s}(c')\>$ on Bob's side, so the resulting state is
\begin{multline}
    \sum_{{\bf s},{\bf s}'\in T^{\delta}_n} \frac{\sqrt{\lambda_{{\bf s}}\lambda_{{\bf s}'}}}{N}|{\bf s}\>\!\<{\bf s}'|_{A''} \otimes |{\bf s}\>\!\<{\bf s}'|_{B_K} \\
    \otimes U_{\beta_x({\bf s})} \widetilde{\rho}_{A_SB_S} U_{\beta_x({\bf s}')}^{\dagger},
    \label{eq:prefinal}
\end{multline}
where we identify ${\bf s}(c)$ with ${\bf s}$ and ${\bf s}(c')$ with ${\bf s}'$, which is possible due to the bijection between ${\bf s}(c)$ and $c$ (and between ${\bf s}(c')$ and $c'$). We also note that the index $c$, as we show below, is no longer needed in the description of the state. We introduce a function of $\bf s$ and $\bf s'$ denoted as $\beta_x$ instead. To describe this function, we introduce and apply Lemma~\ref{lem:beta-map}.

The map $\beta_x$ indicates the pattern of the ``phase errors.'' Indeed, the indices of the key part ${\bf s}$ and $\beta_x({\bf s})$ of identifying $U_{\beta_x({\bf s})}$ do not match in general as they should in the state $\widetilde{\gamma}_{\delta,n}$. We would like to
perform the mapping $\beta_x^{-1}$; however, we need to do it using a negligible amount of private key.
In what follows, we will do it
by an LOCC operation accompanied by $O(n\delta)$ copies of the singlet state.
The latter assistance of entanglement will not alter the overall private key cost
of the transformation since we can eventually take the limit $\delta \rightarrow 0$.

We now describe the phase-error correction algorithm invoked in Step~$7 $ of the protocol.
As a warm-up, let us observe that when ${\bf s}$ and $\widehat{{\bf s}}$ are of the same type, then for each ${\bf s}$ the mapping $\beta_x^{-1}$ is a permutation $\pi({\bf s})$.
Hence in the ideal (yet unreachable) case of $\delta=0$, it would be enough to define an error-correcting map $\mathcal{P}_{\operatorname{ideal}}$  as
\begin{equation}
    \mathcal{P}_{\operatorname{ideal}}\coloneqq  \sum_{{\bf s} \in T^{\delta}_n} |{\bf s}\>\!\<{\bf s}|_{A''}\otimes \pi({\bf s})_{A_S}
    \label{eq:ideal},
\end{equation}
and the same map for systems $B_KB_S$. However,
we need to modify the above map
so that it takes care of the fact
that the mapping $\beta_x$ can map ${\bf s}$ into a sequence $\widehat{{\bf s}}$ that is not of the same type. Fortunately, it is an automorphism, so we control how much the ${\bf s}$ and $\beta_x({\bf s})$ sequences differ: for each symbol $a \in \{0,\hdots,|{\cal A}|-1\}$ the number of occurrences of $a$ in ${\bf s}$ (written as $|a({\bf s})|$) differs from $|a(\widehat{{\bf s}})|$ at most by $\lceil 2\delta n\rceil$ (see Lemma~\ref{lem:permutation-typical-seq}). We have defined the action of partial inversion of $\beta_x$, called Permutation Algorithm (PA), by Algorithm~\ref{alg:PPA} given in Section~\ref{sec:alg}.

Let us further denote by $U_{{\bf s}_{\operatorname{cor}}}$ a tensor product $\bigotimes_{i=1}^n U_{{\bf s}_{\operatorname{cor}}[i]}$
of unitaries 
corresponding to the sequence ${\bf s}_{\operatorname{cor}}$ in a way that, if in position $i$, there is the symbol $a$ in ${\bf s}_{\operatorname{cor}}$, then $U_{{\bf s}_{\operatorname{cor}}[i]}=U_a$, and when the symbol $\perp$
is at this position, then
$U_{{\bf s}_{\operatorname{cor}}[i]}=\id$.

The mapping of the PA can be captured by a transformation
of a purification of the state from~\eqref{eq:prefinal}, i.e.,
\begin{equation}
    \sum_{{\bf s}\in T^{\delta}_n}\sqrt{\frac{\lambda_{\bf s}}{N}}|{\bf s}{\bf s}\> (U_{\beta_x({\bf s})}\otimes \id_E) |\chi^{\otimes n}\>_{A_SB_SE}
\end{equation}
(where $|\chi\>_{A_SB_SE}$ is a purification of the state $\widetilde{\rho}_{A_SB_S}$ to system $E$) into a state 
\begin{align}
    &\sum_{{\bf s} \in T^{\delta}_n}\sqrt{\frac{\lambda_{\bf s}}{N}}|{\bf s}{\bf s}\>_{A_KB_K} (U_{{\bf s}_{\operatorname{cor}}}\otimes U_{\widehat{{\bf s}}[q_1]}\otimes\cdots \otimes U_{\widehat{{\bf s}}[q_{f({\bf s})}]}\otimes \nonumber\\ 
    & \id_{f({\bf s})+1}\otimes \cdots  \otimes \id_{L_{\max}}\otimes \mathrm{I}_E ) \widetilde{|\chi\>}\nonumber\\
    &\left[ {\widetilde{A}}_{S{\bf s}_{\operatorname{cor}}},T_{A_S}[1:f({\bf s})], T_{A_S}^{\perp}[f({\bf s})+1 : L_{\mathrm{max}}] \right.\nonumber\\
    &\left. {\widetilde{B}}_{S{\bf s}_{\operatorname{cor}}},T_{B_S}[1:f({\bf s})], T_{B_S}^{\perp}[f({\bf s})+1 : L_{\mathrm{max}}]E\right].
    \label{eq:beforePEC}
\end{align}
In the above for ease of reading we identify $\widetilde{|\chi\>}_Y$ with $\widetilde{|\chi\>}[Y]$. We also added a $\perp$ symbol in superscript in $T_{A_S}^{\perp}[f({\bf s})+1 : L_{\mathrm{max}}]$ to keep in mind that particular systems are in the $\left|\perp\right\rangle$ state.
By $\widetilde{|\chi\>}$ we mean the purification of the shield systems such that whenever ${\bf s}_{\operatorname{cor}}[i]= \perp$,  the state $|\chi_i\>$ (i.e., the $i$-th subsystem of $|\chi\>$) is meant to be the purification of the state $\left|\perp\perp\right\>_{A_SB_S}$ to the system $E$. If this is not the case, the $i$-th subsystem of $|\chi\>$ equals the purification of the state $\rho_{A_SB_S}$.
We have also omitted the systems $A_SB_S[1\ldots n]$ as they are in the state $\left|\perp\right\>^{\otimes n}$. Moreover, $U_{\widehat{{\bf s}}[j_k]}$ are 
unitaries whose indices do not match the index of the key at position $i_k$ in $|{\bf s}\>_{A_K}$ (reflecting the phase error). The $L_{\max}-f({\bf s})$ systems $T_{A_S}[f({\bf s})+1]\cdots T_{A_S}[L_{\max}]$ are in the state $\left|\perp\right\>$.

Next, the PEC procedure described in Algorithm~\ref{alg:PEC} is executed. It begins by teleporting to Bob the $L_{\max}$ states of the systems
\begin{equation}
A_S[j_1]\cdots A_S[j_{f({\bf s})}]\left|\perp\right\>_{j_{f({\bf s})}+1}\cdots \left|\perp\right\>_{L_{\max}},
\end{equation}
which are now stored in system $T_{A_S}$. There are $f({\bf s})$ ``phase errors''.
Namely for $k=1, \ldots ,f({\bf s})$ the $p_k$ are the positions for which ${\bf s}_{\operatorname{cor}}$ has $\left|\perp\right\>$, indicating that ${\bf s}[i_k]\neq \widehat{\bf s}[j_k]$
at this position. In other words, at position $i_k$ the key system has symbol ${\bf s}[i_k]$, while $U_{\widehat{\bf s}[j_k]}$ is performed 
on $T_{A_SB_S}[k]$ instead of $U_{{\bf s}[i_k]}$. For this reason Bob
performs controlled-unitary operations $\tau_1^{\mathrm{PEC}}$ and $\tau_2^{\mathrm{PEC}}$  to correct these errors on systems
$$T_{A_S}[1]\cdots T_{A_S}[{f({\bf s})}]
T_{B_S}[1]\cdots T_{B_S}[{f({\bf s})}],$$ leaving systems 
$$T_{A_S}[f({\bf s})+1]\cdots T_{A_S}[L_{\max}],T_{B_S}[f({\bf s})+1]\cdots T_{B_S}[L_{\max}]$$ untouched.
Indeed, this is what PEC does. The unitary $\tau_1^{\mathrm{PEC}}$ undoes the action of $U_{\widehat{s}[j_k]}$ on system $j_k$ for
each $k=1, \ldots ,f({\bf s})$, leaving these systems in state $\rho_{A_SB_S}$ each. Further
$\tau_2^{\mathrm{PEC}}$ rotates the latter states by
appropriate unitaries $U_a$ demanded by symbols ${\bf s}[i_k]=a$ for each $k$. 
Bob further teleports back the systems that he got from Alice.
After the phase errors are corrected, the state of~\eqref{eq:beforePEC} gets transformed into the following form
\begin{align}
    &\sum_{{\bf s} \in T^{\delta}_n}\sqrt{\frac{\lambda_{\bf s}}{N}}|{\bf s}{\bf s}\>_{A_KB_K} (U_{{\bf s}_{\operatorname{cor}}}\otimes U_{{\bf s}[p_1]}\otimes\cdots\otimes  U_{{\bf s}[p_{f({\bf s})}]}\otimes \nonumber\\ &\id_{f({\bf s})+1}\otimes \cdots  \otimes \id_{L_{\max}} \otimes \id_E)  \widetilde{|\chi\>}\nonumber\\
    &\left[ \widetilde{A}_{S{\bf s}_{\operatorname{cor}}},T_{A_S}[1:f({\bf s})], T_{A_S}^{\perp}[f({\bf s})+1 : L_{\mathrm{max}}] \right.\nonumber\\
    &\left. {\widetilde B}_{S{\bf s}_{\operatorname{cor}}},T_{B_S}[1:f({\bf s})], T_{B_S}^{\perp}[f({\bf s})+1 : L_{\mathrm{max}}]E\right]
\end{align}
where the only difference (apart from the different state of the auxiliary systems that we discuss shortly) is that instead of incorrect $U_{\widehat{\bf s}[i]}$ there comes $U_{{\bf s}[i]}$ which are correct ``phases.'' 
Performing PA involves borrowing the auxiliary systems $\widetilde{A}_S,T_{A_S}S_{\operatorname{out}}$, as well as the ones needed to produce $\widehat S$ and its copy ${\widehat S}_1$.
Then, there comes the Inverting PA (IPA) procedure, the task of which is to (i)
put back the corrected shielding systems as well as the accompanying correct ones from ${\widetilde A}_ST_{A_S}T_{B_S}$ to ${ A_S}{B_S}[1\ldots n]$ systems, (ii) uncompute the auxiliary systems $\widehat S$ and ${\widehat S}_1$ to their original state $|0\>$ of the initial dimension. 
The first task of IPA is rather complicated since it has to rearrange coherently the entries corrected by the PEC algorithm and conserve the order of the remaining ones. This procedure is performed in a control-target manner, where control is on systems ${\bf s}_{\mathrm{cor}}$ and $C$. The former system identifies when the SWAP operation should be applied, while the latter identifies which part of $T_{A_S}$ should be swapped. Section~\ref{sec:RP} provides an extensive description of this procedure.
On the other hand, the second task of IPA is simple since it is completed by applying the inverse of the twisting $\tau_{\mathrm{key}}^{PA}$, i.e., ${\tau_{\mathrm{key}}^{PA}}^{\dagger}$. Further, the $U_{\beta_x}^{\dagger}$ applied twice uncomputes $\widehat S$ to a state of the systems $A$, which is then uncopied by CNOT operations.
Finally, the systems ${\widehat S}^A_1{\widehat S}^B_1S_{\operatorname{out}}^{A}S_{\operatorname{out}}^B{\widetilde A}_ST_{A_S}{\widetilde B}_ST_{B_S}$ are then traced out in their original state.
The resulting state is
\begin{align}
    &\sum_{{\bf s}\in T^{\delta}_n}\sqrt{\frac{{\lambda_{\bf s}}}{N}}|{\bf s}{\bf s}\>_{A_KB_K} U_{{\bf s}}\otimes\id_E |\chi\>_{A_SB_SE}.
\end{align}
The auxiliary systems were traced out in their original state. Hence, this tracing out of auxiliary systems does not disclose any further knowledge to Eve about the above state.

We have then demonstrated explicitly, by constructing the operations of the Dilution Protocol, that they are a composition of twisting unitary operations, teleportation of a state whose dimension is independent of the value of the key, and giving back the auxiliary systems in their original state. Since the teleported system is also teleported back, we can consider their composition as an identity operation on the state. 
Since the states of auxiliary systems are returned to their original state, and the protocol transforms $A_KB_KA_SB_S[1\ldots n]$ back to the desired state, there is no security leakage when we trace out the auxiliary systems.

We note here that mapping of Step~8 of the Dilution Protocol given in Sec.~\ref{sec:DPdescription}, i.e., $|{\bf s}\>_{A_K} \rightarrow |e_{\bf s}\>_{A_K}$ on Alice's key part and $|{\bf s}\>_{B_K}\rightarrow |f_{\bf s}\>_{B_K}$ on Bob's one, can also be done by local unitary transformations. 

Finally, we observe that the system  $A_K$, which is communicated by Alice to Bob publicly, is initially correlated with the shielding system: $|x\>\!\<x|\otimes U_{\beta_x({\bf s})}\widetilde{\rho}_{A_SB_S} U_{\beta_x({\bf s})}^{\dagger}$.
However, after the protocol is done in Step~$8$, the function $\beta_x$ is inverted:
for every ${\bf s}$ on shield there is $U_{\bf s}(\widetilde{\rho}_{A_SB_S})U_{\bf s}^{\dagger}$. Hence, the shielding systems become independent from $|x\>\!\<x|_{A_K}$, and the total output state is in the desired form. Note here that the parties trace out the systems holding information about  $|x\>\!\<x|$ only at the end of the protocol since the PA procedure uses this register
to perform~$U_{\beta_x}$.
 To summarize, the resulting state of the protocol is in the desired form of the approximate twisted typical subspace of the state $|\psi\>^{\otimes n}$. Due to~\eqref{eq:close_as_desired}, we have obtained a state that approximates another, which is locally equivalent to $\gamma(\psi)^{\otimes n}$. This concludes the proof of the correctness of the Privacy Dilution protocol described above
 and in Theorem~\ref{thm:transformation}.
\end{proof}

\begin{lemma}
\label{lem:beta-map}
    The map $\beta_x :T_n^\delta \to T_n^\delta$ defined on a sequence~${\bf s}(c)$ as $\beta_x({\bf s}(c))=\widehat{\bf s}\equiv{\bf s}(x\ominus c)$ is a bijection given by the following formula
    \begin{equation}
        \beta_x \coloneqq  C^{-1} \circ R^{-1}\circ \ominus_{|{\cal A}|}\circ X_-\circ C,
        \label{eq:beta}
    \end{equation}
    where the $X_-$ map subtracts $x$ digit-wise modulo $|{\cal A}|$ and $\ominus_{|{\cal A}|}$ takes a digit-wise opposite number modulo $|{\cal A}|$, i.e., such that together they add to $0$ modulo $|{\cal A}|$.
\end{lemma}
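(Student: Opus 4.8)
The plan is to read off the closed form by composing the definitions of $C$, $R$, $X_-$ and $\ominus_{|{\cal A}|}$, and then to obtain bijectivity from the observation that $\beta_x$ is a composition of bijections.

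First I would track a typical sequence through the composition. Writing $c=C({\bf s})$ for the codeword that the (lossless, hence injective on $T_n^\delta$) compression assigns to ${\bf s}$, the map $X_-$ sends $c$ to the digit-wise difference $c-x$ modulo $|{\cal A}|$, and $\ominus_{|{\cal A}|}$ negates each digit modulo $|{\cal A}|$; their composition is therefore $c\mapsto x\ominus c$, because $-(c-x)=x-c$ digit-wise. Applying $R^{-1}$ and then $C^{-1}$ converts the index $x\ominus c$ back into the sequence ${\bf s}(x\ominus c)$, which is exactly the value $\widehat{\bf s}$ prescribed in the definition. This yields $\beta_x=C^{-1}\circ R^{-1}\circ\ominus_{|{\cal A}|}\circ X_-\circ C$ and establishes the formula.

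Next I would verify bijectivity factor by factor. The compression $C$ is a bijection from $T_n^\delta$ onto the set of codewords, since the Shannon scheme is lossless on strongly typical sequences; the lexicographic relabeling $R$ is a bijection by construction, so $R^{-1}$ is as well; and both $X_-$ and $\ominus_{|{\cal A}|}$ are bijections of the group of $|{\cal A}|$-ary strings under digit-wise addition modulo $|{\cal A}|$. In fact $\ominus_{|{\cal A}|}\circ X_-$ is the affine involution $c\mapsto x\ominus c$, which is its own inverse. A composition of bijections is a bijection, so $\beta_x$ is invertible.

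The step I expect to be the main obstacle is well-definedness: one must check that $\beta_x$ really maps $T_n^\delta$ into $T_n^\delta$, i.e.\ that the affine core sends indices of typical codewords to indices of typical codewords, so that $R^{-1}$ and $C^{-1}$ can be applied and return a typical sequence. This containment is exactly what the downstream permutation lemmas need, since Lemma~\ref{lem:permutation-typical-seq} assumes both ${\bf s}$ and $\widehat{\bf s}=\beta_x({\bf s})$ lie in $T_n^\delta$. I would justify it operationally: the outcome $x=r\oplus c$ arises from measuring the key register of the private state, whose twisting is supported only on typical indices $r$; hence for every typical codeword $c$ surviving in the post-measurement superposition the matching index $r=x\ominus c$ is again typical, so ${\bf s}(x\ominus c)\in T_n^\delta$. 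With this containment settled, injectivity of $C$ and $R$ together with bijectivity of the involutive core upgrade $\beta_x$ to a bijection of $T_n^\delta$ onto itself, completing the argument.
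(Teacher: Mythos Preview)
Your proof is correct and follows essentially the same route as the paper: both verify the closed form by tracking a sequence through the composition and deduce bijectivity from the fact that each factor is a bijection. Your presentation is slightly more direct than the paper's (which first reconstructs ${\bf s}(c)$ backwards from $\widehat r$ and then inverts that chain), and your explicit discussion of well-definedness---that for a realized measurement outcome $x$ the index $x\ominus c$ necessarily lies in the range of $R\circ C$ on $T_n^\delta$---addresses a point the paper leaves implicit in the surrounding construction.
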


\begin{proof}
    First, by the definition of the notation ${\bf s}(r)$, for $\widehat{r}\coloneqq x\ominus c$ the bijection $C^{-1}\circ R^{-1}$ maps $\widehat{r}$ to ${\bf s}(x\ominus c)$ (via some $c'$ not necessarily equal to c, see diagram below). On the other hand, we can bijectively reconstruct the string ${\bf s}(c)$ from $\widehat{r}$. We do so by first obtaining from $\widehat{r}$ the string ${\widetilde r}=c\ominus x$, i.e., the opposite number to $x \ominus c$, which is done by the map $\ominus_{|{\cal A}|}$. We compose it by adding $x$ mod $|{\cal A}|$, i.e., with the map $X_+$ that results in $c({\bf s})$. We finally compose the latter maps with $C^{-1}$ to obtain ${\bf s}(c)$. 
    Now, by inverting the last sequence of bijections, i.e., getting 
    $\ominus_{|{\cal A}|}\circ X_-\circ C$
    (note that $\ominus_{|{\cal A}|}$ is self-inverse), we obtain bijective mapping from ${\bf s}$ to $\widehat{r}$. Composing the latter bijection with $C^{-1}\circ R^{-1}$ we obtain the $\beta_x$ given in~\eqref{eq:beta}. The action of the $\beta_x$ bijection can be captured through the following diagram
    \begin{equation*}
        \begin{tikzcd}[every arrow/.append style={shift left}]
            {\bf s}(x\ominus c) \arrow{r}{C} \arrow{d}{\beta_x^{-1}} & c' \arrow{l}{C^{-1}}\arrow{r}{R} & \hat{r} = x\ominus c \arrow{l}{R^{-1}} \arrow[leftrightarrow]{d}{\ominus_{|{\cal A}|}} \\
            {\bf s}(c) \arrow{u}{\beta_x}\arrow{r}{C} & c \arrow{l}{C^{-1}} \arrow{r}{X_{-}} & c\ominus x \arrow{l}{X_{+}}
        \end{tikzcd}
    \end{equation*}
    This concludes the proof.
\end{proof}

\begin{remark}
We have designed a privacy dilution protocol that transforms an ideal private state into a generalized private state by LOCC. Our protocol uses at least $o(1)$ rate of privacy in the form of pure entanglement. It remains open to determine if pure entanglement is actually needed for this task.
It is worth noticing that whenever the twisting unitaries, $U_a$, are incoherent operations (composition of incoherent gates), then the DP can be completed by incoherent operations~\cite{RMP_coherence}.
\end{remark}

\section{Regularized key of formation upper bounds key cost}\setcurrentname{Regularized key of formation upper bounds key cost}
\label{sec:regularized_key_of_formation_is_cost}

We now show that the regularized key of formation is an upper bound on the key cost. 
For ease of writing, we will make use of the following definition.
\begin{definition}
A private state $\gamma(\Phi^+)$ is $\theta$-related to a generalized private state $\gamma(\phi)$ if there exists $n\in \mathbb{Z}^+$ such that  
from $\gamma(\Phi^+)$ one can create $\gamma(\phi)^{\otimes n}$ by LOCC,  up to an error $\theta$ in trace distance. For ease of writing, we sometimes omit $\theta$ whenever irrelevant or known from the context.
\end{definition}
Let us note here that for each generalized private state $\gamma(\psi)$ and $\theta >0$ there exists a private state $\theta$-related to $\gamma(\psi)$, as assured
by Theorem~\ref{thm:transformation}. We are ready to state the main result of this section.

\begin{theorem}
\label{thm:regularization}
For every bipartite state $\rho_{AB}$,
\begin{equation}
K_F^{\infty}(\rho_{AB}) \geq K_C(\rho_{AB}). 
\end{equation}
\end{theorem}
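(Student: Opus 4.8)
The plan is to prove this achievability (upper) bound on the key cost by exhibiting, for every $n$, an LOCC protocol that creates arbitrarily many copies of $\rho^{\otimes n}$ from a single strictly irreducible private state at a key rate approaching $\tfrac1n K_F(\rho^{\otimes n})$, and then to regularize. The engine is the dilution protocol of Theorem~\ref{thm:transformation}, which produces $m$ i.i.d.\ copies of a single generalized private state $\gamma(\psi)$ from a private state of key rate $S_A(\psi)=S_{A_K}(\gamma(\psi))$, plus a sublinear amount of maximally entangled states; since $\Phi^+$ is itself a (strictly irreducible) private state, this extra entanglement contributes only negligibly to the key count. The other ingredient is Definition~\ref{def:formation} together with Lemma~\ref{lem:finite}: I fix a near-optimal \emph{finite} decomposition $\rho^{\otimes n}=\sum_{k=1}^{K}p_k\,\gamma(\psi_k)$ into GSIR states with $\sum_k p_k S_{A_K}(\gamma(\psi_k))\le K_F(\rho^{\otimes n})+\kappa$.

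First I would build the consumed resource blockwise. For the target $(\rho^{\otimes n})^{\otimes M}=\sum_{\vec k\in[K]^M}p_{\vec k}\,\bigotimes_{j}\gamma(\psi_{k_j})$, concentration of the i.i.d.\ counts guarantees that, outside an event of vanishing probability, each symbol $k$ occurs at most $N_k^{\max}=Mp_k+O(\sqrt{M\log M})$ times. I therefore take as the consumed resource the single SIR private state $\bigotimes_{k=1}^{K}\gamma^{(k)}_{d_{N_k^{\max}}}(\Phi^+)$ (a tensor product of SIR private states is SIR), whose total key dimension is $\sum_k \log_2 d_{N_k^{\max}} + o(M) = M\big(K_F(\rho^{\otimes n})+\kappa\big)(1+o(1))$; crucially, the shield dimension is unconstrained in the definition of $K_C^{\varepsilon}$, so the tailored shields demanded by Theorem~\ref{thm:transformation} are free. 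Running the $K$ dilution protocols in parallel produces $\Gamma\coloneqq\bigotimes_k\gamma(\psi_k)^{\otimes N_k^{\max}}$ up to trace-distance error $\sum_k 2\varepsilon_k\to 0$.

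Next I would realize the i.i.d.\ mixture by relabeling and discarding. Using public randomness Alice and Bob sample $\vec k\sim p_{\vec k}$; conditioned on a typical $\vec k$ they apply a classically controlled local permutation placing one copy of $\gamma(\psi_{k_j})$ into output slot $j$ (possible since $N_k(\vec k)\le N_k^{\max}$), route the surplus copies to a trash register, and trace out both the trash and the shared label $\vec k$. Because each $\gamma(\psi_k)$ is a fixed bipartite state and reorderings of tensor factors are local unitaries, the output conditioned on $\vec k$ is exactly $\gamma(\psi_{\vec k})$, and discarding $\vec k$ reconstructs $\sum_{\vec k}p_{\vec k}\gamma(\psi_{\vec k})=(\rho^{\otimes n})^{\otimes M}$ on the typical event, while the atypical event adds at most $\Pr[\text{atypical}]$ to the error. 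This yields $\tfrac1M K_C^{\varepsilon(M)}((\rho^{\otimes n})^{\otimes M})\le K_F(\rho^{\otimes n})+\kappa+o(1)$ with $\varepsilon(M)\to0$. Dividing by $n$, filling in the non-multiples of $n$ with a constant additive cost via $K_C^{\varepsilon_1+\varepsilon_2}(\sigma_1\otimes\sigma_2)\le K_C^{\varepsilon_1}(\sigma_1)+K_C^{\varepsilon_2}(\sigma_2)$, and then letting $\kappa\to0$ gives $\limsup_N \tfrac1N K_C^{\varepsilon}(\rho^{\otimes N})\le \tfrac1n K_F(\rho^{\otimes n})$ for every fixed $\varepsilon$ and every $n$. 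Taking the supremum over $\varepsilon$ and the infimum over $n$, and using subadditivity of $K_F$ so that $\inf_n\tfrac1n K_F(\rho^{\otimes n})=K_F^\infty(\rho)$ by Fekete's lemma, delivers $K_C(\rho)\le K_F^\infty(\rho)$.

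I expect the main obstacle to be the bookkeeping that lets one \emph{fixed} private state of key size $\approx MK_F$ serve all random branches $\vec k$ simultaneously: the one-shot cost $K_C^{\varepsilon}$ forbids a resource chosen after seeing $\vec k$, so the protocol must over-provision to the worst typical count $N_k^{\max}$ and discard, and one must check that dropping surplus copies together with the classical label leaks nothing that would alter the identity of the output state. This is precisely where the no-leakage guarantee engineered into Theorem~\ref{thm:transformation} is invoked, together with the observation that the cost constrains only the $AB$-marginal. The genuinely hard analytic work, namely diluting a \emph{mixed} generalized private state without disclosing its key to the purifying system, is already sealed inside Theorem~\ref{thm:transformation}; what remains here is the probabilistic reduction from the convex-roof structure of $K_F$ to that per-component statement, whose only delicate points are the concentration estimate controlling $N_k^{\max}$ and the error-accumulation accounting across the $K$ blocks and the typicality split.
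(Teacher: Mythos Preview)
Your proposal is correct and follows essentially the same route as the paper: fix a near-optimal finite GSIR decomposition (Lemma~\ref{lem:finite}), over-provision a single SIR private state to the maximal typical counts, run the dilution protocol of Theorem~\ref{thm:transformation} blockwise, then sample, permute, and discard to reconstruct the i.i.d.\ mixture. The only organizational difference is in the regularization: the paper first proves $K_C(\rho)\le K_F(\rho)$ and then invokes additivity $K_C(\rho^{\otimes t})=tK_C(\rho)$ (their Lemma~\ref{lem:k-c-additive}), whereas you bound $K_C(\rho)\le\tfrac1n K_F(\rho^{\otimes n})$ directly by filling in non-multiples via subadditivity of $K_C^{\varepsilon}$ and then apply Fekete to $K_F$---both are valid and amount to the same bookkeeping.
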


\begin{proof}
Let $\{(p_i,\gamma(\psi_i))\}_{i=1}^{k}$ be an optimal ensemble for~$K_F$.  For this part of the proof, we assume that $K_F$ is attained on a finite ensemble with $k$ elements. By Lemma~\ref{lem:finite}, we have that $k\leq (|{\cal A}|\times |{\cal B}|)^2+1$. The case when it is not attained will be considered later as a small modification of what follows.
To prove the inequality, we first note,  following~\cite{Hayden_2001}, that
by typicality arguments, for all $\varepsilon, \delta_0 >0$ and sufficiently large $n$, we have that
\begin{equation}
\begin{aligned}
\rho^{\otimes n} & \approx_{\varepsilon}\rho_n\\
& \equiv\sum_{{\bf s}\in T^{\delta_0}_n} p_{\bf s} \gamma(\psi_{{\bf s}[1]})\otimes \gamma(\psi_{{\bf s}[2]})\otimes \cdots \otimes \gamma(\psi_{{\bf s}[n]}). 
\end{aligned}
\label{eq:typical_version}
\end{equation}
The latter state is locally unitarily equivalent to
\begin{equation}
{\widetilde \rho}_n\equiv \sum_{{\bf s}\in T^{\delta_0}_n} p_{\bf s}\gamma(\psi_{\bf s}),
\label{eq:target}
\end{equation}
where the string ${\bf s}$ belongs to a strongly typical set $T^{\delta_0}_n$ with a property that it contains each state $\gamma(\psi_i)$, at least $\lfloor p_i n - \delta_0p_in \rfloor$ and at most $\lceil p_i n + \delta_0p_in \rceil$ times.
We note that $\delta_0$ can be arbitrarily small for sufficiently large $n$.

Let us now fix ${\bf s}\in T^{\delta_0}_n$. Then $\gamma(\psi_{\bf s})$ is such that each state $\gamma(\psi_i)$ in $\gamma(\psi_{\bf s})$ occurs  $l_i(n)\in \mathbb{N}$ times, which  satisfies 
\begin{equation}\lfloor p_in - \delta_0 p_i n\rfloor\leq l_i(n)  \leq \lceil np_i + \delta_0 p_i n \rceil .
\end{equation}
 We aim to create an approximate version of $\gamma(\psi_{\bf s})$.

We will do it using a sub-protocol that has access to a private state $\gamma_{d_n(i)}(\Phi^+)$  $\varepsilon_i$-related to $\gamma(\psi_i)$; this sub-protocol will create an $\varepsilon_i$-approximation of the latter state. For the sake of this proof, it suffices to set $\varepsilon_i = \varepsilon$. We also set $\delta_i>0$ arbitrarily for $i\in \{1,\ldots k\}$. As noted, the existence of $\gamma_{d_n(i)}(\Phi^+)$ is assured by Theorem~\ref{thm:transformation}. More precisely, given sufficiently large $n$ we will obtain sufficiently large $d_n(i)$ satisfying
\begin{equation}
\log d_n(i)=\lceil l_i(n)\times(S(A_K)_{\psi_i}+\eta_i + 2\delta_i)\rceil,
\end{equation} 
(where $\eta_i = \delta_iS(A_K)_{\psi_i}$) such that the Dilution Protocol  described in Section~\ref{sec:dilution} creates from $\gamma_{d_n(i)}(\Phi^+)$ a state that $\varepsilon_i$-approximates in trace norm $\gamma(\psi_i)^{\otimes l_i(n)}$. Let us denote this protocol as ${\cal P}_i$.

We note here that due to the definition of $K_F$, each $\gamma(\psi_i)$ is strictly irreducible. Hence by~\eqref{eq:resource_private_state}, $\gamma_{d_n(i)}(\Phi^+)$ is also a strictly irreducible private state. Thus, using ${\cal P}_i$ as a subroutine, we will create $\rho_{AB}$ from a strictly irreducible private state, as demanded by the definition of $K_C$.

Performing protocol ${\cal P}_i$ for each $i \in [k]$, by the triangle inequality, we obtain $\gamma_{\approx}(\psi_{\bf s})$ that is a $\theta \coloneqq  \sum_i \varepsilon_i$-approximation of the state
\begin{equation}
\bigotimes_{i=1}^k \gamma(\psi_i)^{\otimes l_i(n)}.
\end{equation}
Let us note here that the actual amount of key needed to create $\gamma_{\approx}(\psi_{\bf s})$ varies depending on the value of $\bf s$, due to fluctuations of the number of occurrences of states $\gamma(\psi_i)$, i.e., $l_i(n)$  in sequence $\bf s$.
However, by definition of $K_C$, we need 
a {\it single} private state from which 
we can create our target state given in~\eqref{eq:target}. We will therefore choose
a private state with an amount of key exceeding $K_F$ by a vanishing factor (proportional to $\delta_0 \times K_F\leq \delta_0 d_{AB}$ by a vanishing constant) which will be ready for the worst-case (the largest) amount of key needed to create $\gamma_{\approx}(\psi_{\bf s})$. Namely, we will define a private state which
is $O(\epsilon)$-related to $\gamma(\psi_{\bf s})$ for {\it every} $\delta_0$-strongly typical $\bf s$.

To achieve this task, we choose 
${\gamma}_{d_n^{*}}(\Phi^+)$ to be a private state $\theta$-related to a generalized private state of the form 
\begin{equation}\gamma^+\coloneqq \bigotimes_{i=1}^{k}\gamma(\psi_i)^{\otimes l_i(n)^+}.
\end{equation} 
The dimension $d_n^{*}$ of the key part of ${\gamma}_{d_n^{*}}(\Phi^+)$ satisfies
\begin{equation}
 \log_2  d_n^{*}\coloneqq  \sum_{i=1}^{k} \lceil l_i(n)^{+}\times (S(A_K)_{\psi_i} + \eta_i + 2\delta_i)\rceil,
\end{equation}
where
\begin{equation}
l_i(n)^+\coloneqq  \lceil n p_i  +\delta_0p_in\rceil.
\end{equation}

We focus on this case, as the above state uses the largest private key. Intuitively, it can allow for the creation of $\gamma_{\approx}(\psi_{\bf s})$ for any other $\delta_0$-typical ${\bf s} \neq {\bf s}_0$ by performing partial trace operations if less key is needed to produce $\gamma(\psi_{\bf s})$. 

To finalize the procedure of creating an approximate version of $\gamma(\psi_{\bf s})$, we need to permute locally (in the same way for Alice and Bob) subsystems of $\widetilde{\gamma}_{\approx}(\psi_{\bf s})$ by some permutation $\pi_{
{\bf s}}$, defined as
\begin{align}
{\bf s}_0\equiv(\underbrace{1\hdots 1}_{\text{$l_1(n)$ times }}, \underbrace{2\hdots 2}_{\text{$l_2(n)$ times }},\hdots, \underbrace{k\hdots k}_{\text{$l_k(n)$ times }}) \rightarrow {\bf s}.
\end{align}

We are ready to formalize the protocol ${\cal P}$, which takes as input ${\gamma}_{d_n^{*}}(\Phi^+)$ and outputs an approximation of 
${\widetilde \rho}_n = \sum_{{\bf s} \in T^{\delta_0}_n} p_s\gamma(\psi_{\bf s})$. It 
consists of the following steps:
\begin{enumerate}
    \item Draw  ${\bf s}$ at random with probability $p_{\bf s}$ in order to
     produce an approximation of $\gamma(\psi_{{\bf s}})$ from the related private state $\gamma_{d_n^{*}}(\Phi^+)$.
\item For each $i\in[k]$, apply ${\cal P}_i$  to produce a $\theta$-approximation  $\gamma^+_{\approx}(\psi_{\bf s})$ of the state $\gamma^+$.
\item If ${\bf s}\neq {\bf s}_0$, trace out some of the subsystems of $\gamma^+_{\approx}(\psi_{\bf s})$ to obtain the $\theta$-approximation $\gamma_{\approx}(\psi_{\bf s})$ of 
\begin{equation}
\bigotimes_{i=1}^k \gamma(\psi_i)^{\otimes l_i(n)}.
\end{equation}
\item If ${\bf s}\neq{\bf s_0}$, apply the permutation $\pi_{\bf s}$ to the local subsystems of $\gamma_{\approx}(\psi_{\bf s})$
to create a $\theta$-approximation of  $\gamma(\psi_{\bf s})$.
\item Erase the symbol ${\bf s}$, producing a mixed state over typical labels ${\bf s}$.
\end{enumerate} 

By construction, the above protocol ${\cal P}$
creates a $\theta$-approximation of the state 
$\widetilde{\rho}_n$, which is locally unitarily equivalent to $\rho_n$ given in~\eqref{eq:typical_version}. Let $V\equiv V_A\otimes V_B$ denote the unitary transforming $\widetilde{\rho}_n$ into $\rho_n$. Now, by the triangle inequality, the construction of ${\cal P}$, and~\eqref{eq:typical_version}, we obtain 
\begin{align}
    & \left\|  V{\cal P}({\gamma}_{d_n^{*}}(\Phi^+))V^{\dagger} -\rho^{\otimes n}\right\|  _1 \notag \\
    & \leq \left \|   V{\cal P}({\gamma}_{d_n^{*}}(\Phi^+))V^{\dagger} - {\rho}_{n}\right \|  _1  
    + \left \|  \rho^{\otimes n} - {\rho}_{n}\right \|  _1 \\
    & \leq  \theta +\varepsilon,
\end{align}
for sufficiently large $n$. We note here that
without loss of generality $\varepsilon +\theta \leq (k+1)\varepsilon\leq \varepsilon(d_{AB}^2+1)$ (see Lemma~\ref{lem:finite}). 

Since $V$ is local, ${\cal P}'\equiv V\circ {\cal P}$ is an LOCC protocol that creates an approximation of $\rho_{AB}^{\otimes n}$, using $\log d_n^{*}$ bits of private key. Hence, its rate is an upper bound on~$K_C$. We now compute the latter rate. As we will see, it asymptotically reaches the value of $K_F(\rho_{AB})$.

In total, the rate of the protocol ${\cal P}'$ is
\begin{align}
    &\frac{\log_2 d_n^{*}}{n} \notag \\
    & = \frac{1}{n}\sum_{i=1}^{k} \left\lceil{l_i^+(n)} [S(A_K)_{\psi_i}+\eta_i + 2\delta_i]\right\rceil  \\
    & \leq \frac{1}{n}\sum_{i=1}^{k} {l_i^+(n)} [S(A_K)_{\psi_i}+\eta_i + 2\delta_i] + \frac{k}{n}\\
    & \leq \sum_{i=1}^{k} {\frac{l_i^+(n)}{n}} [S(A_K)_{\psi_i}+\eta_i + 2\delta_i] + \frac{d_{AB}^2+1}{n}\\
    &\leq \sum_{i=1}^{k} {p_i(1+\delta_0)} [S(A_K)_{\psi_i}+\eta_i + 2\delta_i]  \nonumber \\
    &\qquad+\frac{k[S(A_K)_{\psi_i}+\eta_i + 2\delta_i] }{n}
    + \frac{d_{AB}^2+1}{n}\\
    & \leq \sum_{i=1}^{k} {p_i(1+\delta_0)} [S(A_K)_{\psi_i}+\eta_i + 2\delta_i]  \nonumber \\
    &\qquad + \frac{(d_{AB}^2+1)(\log(d_{AB}) +3)}{n}
    + \frac{d_{AB}^2+1}{n}\\
    & \rightarrow_{n\rightarrow \infty}\sum_{i=1}^{k} p_i(1+\delta_0)[S(A_K)_{\psi_i} +2\delta_i +\eta_i] 
    \\
    & = (1+\delta_0)\sum_{i=1}^k p_i S(A_K)_{\psi_i} + (1+\delta_0)\sum_{i=1}^k p_i (\eta_i +2\delta_i) \\
    & = \sum_{i=1}^k p_i S(A_K)_{\psi_i} + \delta_0\sum_{i=1}^k p_i S(A_K)_{\psi_i} \nonumber \\
    & \qquad + (1+\delta_0)\sum_{i=1}^k p_i (\eta_i +2\delta_i)\\
     & = K_F(\rho_{AB}) + \delta_0\sum_{i=1}^k p_i S(A_K)_{\psi_i} \nonumber\\
     & \qquad + (1+\delta_0)\sum_{i=1}^k p_i (\eta_i +2\delta_i).
    \end{align}
 In the above, the first inequality is by an upper bound by $1$ to each $\lceil .\rceil$ term. Next, we bound $k$ from above by $d_{AB}^2+1$ by Lemma \ref{lem:finite}.
Further, we upper bound the $\lceil . \rceil$ in the definition of each $l_i^+(n)$ and upper bound the term $S(A_K)_{\psi_i} +\eta_i +2\delta_i$ by $\log d_{AB}+3$ by noting that local entropy cannot exceed the logarithm of the local dimension, and without loss of generality, $\eta_i,\delta_i$ can be considered as $\leq 1$. We finally take the limit of large $n$ and reorder terms.

We have proven then, that for all $\varepsilon>0$, there exists a sufficiently large $N \in {\mathbb N}$ such that for each $n>N$, a private state $\gamma_{d_n^{*}}(\Phi^+)$ and a LOCC protocol ${\cal P}'$ producing, by acting on the latter state, an output state that is at least $(d_{AB}^{2}+1)\varepsilon$-close  in trace norm to $\rho^{\otimes n}$. Moreover, since $K_C^{\varepsilon}(\rho_{AB}^{\otimes n})$ is by definition the infimum over such protocols, then $(1/n)K_C^{\varepsilon}(\rho_{AB}^{\otimes n}) \leq \frac{\log_2 d_n^{*}}{n}$. Combining it with the latter chain of inequalities, we obtain
\begin{align}
&\limsup_{n\to\infty} \frac{1}{n}K_C^{\varepsilon}(\rho_{AB}^{\otimes n}) \notag \\
& \leq \limsup_{n\to\infty} \frac{\log_2 d_n^{*}}{n} \\
& \leq
K_F(\rho_{AB}) + \delta_0\sum_{i=1}^k p_i S(A_K)_{\psi_i}  \nonumber\\
& \qquad + (1+\delta_0)\sum_{i=1}^k p_i (\eta_i +2\delta_i) .
\end{align}
This holds for arbitrarily small $\delta_i>0$ (so as $\eta_i$, since $\eta_i = \delta_iS(A_K)_{\psi_i}$) ($i=0,\ldots k$). Hence 
\begin{equation}
    \limsup_{n\to\infty} \frac{1}{n}K_C^{\varepsilon}(\rho_{AB}^{\otimes n})\leq K_F(\rho_{AB}).
\end{equation}
But this holds true for each $\varepsilon > 0$, and so $K_C(\rho_{AB})\leq K_F(\rho_{AB})$.

The latter inequality is proved under the assumption that $K_F$ is attained by an ensemble of $k$ generalized private states with $k$. If it is not the case, then $K_F$ is by the definition of the infimum is a limit of quantities $\sum_{i=1}^{k(m)}p_i^{(m)}S_{A_K}(\gamma(\psi_i)^{(m)})$ with $k(m)\leq d_{AB}^{2}+1$ (see Lemma~\ref{lem:finite}) and $m\rightarrow \infty$. We can then fix $\kappa>0$ arbitrarily small for which there exists $t_0$ such that $K_F(\rho_{AB}) +\kappa = \sum_{i=1}^{k(t_0)}p_i^{(t_0)}S_{A_K}(\gamma(\psi_i)^{(t_0)})$, and prove by the above argument that $K_F +\kappa\geq K_C$. Since this holds for every $\kappa >0$, we have $K_F\geq K_C$ in general (without assumption about the attainability of the infimum in the definition of $K_F$).

By analogy with~\cite{Hayden_2001}, the same argument as above proves $K_F(\rho^{\otimes t}_{AB})/t \geq K_C(\rho^{\otimes t}_{AB})/t$ for every finite, fixed positive integer $t \in \mathbb{Z}^+$.
Since $K_C$ is additive, i.e. $K_C(\rho^{\otimes t}_{AB}) = t K_C(\rho_{AB})$ (see Lemma~\ref{lem:k-c-additive}), then
\begin{align}
    \lim_{t\to\infty} \frac{1}{t} K_F(\rho^{\otimes t}_{AB}) & \geq \lim_{t\rightarrow \infty} \frac{1}{t} K_C(\rho^{\otimes t}_{AB}) \\
    & =
    \lim_{t\rightarrow \infty} \frac{1}{t} t K_C(\rho_{AB}) \\
    & =K_C(\rho_{AB}).
\end{align}
So we have proved that $K^{\infty}_F(\rho_{AB})\geq K_C(\rho_{AB})$.
We have used above the fact, that $K_C$ is additive on tensor product of a bipartite state. We prove this below, partially in the spirit of the proof for distillable entanglement given in~\cite{Eisert2022}.
\end{proof}

\begin{lemma}\label{lem:k-c-additive}
$K_C$ is additive; i.e., for every positive integer $r \in \mathbb{Z}^+$,
    \begin{equation}
        K_C(\rho^{\otimes r}) = r K_C(\rho).
    \end{equation}
\end{lemma}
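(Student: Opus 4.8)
The plan is to prove the two inequalities $K_C(\rho^{\otimes r}) \le r\,K_C(\rho)$ and $K_C(\rho^{\otimes r}) \ge r\,K_C(\rho)$ separately, after rewriting both sides in terms of the single-shot rate
\begin{equation}
g(\varepsilon,N) \coloneqq \tfrac{1}{N} K_C^{\varepsilon}(\rho^{\otimes N}).
\end{equation}
Unfolding the definition of the asymptotic key cost gives $K_C(\rho^{\otimes r}) = r\sup_{\varepsilon}\limsup_{n}g(\varepsilon,rn)$ and $r\,K_C(\rho)=r\sup_{\varepsilon}\limsup_{N}g(\varepsilon,N)$, so that everything reduces to comparing a $\limsup$ taken along the subsequence $\{rn\}_n$ with the $\limsup$ taken along all of $\mathbb{N}$.

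For the direction ``$\le$'' nothing operational is needed: since $\{rn\}_{n}$ is a subsequence of $\{N\}_{N}$, we have $\limsup_{n}g(\varepsilon,rn)\le \limsup_{N}g(\varepsilon,N)$ for every fixed $\varepsilon$, and taking $\sup_{\varepsilon}$ preserves the inequality, yielding $K_C(\rho^{\otimes r}) \le r\,K_C(\rho)$.

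For ``$\ge$'' I would first record a monotonicity lemma for the single-shot key cost: if $N\le M$ then $K_C^{\varepsilon}(\rho^{\otimes N})\le K_C^{\varepsilon}(\rho^{\otimes M})$. Indeed, given any strictly irreducible private state $\gamma_{d_k,d_s}$ and any $\mathcal{L}\in\LOCC$ with $\tfrac12\norm{\mathcal{L}(\gamma_{d_k,d_s})-\rho^{\otimes M}}_1\le\varepsilon$, the map $\Tr_{\mathrm{extra}}\circ\mathcal{L}$ that additionally discards the last $M-N$ copies is a local operation, hence again an element of $\LOCC$, and by the data-processing inequality for the trace distance it satisfies $\tfrac12\norm{(\Tr_{\mathrm{extra}}\circ\mathcal{L})(\gamma_{d_k,d_s})-\rho^{\otimes N}}_1\le\varepsilon$; taking the infimum over all admissible $(\mathcal{L},\gamma_{d_k,d_s})$ gives the claim. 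Applying this with $M=rn$ for $n\coloneqq\lceil N/r\rceil$ (so that $N\le rn\le N+r$) yields $g(\varepsilon,N)\le \tfrac{rn}{N}\,g(\varepsilon,rn)$.

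The proof then finishes by passing to the limit. As $N\to\infty$ the prefactor $rn/N\to 1$, and since $\limsup_{N}g(\varepsilon,N)\le K_C(\rho)\le E_C(\rho)<\infty$ the relevant quantities are finite; using the elementary fact that $\limsup_{n}c_nb_n=(\lim_n c_n)(\limsup_n b_n)$ whenever $c_n\to 1$ and $b_n\ge 0$, one obtains $\limsup_{N}g(\varepsilon,N)\le \limsup_{n}g(\varepsilon,rn)$. Taking $\sup_{\varepsilon}$ gives $K_C(\rho)\le \tfrac1r K_C(\rho^{\otimes r})$, i.e.\ the reverse inequality, and the two bounds together establish the lemma. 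The only step carrying genuine content is the monotonicity lemma, where one uses that discarding copies is free under LOCC and that the trace distance contracts under channels; the remainder is bookkeeping with the subsequence $\{rn\}$ and with the vanishing discretization factor $rn/N\to 1$.
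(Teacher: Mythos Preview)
Your proposal is correct and follows essentially the same route as the paper: both reduce the claim to showing that $\limsup_{N}\tfrac{1}{N}K_C^{\varepsilon}(\rho^{\otimes N})=\limsup_{n}\tfrac{1}{rn}K_C^{\varepsilon}(\rho^{\otimes rn})$, use the trivial subsequence bound for one direction, and for the other direction invoke the monotonicity $K_C^{\varepsilon}(\rho^{\otimes N})\le K_C^{\varepsilon}(\rho^{\otimes M})$ for $N\le M$ obtained by composing an optimal dilution protocol with a partial trace over the extra copies. The paper packages the second direction as a proof by contradiction while you argue directly via the vanishing discretization factor $rn/N\to 1$; the content is the same.
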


\begin{proof}
    \begin{align}
        K_{C}(\rho^{\otimes r}) & =  \sup_{\varepsilon \in (0,1)} \limsup_{n\rightarrow\infty} \frac{1}{n} K_{C}^{\varepsilon}(\rho^{\otimes nr}) \\
        & = 
         r \sup_{\varepsilon \in (0,1)} \limsup_{n\rightarrow\infty} \frac{1}{rn} K_{C}^{\varepsilon}(\rho^{\otimes nr}) \\
        & = r K_{C}(\rho).
    \end{align}
    The second equality holds by the next lemma.
\end{proof}
\begin{lemma}
    For each $r \in \mathbb{Z}^+$, the following equality holds
    \begin{multline}
        \sup_{\varepsilon \in (0,1)} \limsup_{n\rightarrow\infty} \frac{1}{n} K_{C}^{\varepsilon}(\rho^{\otimes n}) \\
        = \sup_{\varepsilon \in (0,1)} \limsup_{n\rightarrow\infty} \frac{1}{rn} K_{C}^{\varepsilon}(\rho^{\otimes nr}).
    \end{multline}
\end{lemma}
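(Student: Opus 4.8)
The plan is to prove the equality for each fixed $\varepsilon \in (0,1)$ and then take the supremum over $\varepsilon$, since taking $\sup$ preserves an equality valid termwise. Fix $\varepsilon$ and abbreviate $a_m \coloneqq \frac{1}{m} K_C^{\varepsilon}(\rho^{\otimes m})$. The right-hand side is $\sup_{\varepsilon}\limsup_{n} a_{rn}$, and since $(a_{rn})_n$ is a subsequence of $(a_m)_m$ one immediately has $\limsup_n a_{rn} \le \limsup_m a_m$; taking $\sup_{\varepsilon}$ gives the inequality in which the right-hand side is at most the left-hand side. The content of the lemma is therefore the reverse inequality, for which I would first isolate a monotonicity property of the one-shot key cost.

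First I would show that $K_C^{\varepsilon}(\rho^{\otimes n}) \le K_C^{\varepsilon}(\rho^{\otimes m})$ whenever $n \le m$. Let $\gamma_{d_k,d_s} \in \operatorname{SIR}$ and $\mathcal{L} \in \operatorname{LOCC}$ satisfy $\frac{1}{2}\norm{\mathcal{L}(\gamma_{d_k,d_s}) - \rho^{\otimes m}}_1 \le \varepsilon$, and let $\mathcal{T}$ be the LOCC channel that traces out the $A$ and $B$ systems of $m-n$ of the copies. Then $\mathcal{T}\circ\mathcal{L} \in \operatorname{LOCC}$, $\mathcal{T}(\rho^{\otimes m}) = \rho^{\otimes n}$, and by the data-processing inequality for the trace distance $\frac{1}{2}\norm{(\mathcal{T}\circ\mathcal{L})(\gamma_{d_k,d_s}) - \rho^{\otimes n}}_1 \le \varepsilon$. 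Hence $(\mathcal{T}\circ\mathcal{L}, \gamma_{d_k,d_s})$ is admissible in the definition of $K_C^{\varepsilon}(\rho^{\otimes n})$ with the same key size $\log_2 d_k$; taking the infimum over all admissible protocols for $\rho^{\otimes m}$ establishes the monotonicity.

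Next I would interpolate between a general $n$ and the nearest multiple of $r$ above it. Put $n' \coloneqq \lceil n/r \rceil$, so that $n \le n'r < n + r$. Monotonicity gives $K_C^{\varepsilon}(\rho^{\otimes n}) \le K_C^{\varepsilon}(\rho^{\otimes n'r})$, i.e. $a_n \le \frac{n'r}{n}\, a_{n'r}$. Because $r$ is fixed, $\frac{n'r}{n} \to 1$ as $n\to\infty$; and $a_m$ is uniformly bounded above (for instance $K_C^{\varepsilon}(\rho^{\otimes m}) \le m \log_2 \min\{|A|,|B|\}$, since $\rho^{\otimes m}$ can be created exactly by teleportation from a maximally entangled state, which is strictly irreducible). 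Consequently $\limsup_n \frac{n'r}{n}\, a_{n'r} = \limsup_n a_{n'r}$, and since $n' = \lceil n/r\rceil$ takes every positive integer value (each exactly $r$ times, consecutively) as $n$ grows, $\limsup_n a_{n'r} = \limsup_k a_{kr}$. Chaining these relations yields $\limsup_n a_n \le \limsup_k a_{kr}$, which is the reverse inequality; taking $\sup_{\varepsilon}$ on both sides completes the proof.

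The main obstacle is the $\limsup$ bookkeeping in the last step: one must verify that the vanishing correction factor $\frac{n'r}{n}-1$ genuinely drops out, which relies on the uniform upper bound on $a_m$ so that the $\limsup$ of the product equals the $\limsup$ of $a_{n'r}$ alone, and one must confirm that passing from all indices $m$ to the multiples of $r$ loses no ``large'' subsequence, i.e. that every residue class modulo $r$ is accounted for by the ceiling map $n\mapsto \lceil n/r\rceil r$. The monotonicity step itself is routine once the partial trace over surplus copies is recognized as an LOCC channel and combined with data processing.
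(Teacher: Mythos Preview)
Your proof is correct and rests on the same core observation as the paper's: the one-shot key cost is monotone in the number of copies, i.e.\ $K_C^{\varepsilon}(\rho^{\otimes n}) \le K_C^{\varepsilon}(\rho^{\otimes m})$ for $n\le m$, because tracing out surplus copies is an LOCC channel. The paper establishes this monotonicity in exactly the same way (partial trace composed with a given protocol), so the key lemma is identical.

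Where the two arguments diverge is in the $\limsup$ analysis. The paper proceeds by contradiction: it posits a strict gap $M$ between the two limits superior, extracts a subsequence $(\overline{n}_k)$ on which $\frac{1}{\overline{n}_k}K_C^{\varepsilon}(\rho^{\otimes \overline{n}_k})>M$, rounds each $\overline{n}_k$ up to a nearby multiple $\underline{n}_k$ of $r$, and then uses monotonicity together with a sandwich argument to force $\frac{1}{\overline{n}_k}K_C^{\varepsilon}(\rho^{\otimes \overline{n}_k})$ to converge below $M$. Your route is direct: you bound $a_n \le \tfrac{n'r}{n}\,a_{n'r}$ with $n'=\lceil n/r\rceil$, invoke the uniform bound $a_m \le \log_2\min\{|A|,|B|\}$ to discard the factor $\tfrac{n'r}{n}\to 1$, and observe that $(a_{\lceil n/r\rceil r})_n$ has the same $\limsup$ as $(a_{kr})_k$. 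Your argument is shorter and avoids the nested subsequence bookkeeping; the paper's version, while more laborious, does not require an a priori uniform bound on $a_m$ (it only needs the specific convergent subsequence to be bounded, which is automatic). Both are valid, and the essential content is the same.
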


\begin{proof}
    Let $\varepsilon>0$. Since $\{rn\}_n$ is a subsequence of a sequence $\{n\}_n$, then 
    \begin{equation}
        \limsup_{n\rightarrow\infty} \frac{1}{rn} K_{C}^{\varepsilon}(\rho^{\otimes nr}) \leq \limsup_{n\rightarrow\infty} \frac{1}{n} K_{C}^{\varepsilon}(\rho^{\otimes n}).
    \end{equation}
    We will show by contradiction that the converse inequality holds. Assume that there exists $M > 0$ such that
    \begin{align}
        \eta_{\operatorname{sup}}^r & \coloneqq \limsup_{n\rightarrow\infty} \frac{1}{rn} K_{C}^{\varepsilon}(\rho^{\otimes nr}) \\
        & < M < \limsup_{n\rightarrow\infty} \frac{1}{n} K_{C}^{\varepsilon}(\rho^{\otimes n}).
    \end{align}
    Let $0 < \varepsilon_1 < (M-\eta_{\operatorname{sup}}^r)/2$. 
    From the definition of the limit superior, there exists $N$ such that, for all $m\in \mathbb{Z}^+$ and $n \coloneqq  r\cdot m > N$,
    \begin{equation}
         \frac{1}{n} K_C^{\varepsilon}(\rho^{\otimes n}) \leq \eta_{\operatorname{sup}}^r + \varepsilon_1 < M.
    \end{equation}
    Let us denote as $\{\overline{n}_k\}_k$, with $\overline{n}_k > N$, a sequence of all indices such that
    \begin{equation}
        M < \frac{1}{\overline{n}_k} K_C^{\varepsilon}(\rho^{\otimes \overline{n}_k}).
    \end{equation}
    Observe that for each $\overline{n}_k$ there exists $s_k \leq r$ such that $\overline{n}_k + s_k = r\cdot l \eqqcolon \underline{n}_k > N$. This defines a subsequence $\{\underline{n}_k\}_k$ for which \begin{equation}
        \left|\frac{1}{\underline{n}_k} K_C^{\varepsilon}(\rho^{\otimes \underline{n}_k})- \eta_{\operatorname{sup}}^r\right| \leq \varepsilon_1.
    \end{equation}
    We can choose then a subsequence $\{\underline{n}_{k_l}\}_l$ such that 
    \begin{equation}
        \eta \coloneqq  \lim_{l\to\infty} \frac{1}{\underline{n}_{k_l}} K_C^{\varepsilon}(\rho^{\otimes \underline{n}_{k_l}})  \leq \eta_{\operatorname{sup}}^r + \varepsilon_1 < M,
    \end{equation}
    and also choose the corresponding subsequence $\{\overline{n}_{k_l}\}_l$. For simplicity, from now on we will denote $\{\underline{n}_{k_l}\}_l$ as $\{i_k\}$ and $\{\overline{n}_{k_l}\}_l$ as $\{n_k\}_k$.\\
    
    Now we summarize the current construction and reduce redundancy in the notation. We chose two precisely prepared subsequences $\{n_k\}_k$, $\{i_k\}_k$ such that
    \begin{align}
        &\forall_k \, i_k = n_k + s_k, \quad s_k\leq r \label{eq:i-k-is-n-k-plus-1}, \\
        &\forall_k \, \frac{1}{i_k} K_{C}^{\varepsilon}(\rho^{\otimes i_{k}}) < \eta_{\operatorname{sup}}^r+\varepsilon_1 < M < \frac{1}{n_k} K_{C}^{\varepsilon}(\rho^{\otimes n_{k}}), \\
        \label{eq:kc-i-k-less-kc-n-k}
        &\lim_{k\to \infty} \frac{1}{i_k} K_{C}^{\varepsilon}(\rho^{\otimes i_{k}}) \textrm{ (denoted as $\eta$) } \leq \eta_{\operatorname{sup}}^r+\varepsilon_1.
    \end{align}
        Now we can write $\rho^{\otimes i_k} = \rho^{\otimes n_k}\otimes \rho^{\otimes s_k}$. Since $K_C^{\varepsilon}$ is an infimum over $\Lambda\in$ LOCC and $\gamma_{d}\in \IR$, then for each $\delta_{i_k}>0$ there exist $\Lambda_{i_k}\in$ LOCC, $\gamma_{d_{i_k}}$ such that $\Lambda_{i_k}(\gamma_{d_{i_k}}) \approx_{\varepsilon} \rho^{\otimes i_k}$ and
    \begin{equation}
        \frac{1}{i_k} K_{C}^{\varepsilon}(\rho^{\otimes i_k}) + \delta_{i_k} \geq \frac{\log d_{i_k}}{i_k}.
    \end{equation}
    But $\Lambda_{i_k}$, composed with a partial trace that will trace out $s_k$ redundant systems, is an LOCC protocol $\Lambda_{n_k}$ such that $\Lambda_{n_k}(\gamma_{d_{i_k}}) \approx_{\varepsilon} \rho^{\otimes n_k}$. Since it is a particular protocol, then we have
    \begin{equation}
        \frac{\log d_{i_k}}{n_k} \geq \frac{1}{n_k} K_{C}^{\varepsilon}(\rho^{\otimes n_k}).
    \end{equation}
    Combining above inequalities we obtain
    \begin{equation}
        K_{C}^{\varepsilon}(\rho^{\otimes i_k}) + i_k\delta_{i_k} \geq \log d_{i_k} \geq K_{C}^{\varepsilon}(\rho^{\otimes n_k}).
    \end{equation}
    But this holds true for all $\delta_{i_k} > 0$, so we conclude that 
    \begin{align}
        K_{C}^{\varepsilon}(\rho^{\otimes i_k}) \geq K_{C}^{\varepsilon}(\rho^{\otimes n_k})\nonumber\\
        \frac{1}{n_k}K_{C}^{\varepsilon}(\rho^{\otimes i_k}) \geq \frac{1}{n_k}K_{C}^{\varepsilon}(\rho^{\otimes n_k}),\label{eq:kc-eps-i-k-geq-kc-eps-i-k}
    \end{align}
    which is true for all $k$. Combining~\eqref{eq:kc-i-k-less-kc-n-k},~\eqref{eq:kc-eps-i-k-geq-kc-eps-i-k} and~\eqref{eq:i-k-is-n-k-plus-1} we obtain
    \begin{equation}
        \frac{1}{i_k-s_k}K_{C}^{\varepsilon}(\rho^{\otimes i_k}) \geq \frac{1}{n_k}K_{C}^{\varepsilon}(\rho^{\otimes n_k}) > \frac{1}{i_k}K_{C}^{\varepsilon}(\rho^{\otimes i_k}).
    \end{equation}
    It only remains to observe that
    \begin{multline}
        \left(\frac{1}{i_k-s_k}K_{C}^{\varepsilon}(\rho^{\otimes i_k}) - \frac{1}{i_k}K_{C}^{\varepsilon}(\rho^{\otimes i_k})\right) = \\
        \frac{s_k}{i_k-s_k}\left(\frac{1}{i_k}K_{C}^{\varepsilon}(\rho^{\otimes i_k})\right) \to 0\cdot\eta = 0
    \end{multline}
    as $k\to\infty$. So by the sandwich rule
    \begin{align}
        &\lim_{k\to\infty} \frac{1}{n_k}K_{C}^{\varepsilon}(\rho^{\otimes n_k}) \leq \eta_{\operatorname{sup}}^r + \varepsilon_1 < M \\
        &< \frac{1}{n_k}K_{C}^{\varepsilon}(\rho^{\otimes n_k}) \quad \forall_k,
    \end{align}
    which is a contradiction. So 
    \begin{equation}
        \limsup_{n\rightarrow\infty} \frac{1}{rn} K_{C}^{\varepsilon}(\rho^{\otimes nr}) \geq \limsup_{n\rightarrow\infty} \frac{1}{n} K_{C}^{\varepsilon}(\rho^{\otimes n})
    \end{equation}
    for each $\varepsilon > 0$.
\end{proof}

\begin{remark} \label{rem:broaden-Kf-definition}
    As noted in Remark \ref{rem:broaden-Kc-definition}, the existence of entangled but zero distillable key states would result in a redefinition of the key cost. Similarly, $K_F$ would change to a smaller quantity $K_F'$. Still, however, the following meaningful bound $K_C' \leq K_C \leq K^{\infty}_F$, resulting from our findings, would hold.
\end{remark}

\section{Key cost of generalized private states}\setcurrentname{Key cost of generalized private states}
\label{sec:key_cost_of_generalized_private_states}

In this section, we prove several lemmas,
which allow us to conclude that the key
of formation, key cost, and distillable key are all equal for (irreducible) generalized private states.
We begin with a series of lemmas,
which allow us to prove the main result of this section, that is, Theorem~\ref{thm:key_cost_for_private}.

\begin{lemma}
\label{lem:local_entropies}
For $\gamma(\psi)$  a generalized private state, the following equality holds
\begin{equation}
S_{A_K}[\gamma(\psi)]=S_{A_K}(\psi).
\end{equation}
\end{lemma}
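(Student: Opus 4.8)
The plan is to compute the reduced state $\gamma(\psi)_A \coloneqq \operatorname{Tr}_{BA'B'}[\gamma(\psi)]$ explicitly and show that it coincides with the marginal $\psi_A \coloneqq \operatorname{Tr}_B[\psi]$ of the pure state, after which the equality of entropies is immediate. First I would expand the generalized private state using the controlled-unitary twisting $\widetilde{\tau}$ and $\psi_{AB}=\sum_{i,j}\sqrt{\mu_i\mu_j}\,|e_if_i\rangle\!\langle e_jf_j|$, obtaining
\begin{equation}
\gamma(\psi) = \sum_{i,j=0}^{d_k-1}\sqrt{\mu_i\mu_j}\,|e_if_i\rangle\!\langle e_jf_j|_{AB}\otimes U_i\rho_{A'B'}U_j^\dagger.
\end{equation}

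The key step is to trace out Bob's key system $B$ before anything else. Because the twisting is controlled in the Schmidt bases $\{|e_i\rangle\}_i$ and $\{|f_i\rangle\}_i$, the Bob-key factor of the $(i,j)$ term is $|f_i\rangle\!\langle f_j|$, and the orthonormality $\langle f_k|f_i\rangle=\delta_{ki}$ forces every off-diagonal ($i\neq j$) contribution to vanish upon tracing over $B$. Hence only the diagonal terms survive, and
\begin{equation}
\operatorname{Tr}_{B}[\gamma(\psi)] = \sum_{i=0}^{d_k-1}\mu_i\,|e_i\rangle\!\langle e_i|_A\otimes U_i\rho_{A'B'}U_i^\dagger.
\end{equation}
I would then trace out the shield systems $A'B'$ and use that each $U_i$ is unitary, so that $\operatorname{Tr}[U_i\rho_{A'B'}U_i^\dagger]=\operatorname{Tr}[\rho_{A'B'}]=1$, which yields $\gamma(\psi)_A=\sum_i \mu_i|e_i\rangle\!\langle e_i|_A$.

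Finally I would note that the marginal of the pure state is $\psi_A=\operatorname{Tr}_B[\psi]=\sum_i\mu_i|e_i\rangle\!\langle e_i|_A$, which is identical to $\gamma(\psi)_A$; since the von Neumann entropy is a function of the state alone, $S_A[\gamma(\psi)]=-\sum_i\mu_i\log\mu_i=S_A(\psi)$. There is no genuine obstacle here beyond careful bookkeeping of which systems are traced out; the only conceptual point worth emphasizing is that the cancellation of the cross terms relies entirely on the twisting being controlled in the Schmidt basis of $\psi$, and that the separability or structure of the shield $\rho_{A'B'}$ plays no role whatsoever, so the statement holds for every generalized private state.
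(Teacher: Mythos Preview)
Your proof is correct. It takes a different, more explicitly computational route than the paper: you expand $\gamma(\psi)$ in the Schmidt basis and trace out $B$, then $A'B'$, step by step, verifying that the off-diagonal terms vanish and the diagonal shield contributions integrate to $1$. The paper instead observes that on the support of $\psi\otimes\rho_{A'B'}$ the twisting $\widetilde{\tau}$ acts identically to the unitary $\widetilde{U}=\sum_i|f_i\rangle\!\langle f_i|_B\otimes U_i^{A'B'}$, which touches only the systems $BA'B'$ being traced out; invariance of the partial trace under conjugation by such a unitary then yields $\Tr_{BA'B'}[\gamma(\psi)]=\Tr_{BA'B'}[\psi\otimes\rho_{A'B'}]=\Tr_B[\psi]$ in one stroke. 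Your direct computation makes the mechanism of cancellation completely explicit and requires no auxiliary observation, while the paper's argument is shorter and highlights the structural reason---the twisting never acts on $A$---which is reusable in other contexts. Both are fully rigorous and neither needs any hypothesis on $\rho_{A'B'}$, as you correctly note.
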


\begin{proof}
Consider that
\begin{align}
    S_{A_K}[\gamma(\psi)] & = S(\Tr_{B_KA_SB_S}[\gamma(\psi)]) \\
 & = S(\Tr_{B_KA_SB_S} [{\widetilde{U}}^{\dagger} \gamma(\psi) {\widetilde{U}}])\\
 & = S(\Tr_{B_KA_SB_S} [\psi \otimes \rho_{A_SB_S}]) \\
 &=S(\Tr_{B_K} [\psi]) \\
 & = S_{A_K}(\psi). 
\end{align}
In the above, the first equality is by definition. The second comes from the fact that the twisting $U$ is a unitary that can be substituted with one
acting only on systems $B_KA_SB_S$: $\widetilde{U}=\sum_{i}|i\>\!\<i|_{B_K} \otimes U_i^{A_SB_S}$ in the case of generalized private states
(for this idea, see~\cite{Horodecki_smallD}), 
and because the trace does not depend on the basis in which it is performed (here, the trace over systems $B_KA_SB_S$ does not depend on the basis $\widetilde{U}^{\dagger}$ which is defined on systems $B_KA_SB_S$ only). The last equality follows from the definition of the generalized private state. 
\end{proof}

It was claimed, yet not argued explicitly in~\cite{HorLeakage}, that generalized private states are irreducible when the conditional states $U_i \rho_{A_SB_S} U_i^{\dagger}$ are separable for each $i \in [d_k]$. We provide an explicit argument below.
\begin{lemma}
    \label{lem:key_for_gamma}
    A strictly irreducible generalized private state $\gamma(\psi)$ 
    satisfies 
    \begin{equation}   K_D(\gamma(\psi))=S_{A_K}(\psi)=E_R(\gamma(\psi))=E_R^{\infty}(\gamma(\psi)).
    \end{equation}
\end{lemma}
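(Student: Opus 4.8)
The plan is to prove the whole chain of equalities by sandwiching the number $S_A(\psi)=-\sum_i\mu_i\log_2\mu_i=H(\{\mu_i\})$, the Shannon entropy of the Schmidt coefficients of $|\psi\>=\sum_i\sqrt{\mu_i}|e_if_i\>$, between a lower bound on the distillable key and an upper bound on the relative entropy of entanglement, and then closing the loop with the standard inequalities $K_D(\gamma(\psi))\le E_R^\infty(\gamma(\psi))\le E_R(\gamma(\psi))$. The second of these follows from subadditivity of $E_R$ together with Fekete's lemma, and the first is the known fact (see \cite{pptkey,keyhuge,Christandl-phd}) that the regularized relative entropy of entanglement upper bounds the distillable key secure against a quantum adversary. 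Thus it suffices to establish $S_A(\psi)\le K_D(\gamma(\psi))$ and $E_R(\gamma(\psi))\le S_A(\psi)$, which together force all four quantities to coincide.

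For the lower bound I would fix a purification $|\phi\>_{A_SB_SE}$ of the shield state $\rho_{A_SB_S}$ and note that $|\Gamma\>=\sum_i\sqrt{\mu_i}\,|e_if_i\>_{A_KB_K}\otimes(U_i\otimes\id_E)|\phi\>$ purifies $\gamma(\psi)$, so $E$ models the quantum adversary. Alice and Bob measure $A_K$ and $B_K$ in the Schmidt bases $\{|e_i\>\}$ and $\{|f_i\>\}$; orthogonality collapses the state onto perfectly correlated outcomes $i$, each with probability $\mu_i$, leaving shield-plus-adversary in $(U_i\otimes\id_E)|\phi\>\<\phi|(U_i^\dag\otimes\id_E)$. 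The crucial point is that tracing out the shields $A_SB_S$ removes all dependence on $U_i$, since $\Tr_{A_SB_S}[(U_i\otimes\id_E)|\phi\>\<\phi|(U_i^\dag\otimes\id_E)]=\Tr_{A_SB_S}[|\phi\>\<\phi|]=\rho_E$ independently of $i$. Hence the resulting classical-classical-quantum state factorizes as $\big(\sum_i\mu_i|ii\>\<ii|_{XY}\big)\otimes\rho_E$: a perfectly correlated key with distribution $\{\mu_i\}$ completely decoupled from the adversary. Applying this to $\gamma(\psi)^{\otimes n}$ and performing classical privacy amplification on the secure (but non-uniform) source extracts $n(H(\{\mu_i\})-o(1))$ uniform secret bits, giving $K_D(\gamma(\psi))\ge H(\{\mu_i\})=S_A(\psi)$.

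For the upper bound I would exhibit a single separable state that already achieves the target value, namely the ``dephased'' operator
\begin{equation}
\sigma\coloneqq\sum_i\mu_i\,|e_if_i\>\<e_if_i|_{A_KB_K}\otimes U_i\rho_{A_SB_S}U_i^\dag,
\end{equation}
obtained from $\gamma(\psi)$ by deleting the key coherences $i\neq j$. This is exactly where strict irreducibility enters: each $|e_i\>\<e_i|_{A_K}\otimes|f_i\>\<f_i|_{B_K}$ is product and each $U_i\rho_{A_SB_S}U_i^\dag\in\SEP(A_S\!:\!B_S)$, so every term, and hence $\sigma$, lies in $\SEP((A_KA_S)\!:\!(B_KB_S))$. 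Since $\gamma(\psi)$ is unitarily related to $\psi\otimes\rho_{A_SB_S}$ with $\psi$ pure, one has $S(\gamma(\psi))=S(\rho_{A_SB_S})$, and because $\log_2\sigma$ is block diagonal in the orthogonal key sectors, only the diagonal blocks $\mu_iU_i\rho_{A_SB_S}U_i^\dag$ of $\gamma(\psi)$ contribute to $\Tr[\gamma(\psi)\log_2\sigma]$. A short computation using unitary invariance of $S(\rho_{A_SB_S})$ gives
\begin{align}
\Tr[\gamma(\psi)\log_2\sigma]
&=\sum_i\mu_i\big(\log_2\mu_i-S(\rho_{A_SB_S})\big)\nonumber\\
&=-H(\{\mu_i\})-S(\rho_{A_SB_S}),
\end{align}
so that $D(\gamma(\psi)\Vert\sigma)=-S(\rho_{A_SB_S})+H(\{\mu_i\})+S(\rho_{A_SB_S})=H(\{\mu_i\})=S_A(\psi)$, whence $E_R(\gamma(\psi))\le S_A(\psi)$. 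One should also check $\supp(\gamma(\psi))\subseteq\supp(\sigma)$, which holds since each support vector $|e_if_i\>\otimes U_i|\xi\>$ with $|\xi\>\in\supp(\rho_{A_SB_S})$ sits in the $i$-th block of $\sigma$.

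I expect the main obstacle to be the lower bound rather than the arithmetic of the upper bound: one must argue carefully that the post-measurement state is genuinely secure (the decoupling identity $\Tr_{A_SB_S}[(U_i\otimes\id_E)|\phi\>\<\phi|(U_i^\dag\otimes\id_E)]=\rho_E$ is the heart of the matter) and that privacy amplification of a perfectly decoupled, non-uniform source indeed realizes the secret-key rate $H(\{\mu_i\})$ in the operational sense defining $K_D$. Everything else is either a direct computation or an appeal to the cited monotonicity and subadditivity results.
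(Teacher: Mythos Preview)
Your proposal is correct and follows essentially the same strategy as the paper: sandwich via $K_D\le E_R^\infty\le E_R$, prove $K_D(\gamma(\psi))\ge S_A(\psi)$ by measuring the key part and using that Eve is decoupled (the paper phrases this as the Devetak--Winter rate $I(A;B)-I(A;E)=I(A;B)$), and prove $E_R(\gamma(\psi))\le S_A(\psi)$ with the same dephased separable ansatz $\sigma=\sum_i\mu_i|e_if_i\>\!\<e_if_i|\otimes U_i\rho U_i^\dag$. The only cosmetic difference is that the paper computes $D(\gamma(\psi)\Vert\sigma)$ by first untwisting both arguments with $U^\dagger(\cdot)U$ and invoking $D(\psi\otimes\rho\Vert\sigma_k\otimes\rho)=D(\psi\Vert\sigma_k)=S_A(\psi)$, whereas you compute it directly via the block-diagonal structure of $\log_2\sigma$; both are valid and yield the same value.
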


\begin{proof}
Consider an arbitrary generalized private state $\gamma(\psi) = U(\psi_{A_KB_K}\otimes \rho_{A_SB_S})U^\dag$, with $U = \sum_{i}|ii\>\!\<ii|_{A_KB_K}\otimes U_i^{A_SB_S}$. We first show that $K_D(\gamma(\psi))\geq S_{A_K}(\psi)$ and then that $E_R(\gamma(\psi)) \leq S_{A_K}(\psi)$. Since $E_R\geq E^{\infty}_R\geq K_D$~\cite{pptkey,keyhuge}, the claim follows by proving these two inequalities.

To show that $K_D(\gamma(\psi)) \geq S_{A_K}(\psi)$, we note that, after the measurement of system~$A_K$ of $\gamma(\psi)$ in the computational basis,
the system $A_KB_K$ is in state $\rho_{A_KB_K}\coloneqq \sum_{i}\mu_i |ii\>\!\<ii|_{A_KB_K}$
such that $H(\{\mu_i\})= S_{A_K}(\psi)=I(A_K;B_K)_{\rho_{A_KB_K}}$.
Therefore, the Devetak--Winter protocol~\cite{Devetak_2005} applied to systems
$A_KB_K$ produces $I(A_K;B_K)-I(A_K;E) = I(A_K;B_K)$ of key, as 
$I(A_K;E)=0$ by the construction of the $\gamma(\psi_i)$.
This proves $K_D(\gamma(\psi))\geq S_{A_K}(\psi)$.

To show that $E_R(\gamma(\psi)) \leq S_{A_K}(\psi)$, let us choose a state
$\sigma_{A_KB_KA_SB_S} \coloneqq  \sum_i \mu_i |ii\>\!\<ii|_{A_KB_K}\otimes U_i \rho_{A_SB_S} U_i^{\dagger}$, which is, by the assumption of strict irreducibility of $\gamma(\psi)$, separable. We then observe that 
\begin{align}
    D(\gamma(\psi)\|  \sigma) & = D(\psi_{A_KB_K}\otimes \rho_{A_SB_S}\|  \sigma_{A_KB_K}\otimes \rho_{A_SB_S})\\
    & = D(\psi_{A_KB_K}\|  \sigma_{A_KB_K})\\
    & =S_{A_K}(\psi),
\end{align}
where the first equality
follows from the fact that the relative entropy is invariant 
under joint application of a unitary transformation $U^{\dagger}(\cdot)U$ to both its arguments. The second comes from the identity
\begin{equation}
D(\rho\otimes \sigma\|  \rho'\otimes \sigma) = D(\rho\|  \rho'),    
\end{equation}
and the last stems from a direct computation. By the above argument, we have by the definition of $E_R$ that $E_R\leq S_{A_K}(\psi)$ as the choice of $\sigma_{A_KB_KA_SB_S}$ may be suboptimal when taking the infimum of the relative entropy ``distance'' from separable states. 
\end{proof}

Due to Lemma~\ref{lem:key_for_gamma}, we can rephrase Definition~\ref{def:formation} as follows.
\begin{definition}
\label{def:k_f_k_d}
The key of formation of a state $\rho$ is defined as
\begin{equation}
K_F(\rho) \coloneqq \inf_{\left\{\sum_k p_k \gamma(\psi_k)=\rho\right\}} \sum_k p_k K_D(\gamma(\psi_k)),
\end{equation}
where the infimum is taken over strictly irreducible generalized private states.
\end{definition}

We are now ready to show that the key of formation of a strictly irreducible private state $\gamma(\psi)$ is equal to $S_{A_K}(\psi)$.

\begin{lemma}
\label{lem:K_F_for_gamma}
For every strictly irreducible generalized private state $\gamma(\psi)$,
\begin{equation}
K_F(\gamma(\psi)) = S_{A_K}(\psi).    
\end{equation}
\end{lemma}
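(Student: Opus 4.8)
The plan is to establish the claimed equality by proving the two inequalities $K_F(\gamma(\psi)) \leq S_A(\psi)$ and $K_F(\gamma(\psi)) \geq S_A(\psi)$ separately, and the whole argument rests on the structural facts already collected in Lemmas~\ref{lem:local_entropies} and~\ref{lem:key_for_gamma}, namely that for a strictly irreducible generalized private state the key-system entropy, the distillable key, and the relative entropy of entanglement all coincide.

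For the upper bound I would simply exhibit a trivial decomposition. Since $\gamma(\psi)$ is itself strictly irreducible, the single-element ensemble $\{(1,\gamma(\psi))\}$ is an admissible decomposition in the infimum defining $K_F$ (Definition~\ref{def:formation}). Hence $K_F(\gamma(\psi)) \leq S_{A_K}[\gamma(\psi)]$, and by Lemma~\ref{lem:local_entropies} the right-hand side equals $S_A(\psi)$, giving $K_F(\gamma(\psi)) \leq S_A(\psi)$.

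The lower bound is where the real content lies. I would take an \emph{arbitrary} decomposition $\rho \coloneqq \gamma(\psi) = \sum_k p_k \gamma(\psi_k)$ into strictly irreducible generalized private states and bound its cost from below. The key move is to pass from the convex-roof cost $S_{A_K}$ to the relative entropy of entanglement: by the rephrasing in Definition~\ref{def:k_f_k_d} together with Lemma~\ref{lem:key_for_gamma}, each member satisfies $S_{A_K}[\gamma(\psi_k)] = K_D(\gamma(\psi_k)) = E_R(\gamma(\psi_k))$. Now I invoke convexity of $E_R$ (which follows from the joint convexity of the relative entropy applied to the mixture $\sum_k p_k \sigma_k$ of near-optimal separable states for the $\gamma(\psi_k)$) in the correct direction, $\sum_k p_k E_R(\gamma(\psi_k)) \geq E_R\!\left(\sum_k p_k \gamma(\psi_k)\right) = E_R(\gamma(\psi))$, and then apply Lemma~\ref{lem:key_for_gamma} once more to identify $E_R(\gamma(\psi)) = S_A(\psi)$. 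Chaining these gives $\sum_k p_k S_{A_K}[\gamma(\psi_k)] \geq S_A(\psi)$ for every admissible decomposition, and taking the infimum yields $K_F(\gamma(\psi)) \geq S_A(\psi)$.

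The only genuine obstacle is the lower bound, and specifically recognizing that $S_{A_K}$ — which on its face is an entropic member-cost with no obvious comparison to the single-copy target — can be linearized into $E_R$ precisely on the class $\operatorname{GSIR}$, after which the superadditivity direction of convexity of $E_R$ does all the work. I would double-check that convexity of $E_R$ is used in the direction $\sum_k p_k E_R(\gamma(\psi_k)) \geq E_R(\sum_k p_k \gamma(\psi_k))$ (not the reverse), and that no attainability assumption on the infimum is needed, since the bound holds term-by-term for every decomposition. Combining both inequalities completes the proof that $K_F(\gamma(\psi)) = S_A(\psi)$.
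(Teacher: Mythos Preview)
Your proof is correct and follows essentially the same route as the paper: the upper bound via the trivial singleton decomposition, and the lower bound by converting the member-cost $S_{A_K}[\gamma(\psi_k)]$ into $E_R(\gamma(\psi_k))$ via Lemma~\ref{lem:key_for_gamma} and then using convexity of $E_R$ to collapse the average onto $E_R(\gamma(\psi))=S_A(\psi)$. The paper phrases the convexity step as ``$K_D$ is convex on mixtures of states satisfying $K_D=E_R$'' (justified precisely by convexity of $E_R$ and $E_R\geq K_D$), but this is the same argument you give, just routed through $K_D$ instead of directly through $E_R$.
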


\begin{proof}
First, by the definition of $K_F$, $\gamma(\psi)$ itself is a
 valid (singleton) decomposition into irreducible generalized private states.
Hence $K_F(\gamma(\psi))\leq S_{A_K}(\psi)$. On the other hand, for $\kappa >0$, let 
the  ensemble $\{p_i, \gamma(\psi_i)\}$ of $\gamma$ be $\kappa$-optimal, giving $\sum_ip_i S(A_K)_{\psi_i}=K_F+\kappa$. Then
\begin{align}
    K_F(\gamma)+\kappa & = \sum_i p_i K_D(\gamma(\psi_i)) \\
    & \geq K_D\!\left(\sum_i p_i \gamma(\psi_i)\right) \\
    & =
    K_D(\gamma) \\
    & = S_{A_K}(\psi),
\end{align}
where the inequality is due to the fact that $K_D$ is convex
on mixtures $\sum_i p_i \sigma_i$ of states satisfying
$K_D(\sigma_i)=E_R(\sigma_i)$ due to convexity of $E_R$ and the fact that it is an upper bound on $K_D$ (see Proposition 4.14 in~\cite{karol-PhD}). Since the above inequality is true for all $\kappa>0$ by the definition of $K_F$, we obtain the thesis.
\end{proof}

We will compute the entanglement
of formation and entanglement cost
for generalized private states.
In analogy to pure states, 
we will have $K_D(\gamma)=K_C(\gamma)=K_F(\gamma)$, as stated below. However, the key is not equal to the entropy of a whole subsystem of a bipartite state~$\gamma$
but the entropy of a part of the subsystem (system $A_K$ only).

We will need one more lemma to show the aforementioned equivalence. 
\begin{lemma}
\label{lem:kf_greater_than_key}
For every bipartite state $\rho_{AB}$, the following inequalities hold
\begin{equation} K_F(\rho_{AB})\geq K_F^{\infty}(\rho_{AB}) \geq K_D(\rho_{AB}).
\end{equation}
\end{lemma}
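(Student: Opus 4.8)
The plan is to prove the two inequalities by separate short arguments: the left one is a pure consequence of subadditivity of $K_F$, and the right one follows by comparing $K_F$ with the relative entropy of entanglement via Lemma~\ref{lem:key_for_gamma}.

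First I would dispatch $K_F(\rho_{AB}) \geq K_F^{\infty}(\rho_{AB})$. Subadditivity of the key of formation (established in Section~\ref{sec:subadditivity_of_key_of_formation}) gives $K_F(\rho^{\otimes(m+n)}) \leq K_F(\rho^{\otimes m}) + K_F(\rho^{\otimes n})$ for all $m,n \in \mathbb{Z}^+$, so the sequence $a_n \coloneqq K_F(\rho^{\otimes n})$ is subadditive. By Fekete's lemma the limit defining $K_F^{\infty}$ exists and satisfies $K_F^{\infty}(\rho) = \lim_{n\to\infty}\tfrac1n a_n = \inf_{n} \tfrac1n a_n \leq a_1 = K_F(\rho)$, which is exactly the left inequality.

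For the right inequality $K_F^{\infty}(\rho_{AB}) \geq K_D(\rho_{AB})$, I would first prove the pointwise bound $K_F(\sigma) \geq E_R(\sigma)$ for every bipartite state $\sigma$. Fix any decomposition $\sigma = \sum_k p_k \gamma(\psi_k)$ into strictly irreducible generalized private states. By Lemma~\ref{lem:key_for_gamma} each member satisfies $K_D(\gamma(\psi_k)) = E_R(\gamma(\psi_k))$, so, using the formulation of $K_F$ in Definition~\ref{def:k_f_k_d} together with convexity of $E_R$,
\[
\sum_k p_k K_D(\gamma(\psi_k)) = \sum_k p_k E_R(\gamma(\psi_k)) \geq E_R\!\left(\sum_k p_k \gamma(\psi_k)\right) = E_R(\sigma).
\]
Since this holds for every admissible decomposition, taking the infimum gives $K_F(\sigma) \geq E_R(\sigma)$ (the possible non-attainment of the infimum is harmless, as the bound holds term by term). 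Applying this with $\sigma = \rho^{\otimes n}$, dividing by $n$, and letting $n \to \infty$ yields $K_F^{\infty}(\rho) \geq E_R^{\infty}(\rho)$, where the limit on the right exists by subadditivity of $E_R$. The argument then closes with the standard upper bound $E_R^{\infty}(\rho) \geq K_D(\rho)$ on the distillable key \cite{pptkey,keyhuge}.

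The one step carrying real content is the pointwise comparison $K_F \geq E_R$, and the subtlety there is that the convexity argument works \emph{only} because every member of the decomposition is a strictly irreducible generalized private state, for which $K_D$ and $E_R$ coincide. This is precisely why the restriction to GSIR in the definition of $K_F$ is essential: the identical reasoning would fail for a convex roof of the (generally non-convex) distillable key taken over arbitrary states. Everything else — Fekete's lemma, convexity and subadditivity of $E_R$, and the bound $E_R^{\infty} \geq K_D$ — is routine or already available.
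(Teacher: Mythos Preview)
Your proof is correct, and for the second inequality it takes a genuinely different route from the paper. Both arguments obtain $K_F \geq K_F^{\infty}$ from subadditivity (Corollary~\ref{cor:additive}), so that part coincides. For $K_F^{\infty} \geq K_D$, the paper chains through the key cost: it invokes Theorem~\ref{thm:regularization} (the dilution protocol) to get $K_F^{\infty} \geq K_C$, and then Theorem~\ref{thm:second_low} (the one-shot yield--cost relation) in the asymptotic limit to get $K_C \geq K_D$. You instead chain through the relative entropy of entanglement: Lemma~\ref{lem:key_for_gamma} plus convexity of $E_R$ give the pointwise bound $K_F \geq E_R$, whose regularization yields $K_F^{\infty} \geq E_R^{\infty} \geq K_D$. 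Your argument is substantially more elementary---it sidesteps both of the paper's heaviest results---and is essentially the same convexity trick the paper itself uses inside the proof of Lemma~\ref{lem:K_F_for_gamma}. The paper's route, on the other hand, delivers the extra intermediate inequality $K_F^{\infty} \geq K_C$, which is independently useful (e.g., in Theorem~\ref{thm:key_cost_for_private}).
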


\begin{proof}
There are multiple ways to show this. One 
stems from the fact that $K_F\geq K^{\infty}_F$ by subadditivity of $K_F$ (Corollary~\ref{cor:additive}), and the fact that
$K_F^{\infty}\geq K_C\geq K_D$ (Theorem~\ref{thm:regularization}).
\end{proof}

We are ready to state the main result of this section. 
\begin{theorem}
\label{thm:key_cost_for_private}
For a strictly irreducible generalized private state $\gamma(\psi)\equiv \gamma_{A_KA_SB_KB_S}$ with key part $A_KB_K$, the following equalities hold
\begin{align}
    K_C(\gamma)=K_D(\gamma)&=E_R^{\infty}(\gamma)= E_R(\gamma)=K_{F}(\gamma)\nonumber \\
&=K_{F}^\infty(\gamma)=S_{A_K}(\gamma)=S_{A_K}(\psi).
\end{align}
\end{theorem}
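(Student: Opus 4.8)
The plan is to assemble the preceding lemmas into a single pinching argument that forces all eight quantities to the common value $S_A(\psi)$. The key observation is that we already possess matching upper and lower anchors for the whole family: Lemma~\ref{lem:K_F_for_gamma} supplies the upper anchor $K_F(\gamma)=S_A(\psi)$, while Lemma~\ref{lem:key_for_gamma} supplies the lower anchor $E_R^{\infty}(\gamma)=S_A(\psi)$ (together with $K_D(\gamma)=E_R(\gamma)=S_A(\psi)$). Since strict irreducibility of $\gamma(\psi)$ is precisely the hypothesis under which both of these lemmas hold, all their conclusions are available to us here, and every remaining quantity will be squeezed between these two anchors.

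The core of the proof is then to record one chain of inequalities. Combining $K_F\geq K_F^{\infty}$ (from Lemma~\ref{lem:kf_greater_than_key}, itself a consequence of subadditivity of $K_F$), the main dilution result $K_F^{\infty}\geq K_C$ (Theorem~\ref{thm:regularization}), and the relative-entropy lower bound $K_C\geq E_R^{\infty}$ (Theorem~\ref{thm:irrev}), all of which hold for an arbitrary bipartite state and hence for $\gamma$, one obtains
\[
K_F(\gamma)\;\geq\;K_F^{\infty}(\gamma)\;\geq\;K_C(\gamma)\;\geq\;E_R^{\infty}(\gamma).
\]
Because the leftmost term equals $S_A(\psi)$ by Lemma~\ref{lem:K_F_for_gamma} and the rightmost term equals $S_A(\psi)$ by Lemma~\ref{lem:key_for_gamma}, every intermediate term must also equal $S_A(\psi)$. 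This single squeeze simultaneously establishes $K_F^{\infty}(\gamma)=K_C(\gamma)=S_A(\psi)$, which are the only two equalities not directly contained in the lemmas.

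Finally I would collect the remaining identities verbatim from the lemmas: Lemma~\ref{lem:key_for_gamma} already gives $K_D(\gamma)=E_R(\gamma)=E_R^{\infty}(\gamma)=S_A(\psi)$, and Lemma~\ref{lem:local_entropies} gives $S_A(\gamma)=S_A(\psi)$. Assembling these with the squeezed chain yields all eight stated equalities. There is no genuine analytic obstacle at the level of this theorem itself; the substantive work has been discharged in the lemmas, so the statement is essentially a collation of a sandwich argument. The only point requiring care is to verify that the hypotheses of each invoked result are met for a strictly irreducible $\gamma(\psi)$ — namely that the conditional states $U_i\rho_{A'B'}U_i^{\dagger}$ are separable (so Lemmas~\ref{lem:key_for_gamma} and~\ref{lem:K_F_for_gamma} apply) and that $\gamma$ is simply an arbitrary bipartite state as far as Theorems~\ref{thm:regularization} and~\ref{thm:irrev} are concerned — all of which hold by the standing assumption.
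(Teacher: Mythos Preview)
Your proposal is correct and follows essentially the same sandwich strategy as the paper: anchor above by $K_F(\gamma)=S_A(\psi)$ (Lemma~\ref{lem:K_F_for_gamma}), anchor below by a quantity known to equal $S_A(\psi)$ (Lemma~\ref{lem:key_for_gamma}), and squeeze $K_F^{\infty}$ and $K_C$ in between via Lemma~\ref{lem:kf_greater_than_key} and Theorem~\ref{thm:regularization}. The only cosmetic difference is that you close the chain with $K_C\geq E_R^{\infty}$ (Theorem~\ref{thm:irrev}), whereas the paper closes it with $K_C\geq K_D$ (obtained from the asymptotic limit of Theorem~\ref{thm:second_low}); since Lemma~\ref{lem:key_for_gamma} pins both $E_R^{\infty}(\gamma)$ and $K_D(\gamma)$ to $S_A(\psi)$, either choice works equally well.
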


\begin{proof}
The last equality $S_{A_K}(\gamma(\psi))=S_{A_K}(\psi)$ follows from Lemma~\ref{lem:local_entropies}.
Furthermore, by Lemma~\ref{lem:key_for_gamma},
we have
\begin{equation}
 K_D(\gamma(\psi))=E_R^{\infty}(\gamma(\psi))=E_R(\gamma(\psi))=S_{A_K}(\psi).   
\end{equation}
We have further that $K_F(\gamma(\psi))=S_A(\psi)$ by Lemma
\ref{lem:K_F_for_gamma}. It is thus sufficient
to prove that $K_C(\gamma)= K_F(\gamma)$. To see this, we note first that by the 
above considerations, $K_F(\gamma)=K_D(\gamma)$. Further we note that $K_F$ satisfies $K_F\geq K_F^{\infty}(\gamma) \geq K_D(\gamma)$ by Lemma~\ref{lem:kf_greater_than_key}. This fact implies that 
\begin{equation}
K_F(\gamma) \geq K_F^{\infty}(\gamma) \geq K_D(\gamma) =K_F(\gamma),     
\end{equation}
and hence $K_F(\gamma)=K_F^{\infty}(\gamma)$. We then use Theorem~\ref{thm:regularization}, which states that $K_F^{\infty}\geq K_C$ and reuse the argument stated in the proof of Lemma~\ref{lem:kf_greater_than_key}, which states that $K_F^{\infty} \geq K_C \geq K_D$. Combining all the above arguments we obtain
\begin{equation}
    K_F(\gamma) \geq K_F^{\infty}(\gamma) \geq K_C(\gamma) \geq K_D(\gamma) = K_F(\gamma),
\end{equation}
which finishes the proof.
\end{proof}

\begin{remark}\label{rem:squashed}
    In entanglement theory, every proper entanglement monotone lies between the distillable entanglement and the entanglement cost~\cite{H3}. However, not every proper entanglement monotone lies between the distillable key and the key cost. An example of this contrasting behavior involves the squashed entanglement which for the so-called flower states~\cite{locking} (that are strictly irreducible private bits) amounts to $E_{\operatorname{sq}}(\gamma_{A_KB_KA_SB_S})=1+\frac{\log |A_S|}{2} > 1=K_C(\gamma_{A_KB_KA_SB_S})$ for $|A_S|\geq 2$~\cite{ChristandlWinter05}. This is also a counterexample which shows that $E_{\operatorname{sq}}$ does not satisfy the equality from Theorem \ref{thm:key_cost_for_private}.
\end{remark}

\section{Yield-cost relation for privacy}\setcurrentname{Yield-cost relation for privacy}
\label{sec:yield_cost_relation}

In this section we derive the yield-cost relation for privacy, following the results for entanglement in~\cite{W21}. We note here that in~\cite{RyujiRegulaWilde} the yield-cost relation for a number of resource theories has been developed. However, inspecting the assumptions of Theorem~8 of~\cite{RyujiRegulaWilde} shows that the resource theory of private key (under the assumption that SEP is the free set of it)  does satisfy one of them. Indeed, although it holds true that $D_{\min}=D_{\max}$ (see~\eqref{eq:min-h} and~\eqref{eq:max-rel} for the definitions) for private states,
the equality $D_{\min,\operatorname{aff(SEP)}}=D_{\max}$ does not hold, where \begin{align}
    D_{\min,\operatorname{aff(SEP)}}(\rho) & \coloneqq \inf_{\sigma\in\operatorname{aff(SEP)}}D_{\min}(\rho\Vert \sigma)\\
    & =\inf_{\sigma\in\operatorname{aff(SEP)}}D^{\varepsilon=0}_h(\rho\Vert \sigma)
\end{align}
and $\operatorname{aff(SEP)}$ is the affine hull of the set $\SEP$. Indeed, like in the case of entanglement theory, $D_{\min,\operatorname{aff(SEP)}}=0$ in this case,
since the set $\SEP$ spans the whole space of self-adjoint operators acting on a finite-dimensional Hilbert space (see~\cite{Regula2020}). Whether one can use Theorem 8 of~\cite{RyujiRegulaWilde} is still open due to possible relation $D_{\min,\SEP}(\gamma_{d_k,d_s})=D_{{r},\SEP}(\gamma_{d_k,d_s})$ for all $\gamma_{d_k,d_s}\in{\rm SIR}$, where \begin{multline}
    D_{{r},\SEP}(\rho)\coloneqq \\
    \inf\bigg\{ \log_2(1+r): \frac{\rho+r\sigma}{1+r}\in\SEP, \sigma\in\SEP \bigg\}
\end{multline}
is a measure of the robustness of entanglement~\cite{VT99} of the state $\rho$. However, our findings provide a tighter bound on $K_D^{\varepsilon}$ by a function of $K_C^{\varepsilon'}$ for $\varepsilon=\varepsilon'<\frac{1}{2}$ than the relation given via the Theorem 8 of~\cite{RyujiRegulaWilde}, if it holds.

Before we state the yield-cost relation for the one-shot case, let us define the one-shot distillable key.
\begin{definition}
Fix $\varepsilon\in [0,1]$.
The one-shot distillable key $K_{D}^{\varepsilon}(\rho)$ of a state $\rho_{AB}$ is defined as 
\begin{equation}
K_{D}^{\varepsilon}(\rho) \coloneqq \sup_{\substack{\mc{L} \in \LOCC,\\ \gamma_{d_k,d_s} \in \IR}} \left\{
\begin{array}[c]{c}
\log_2 d_k :
\\ \frac{1}{2}\norm{\mc{L}(\rho)- \gamma_{d_k,d_s}}_1\leq \varepsilon
\end{array}
\right\},
\end{equation}
where the supremum is taken over every LOCC channel~$\mathcal{L}$ and every strictly irreducible private state $\gamma_{d_k,d_s}$ with an arbitrarily large, finite shield dimension $d_s\geq 1$. 
\end{definition}

\begin{theorem}
\label{thm:second_low}
For every bipartite state $\rho$, the following inequality holds
\begin{equation}
K_D^{\varepsilon_2}(\rho)\leq K_C^{\varepsilon_1}(\rho) +\log_2\!\left(\frac{1}{1-(\varepsilon_1+\varepsilon_2)}\right) .
\end{equation}
for $\varepsilon_1+\varepsilon_2 <1$ and $\varepsilon_1,\varepsilon_2 \in [0,1]$.
\end{theorem}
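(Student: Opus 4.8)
The plan is to reduce the statement to a single approximate LOCC interconversion between two strictly irreducible private states and then control it with the hypothesis-testing relative entropy of entanglement $E_h^{\delta}(\rho)\coloneqq \inf_{\sigma\in\SEP}D_h^{\delta}(\rho\Vert\sigma)$. First I would unpack the two one-shot quantities: by definition of $K_C^{\varepsilon_1}$ there is an LOCC channel $\mc{L}_1$ and a strictly irreducible private state $\gamma_{\mathrm{in}}$ with $\log_2 d_{\mathrm{in}}=K_C^{\varepsilon_1}(\rho)$ such that $\tfrac12\norm{\mc{L}_1(\gamma_{\mathrm{in}})-\rho}_1\leq \varepsilon_1$, and by definition of $K_D^{\varepsilon_2}$ there is an LOCC channel $\mc{L}_2$ and a strictly irreducible private state $\gamma_{\mathrm{out}}$ with $\log_2 d_{\mathrm{out}}=K_D^{\varepsilon_2}(\rho)$ such that $\tfrac12\norm{\mc{L}_2(\rho)-\gamma_{\mathrm{out}}}_1\leq \varepsilon_2$. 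Composing them, the channel $\mc{L}\coloneqq \mc{L}_2\circ\mc{L}_1$ is LOCC, and monotonicity of the trace distance under $\mc{L}_2$ together with the triangle inequality gives $\tfrac12\norm{\mc{L}(\gamma_{\mathrm{in}})-\gamma_{\mathrm{out}}}_1\leq \delta$, where $\delta\coloneqq \varepsilon_1+\varepsilon_2<1$.

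It then remains to prove the interconversion bound $\log_2 d_{\mathrm{out}}\leq \log_2 d_{\mathrm{in}}+\log_2(\tfrac{1}{1-\delta})$, which I would obtain by sandwiching $E_h^{\delta}(\gamma_{\mathrm{in}})$ from both sides. For the \emph{lower} bound I use a privacy test: for a strictly irreducible private state $\gamma_{\mathrm{out}}$ of key dimension $d_{\mathrm{out}}$ there is an operator $\Pi_{\mathrm{out}}$ with $0\leq\Pi_{\mathrm{out}}\leq \id$, $\Tr[\Pi_{\mathrm{out}}\gamma_{\mathrm{out}}]=1$, and $\Tr[\Pi_{\mathrm{out}}\sigma']\leq 1/d_{\mathrm{out}}$ for every separable $\sigma'$ (equivalently $D_{\min}(\gamma_{\mathrm{out}}\Vert\SEP)\geq\log_2 d_{\mathrm{out}}$, which holds for strictly irreducible private states, cf.\ \cite{IR-pbits}). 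The adjoint effect $\mc{L}^{\dagger}(\Pi_{\mathrm{out}})$ is again a valid test: it accepts $\gamma_{\mathrm{in}}$ with probability $\Tr[\Pi_{\mathrm{out}}\mc{L}(\gamma_{\mathrm{in}})]\geq 1-\delta$ by the conversion bound, while for any separable $\sigma_{\mathrm{in}}$ it accepts $\mc{L}(\sigma_{\mathrm{in}})\in\SEP$ with probability at most $1/d_{\mathrm{out}}$. Hence $D_h^{\delta}(\gamma_{\mathrm{in}}\Vert\sigma_{\mathrm{in}})\geq \log_2 d_{\mathrm{out}}$ for every separable $\sigma_{\mathrm{in}}$, i.e.\ $E_h^{\delta}(\gamma_{\mathrm{in}})\geq \log_2 d_{\mathrm{out}}$.

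For the \emph{upper} bound I would exploit strict irreducibility directly. Writing $\gamma_{\mathrm{in}}=\tau(\Phi^+\otimes\rho_{\mathrm{shield}})\tau^{\dagger}$, I take $\sigma^{\ast}\coloneqq \tfrac{1}{d_{\mathrm{in}}}\sum_i \op{ii}\otimes U_i\rho_{\mathrm{shield}}U_i^{\dagger}$, which is separable precisely because each $U_i\rho_{\mathrm{shield}}U_i^{\dagger}\in\SEP$ for an SIR state. Since $\Phi^+\leq \sum_i\op{ii}$ as operators, conjugating by $\tau$ yields the operator inequality $\gamma_{\mathrm{in}}\leq d_{\mathrm{in}}\,\sigma^{\ast}$. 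Consequently every test $\Lambda$ with $\Tr[\Lambda\gamma_{\mathrm{in}}]\geq 1-\delta$ obeys $\Tr[\Lambda\sigma^{\ast}]\geq \tfrac{1}{d_{\mathrm{in}}}\Tr[\Lambda\gamma_{\mathrm{in}}]\geq \tfrac{1-\delta}{d_{\mathrm{in}}}$, so that $E_h^{\delta}(\gamma_{\mathrm{in}})\leq D_h^{\delta}(\gamma_{\mathrm{in}}\Vert\sigma^{\ast})\leq \log_2 d_{\mathrm{in}}+\log_2(\tfrac{1}{1-\delta})$. Chaining the two bounds gives $\log_2 d_{\mathrm{out}}\leq \log_2 d_{\mathrm{in}}+\log_2(\tfrac{1}{1-\delta})$, which is the claim after substituting $\log_2 d_{\mathrm{in}}=K_C^{\varepsilon_1}(\rho)$, $\log_2 d_{\mathrm{out}}=K_D^{\varepsilon_2}(\rho)$, and $\delta=\varepsilon_1+\varepsilon_2$.

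The step I expect to be the crux is the lower bound, namely certifying a privacy test with acceptance $\leq 1/d_{\mathrm{out}}$ on \emph{all} separable states. This is delicate because the twisting unitary can hide entanglement, so the naive support-projector test fails; what rescues it is strict irreducibility, which is exactly the gap-closing property $D_{\min}(\gamma\Vert\SEP)=D_{\max}(\gamma\Vert\SEP)=\log_2 d_k$ flagged in the remark preceding this theorem and underlying the comparison with Theorem~8 of \cite{RyujiRegulaWilde}. Everything else (the composition, the operator inequality, and the elementary estimate $D_h^{\delta}(\gamma\Vert\sigma)\leq D_{\max}(\gamma\Vert\sigma)+\log_2(\tfrac{1}{1-\delta})$) is routine.
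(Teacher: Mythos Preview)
Your argument is correct and structurally identical to the paper's: compose the cost and distillation protocols, then sandwich the hypothesis-testing relative entropy of entanglement of the input SIR private state between the two key dimensions. Two minor points are worth flagging. First, you assume the infimum in $K_C^{\varepsilon_1}$ and the supremum in $K_D^{\varepsilon_2}$ are attained; the paper avoids this by working with arbitrary feasible protocols and taking the inf/sup at the end, which you should do as well. Second, your closing paragraph misattributes where strict irreducibility enters: the privacy-test lower bound $\Tr[\Pi_{\mathrm{out}}\sigma']\leq 1/d_{\mathrm{out}}$ holds for \emph{every} private state (via $\|\sum_i U_i^{\dagger}|\tilde a_i\rangle|\tilde b_i\rangle\|\leq\sum_i\|\tilde a_i\|\|\tilde b_i\|\leq 1$ by Cauchy--Schwarz), whereas it is your \emph{upper} bound that genuinely needs SIR, so that $\sigma^{\ast}=\tfrac{1}{d_{\mathrm{in}}}\sum_i\op{ii}\otimes U_i\rho U_i^{\dagger}$ is separable.

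Where you do differ from the paper is in executing the upper bound $E_h^{\delta}(\gamma_{\mathrm{in}})\leq\log_2 d_{\mathrm{in}}+\log_2(1/(1-\delta))$. The paper (Theorem~\ref{thm:lower_bound}) rewrites the hypothesis-testing quantity as an SDP, passes to the Lagrangian dual, and then exhibits a feasible dual point $(y,Y)=(1/d_{\mathrm{in}},0)$, verifying the constraint $\tfrac{1}{d_{\mathrm{in}}}\Phi^+-\sigma_k\leq 0$ via a dephasing argument. You instead observe directly that $\Phi^+\leq\sum_i\op{ii}$ as operators, conjugate by $\tau$ to get $\gamma_{\mathrm{in}}\leq d_{\mathrm{in}}\sigma^{\ast}$, and read off $D_h^{\delta}\leq D_{\max}+\log_2(1/(1-\delta))$. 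These are the same inequality seen from two sides (your operator inequality is exactly the paper's dual feasibility constraint), but your route is shorter and avoids the SDP/duality machinery entirely.
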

\begin{proof}
We first prove that for every $\gamma_{d_k,d_s}$ that is  irreducible, the following inequality holds
\begin{equation}
\begin{aligned}
    E_R^\varepsilon(\gamma_{d_k,d_s})& =\inf_{\sigma \in\operatorname{SEP} } D^{\varepsilon}_h(\gamma_{d_k,d_s}\Vert\sigma)\\
    & \leq \log_2 d_k - \log_2 (1-\varepsilon).
\end{aligned}    
\label{eq:main}
\end{equation}
Let us fix then $\gamma_{d_k,d_s} = U (\Phi^+\otimes \rho_{A_SB_S})U^{\dagger}\in\mathrm{SIR}$, where $U=\sum_i |ii\>\!\<ii|\otimes U_i$ and $U_i\rho_{A_SB_S}U_i^{\dagger} \in \SEP$ for all $i \in \{0,\hdots,d_k-1\}$.

Now, to see the above inequality, let us first note that it suffices to prove the following lower bound
\begin{equation}
    \sup_{\sigma\in \SEP} \inf_{\Omega \in T} \Tr\left(\Omega \sigma\right)  \geq \frac{1-\varepsilon}{d_k},
    \label{eq:maineq}
\end{equation}
where
\begin{equation}
T \coloneqq \{0\leq \Omega \leq {\id}_{d_k^2\times d_s^2}, \Tr(\Omega \gamma_{d_k,d_s}) \geq 1-\varepsilon\} .   
\end{equation}
Indeed, by multiplying  both sides of~\eqref{eq:main} by $-1$, we get
\begin{equation}
    \sup_{\sigma \in {\SEP}} \log_2\!\left(\inf_{\Omega \in T} \Tr(\Omega \sigma)\right)\geq - \log_2 d_k + \log_2 (1-\varepsilon).
\end{equation}
Now, taking the power of two on both sides, we get
\begin{equation}
    \sup_{\sigma \in {\SEP}} \inf_{\Omega \in T} \Tr(\Omega \sigma)\geq \frac{ 1-\varepsilon}{d_k}.
\end{equation}
The above inequality follows from Theorem~\ref{thm:lower_bound} below and the fact that the relative $\varepsilon$-hypothesis divergence is an upper bound on the one-shot key, by using the idea of the proof of Theorem $1$ in~\cite{W21}.

Consider two LOCC channels, ${\mathcal L}_{A_1^KB_1^KA_1^SB_1^S\rightarrow AB}$ and ${\mathcal L}_{AB\rightarrow A_2^KB_2^KA_2^SB_2^S}$ such that
${\mathcal L}_{A_1^KB_1^KA_1^SB_1^S\rightarrow AB}$ maps $\gamma_{d_k^1,d_s^1}$ up to $\varepsilon_1$ in normalized trace distance into $\rho_{AB}$ and ${\mathcal L}_{AB\rightarrow A_2^KB_2^KA_2^SB_2^S}$ maps $\rho_{AB}$ into $\gamma_{d_k^2,d_s^2}$ up to $\varepsilon_2$ in normalized trace distance. By composing these maps,
we get a new one satisfying
\begin{multline}
     \frac{1}{2} \big\|{\mathcal L}_{AB\rightarrow A_2^KB_2^KA_2^SB_2^S}\circ{\mathcal L}_{A_1^KB_1^KA_1^SB_1^S\rightarrow AB}(\gamma_{d_k',d_s'}) - \\
     \gamma_{d_k'',d_s''}\big\|_1 
    \leq   \varepsilon_1 + \varepsilon_2.
\end{multline}
It implies that 
\begin{equation}
    \log_2 d_k'' \leq \log_2 d_k' + \log_2\!\left(\frac{1}{\varepsilon'}\right),
\end{equation}
with $\varepsilon' \coloneqq  \varepsilon_1 + \varepsilon_2$ because
\begin{align}
    \log_2 d_k'' & \leq {K_D^{\varepsilon'}(A_1^KA_1^S:A_1^KB_1^S)}_{\gamma_{d_k',d_s'}}\\
    & \leq E_R^{\varepsilon'}({\gamma_{d_k',d_s'}}) \\
    & \leq 
    \log_2 d_k' + \log_2\!\left(\frac{1}{1-\varepsilon'}\right),
\end{align}
where the first inequality is due to the fact
that the distillation of $\gamma_{d_k'',d_s''}$ from
$\gamma_{d_k',d_s'}$ via $\rho_{AB}$ may be a sub-optimal process of distillation $\gamma_{d_k'',d_s''}$ from $\gamma_{d_k',d_s'}$ in general. The next inequality follows from \cite[Theorem 8]{Das2021}. The last one follows from Theorem~\ref{thm:lower_bound}. Taking supremum
over processes that produce $\gamma_{d_k'',d_s''}$ and
infimum over processes that create $\rho_{AB}$ from $\gamma_{d_k',d_s'}$, we obtain the statement of the theorem.
\end{proof}

\begin{theorem}
\label{thm:lower_bound}
For every strictly irreducible private state $\gamma_{d_k,d_s}= U(\Phi^+\otimes \rho_{A_SB_S})U^{\dagger}$,
the following inequality holds
\begin{equation}
    \sup_{\sigma\in \SEP} \inf_{\Omega \in T} \Tr\left(\Omega \sigma\right) \geq \frac{(1-\varepsilon)}{d_k},
    \label{eq:claim}
\end{equation}
where $T=\{0\leq \Omega \leq {\id}_{d_k^2\times d_s^2}, \Tr(\Omega \gamma_{d_k,d_s}) \geq 1-\varepsilon\}$.
\end{theorem}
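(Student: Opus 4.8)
The plan is to reduce this saddle-point quantity to a single well-chosen witness. Since the supremum over $\sigma\in\SEP$ sits on the outside, it suffices to exhibit one separable state $\sigma^\star$ for which $\inf_{\Omega\in T}\Tr(\Omega\sigma^\star)\ge (1-\varepsilon)/d_k$; the supremum can then only be larger. The natural candidate is the ``key-dephased'' (untwisted, then measured) version of $\gamma_{d_k,d_s}$, namely
\begin{equation}
\sigma^\star \coloneqq U\!\left(\bar\Phi\otimes\rho_{A'B'}\right)U^\dagger = \frac{1}{d_k}\sum_{i=0}^{d_k-1}|ii\>\!\<ii|\otimes U_i\rho_{A'B'}U_i^\dagger,
\end{equation}
where $\bar\Phi\coloneqq \frac{1}{d_k}\sum_i |ii\>\!\<ii|$ and $U=\sum_i|ii\>\!\<ii|\otimes U_i$. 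By the strict-irreducibility hypothesis each $U_i\rho_{A'B'}U_i^\dagger$ lies in $\SEP(A_S\!:\!B_S)$, and since $|ii\>=|i\>|i\>$ is a product vector, each summand is a product of a separable state with a separable state; a convex mixture of separable states is separable, so $\sigma^\star\in\SEP$.

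The key step is an operator domination $\gamma_{d_k,d_s}\le d_k\,\sigma^\star$. First I would establish it at the level of the key register: writing $\Pi\coloneqq \sum_i |ii\>\!\<ii|$ for the projector onto $\mathrm{span}\{|ii\>\}_i$, the maximally entangled vector $|\Phi^+\>=\tfrac{1}{\sqrt{d_k}}\sum_i|ii\>$ is a unit vector in the range of $\Pi$, so the rank-one projector obeys $\Phi^+=|\Phi^+\>\!\<\Phi^+|\le \Pi = d_k\bar\Phi$. Tensoring with $\rho_{A'B'}\ge 0$ preserves the inequality, and conjugating both sides by the unitary $U$ gives $\gamma_{d_k,d_s}=U(\Phi^+\otimes\rho_{A'B'})U^\dagger\le d_k\,U(\bar\Phi\otimes\rho_{A'B'})U^\dagger=d_k\,\sigma^\star$.

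With this in hand the conclusion is immediate. For any $\Omega\in T$ we have $\Omega\ge 0$, so the domination gives $\Tr(\Omega\gamma_{d_k,d_s})\le d_k\Tr(\Omega\sigma^\star)$; combining with the feasibility constraint $\Tr(\Omega\gamma_{d_k,d_s})\ge 1-\varepsilon$ yields $\Tr(\Omega\sigma^\star)\ge (1-\varepsilon)/d_k$. As this holds for every $\Omega\in T$, I would take the infimum to obtain $\inf_{\Omega\in T}\Tr(\Omega\sigma^\star)\ge (1-\varepsilon)/d_k$, and since $\sigma^\star\in\SEP$ the supremum over separable states is at least this value, which is exactly \eqref{eq:claim}.

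The only genuinely substantive point is the operator inequality $\Phi^+\le d_k\bar\Phi$ together with the verification that $\sigma^\star$ is separable; everything else is bookkeeping. I expect no real obstacle, precisely because the supremum is on the outside, so a single explicit witness suffices rather than a uniform construction over all separable states. The one thing to be careful about is that $T$ and the domination live on the same space: $\Omega$ ranges over operators on the full key-plus-shield system of dimension $d_k^2\times d_s^2$, on which $\gamma_{d_k,d_s}\le d_k\sigma^\star$ holds, so applying $\Tr(\Omega\,\cdot\,)$ to the inequality is legitimate.
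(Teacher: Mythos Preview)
Your proof is correct and uses the very same two ingredients as the paper's argument: the separable witness $\sigma^\star=U(\bar\Phi\otimes\rho_{A'B'})U^\dagger$ and the operator inequality $\Phi^+\le d_k\bar\Phi$ (equivalently $\gamma_{d_k,d_s}\le d_k\sigma^\star$). The paper, after choosing the same $\sigma^\star$ and untwisting by $U$, recasts $\inf_{\Omega\in T}\Tr(\Omega\sigma^\star)$ as a semidefinite program, passes to the Lagrangian dual via the max--min inequality, and then selects the dual feasible point $(y,Y)=(1/d_k,0)$, whose feasibility constraint is precisely $\tfrac{1}{d_k}\Phi^+-\bar\Phi\le 0$. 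Your route bypasses all of that machinery: once $\gamma_{d_k,d_s}\le d_k\sigma^\star$ is in hand, positivity of $\Omega$ and the membership $\Omega\in T$ give the bound in one line. The paper's detour through SDP duality buys a clearer link to the hypothesis-testing divergence $D_h^\varepsilon$ and makes the role of weak duality explicit, but your direct argument is shorter and more transparent for this particular statement.
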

\begin{proof}
Consider $\sigma =U(\sigma_k \otimes \rho_{A_SB_S})U^{\dagger}\in \SEP$ as an ansatz, where $\sigma_k = (1/d_k)^{-1} \sum_{i}  \op{ii}$. Recall that $U=\sum_{i=0}^{d_k-1} \op{ii}\otimes U_i$ and that, by the assumption of strict irreducibility, $U_i \rho_{A_SB_S} U_i^{\dagger}$ is separable for each $i$. Observe then that 
\begin{equation}
    \Tr( \Omega U(\Phi^+_{d_k} \otimes \rho_{A_SB_S})U^{\dagger}) \geq 1-\varepsilon
\end{equation}
is equivalent by cyclicity of trace to 
\begin{equation}
    \Tr (U^{\dagger}\Omega U (\Phi^+_{d_k} \otimes \rho_{A_SB_S})) \geq 1-\varepsilon.
\end{equation}
Denoting now $\widetilde{\Omega} \equiv U^{\dagger} \Omega U $,
note then that
\begin{align}
    \Tr( \Omega \sigma) & = \Tr (\Omega U(\sigma_k \otimes \rho_{A_SB_S}) U^{\dagger})\\
    & =
    \Tr (\widetilde{\Omega}(\sigma_k \otimes \rho_{A_SB_S})).
\end{align}
These equalities follow by the choice of $\sigma$ and cyclicity of trace.
We have then that the left-hand side (LHS) of the formula~\eqref{eq:maineq} is lower bounded (by the particular choice of $\sigma$) as follows
\begin{equation}
    \inf_{\widetilde{\Omega} \in T_1} \Tr(\widetilde{\Omega} (\sigma_k \otimes \rho_{A_SB_S})) ,
    \label{eq:task}
\end{equation}
where
\begin{equation}
    T_1 \coloneqq \{0\leq \widetilde{\Omega} \leq {\id}_{d_k^2\times d_s^2}, \Tr(\widetilde{\Omega}( \Phi^+_{d_k}\otimes \rho_{A_SB_S})) \geq 1-\varepsilon\}.
\end{equation}

The above quantity is an instance of semi-definite optimization. Indeed, we can cast it as follows
\begin{equation}
\begin{aligned}
&\inf_{\widetilde{\Omega}}\, \Tr(\widetilde{\Omega} (\sigma_k \otimes \rho_{A_SB_S})) \\
&     0\leq \widetilde{\Omega} \leq {\id}_{d_k^2\times d_s^2},\\
& \Tr(\widetilde{\Omega}( \Phi^+_{d_k}\otimes \rho_{A_SB_S})) \geq (1-\varepsilon)
\end{aligned}
\end{equation}
which by weak duality admits a lower bound of the form 
\begin{equation}
    \begin{aligned}
    &\sup_{y,Y} y(1-\varepsilon) -\Tr(Y)\label{eq:last_bound}\\
&  Y \geq (y\Phi^+_{d_k} -\sigma_k)\otimes \rho_{A_SB_S},\\
& y, Y \geq 0.
\end{aligned}
\end{equation}
Indeed, to see this, consider that
\begin{align}
& \inf_{\widetilde{\Omega}\geq0}\left\{
\begin{array}
[c]{c}%
\operatorname{Tr}[\widetilde{\Omega}\left(  \sigma_{k}\otimes\rho_{A_SB_S}\right)  ]:\\
\widetilde{\Omega}\leq\id,\\
\operatorname{Tr}\!\left[  \widetilde{\Omega}\left(  \Phi_{d_{k}}^{+}\otimes
\rho_{A_SB_S}\right)  \right]  \geq1-\varepsilon
\end{array}
\right\}  \nonumber\\
& =\inf_{\widetilde{\Omega}\geq0}\left\{
\begin{array}
[c]{c}%
\operatorname{Tr}[\widetilde{\Omega}\left(  \sigma_{k}\otimes\rho_{A_SB_S}\right)  ]+\\
\sup_{Y\geq0}\operatorname{Tr}\!\left[  Y\left(  \widetilde{\Omega
}-\id\right)  \right]  +\\
\sup_{y\geq0}y\left( \begin{array}[c]{c} 1-\varepsilon\\
-\operatorname{Tr}\!\left[  \widetilde
{\Omega}\left(  \Phi_{d_{k}}^{+}\otimes\rho_{A_SB_S}\right)
\right]  \end{array}\right)
\end{array}
\right\}  \\
& =\inf_{\widetilde{\Omega}\geq0}\sup_{\substack{Y\geq0,\\y\geq0}}\left\{
\begin{array}
[c]{c}%
\operatorname{Tr}[\widetilde{\Omega}\left(  \sigma_{k}\otimes\rho_{A_SB_S}\right)  ]+\\
\operatorname{Tr}\!\left[  Y\left(  \widetilde{\Omega}-\id%
\right)  \right]  +\\
y\left(  1-\varepsilon-\operatorname{Tr}\!\left[  \widetilde{\Omega}\left(
\Phi_{d_{k}}^{+}\otimes\rho_{A_SB_S}\right)  \right]  \right)
\end{array}
\right\}  \\
& \geq\sup_{\substack{Y\geq0,\\y\geq0}}\inf_{\widetilde{\Omega}\geq0}\left\{
\begin{array}
[c]{c}%
\operatorname{Tr}[\widetilde{\Omega}\left(  \sigma_{k}\otimes\rho_{A_SB_S}\right)  ]+\\
\operatorname{Tr}\!\left[  Y\left(  \widetilde{\Omega}-\id%
\right)  \right]  +\\
y\left(  1-\varepsilon-\operatorname{Tr}\!\left[  \widetilde{\Omega}\left(
\Phi_{d_{k}}^{+}\otimes\rho_{A_SB_S}\right)  \right]  \right)
\end{array}
\right\}  \\
& =\sup_{\substack{Y\geq0,\\y\geq0}}\inf_{\widetilde{\Omega}\geq0}\left\{
\begin{array}
[c]{c}%
y\left(  1-\varepsilon\right)  -\operatorname{Tr}[Y] +\\
\operatorname{Tr}\!\left[  \widetilde{\Omega}\left(  Y-\left(  y\Phi_{d_{k}%
}^{+}-\sigma_{k}\right)  \otimes\rho_{A_SB_S}\right)  \right]
\end{array}
\right\}  \\
& =\sup_{\substack{Y\geq0,\\y\geq0}}\left\{
\begin{array}
[c]{c}%
y\left(  1-\varepsilon\right)  -\operatorname{Tr}[Y]+\\
\inf_{\widetilde{\Omega}\geq0}\operatorname{Tr}\!\left[  \widetilde{\Omega
}\left(  \begin{array}[c]{c}
Y-(  y\Phi_{d_{k}}^{+}-\sigma_{k})\\  \otimes\rho
_{A_SB_S}\end{array}\right)  \right]
\end{array}
\right\}  \\
& =\sup_{Y\geq0,y\geq0}\left\{
\begin{array}
[c]{c}%
y\left(  1-\varepsilon\right)  -\operatorname{Tr}[Y]:\\
Y\geq\left(  y\Phi_{d_{k}}^{+}-\sigma_{k}\right)  \otimes\rho_{A_SB_S}%
\end{array}
\right\}  .
\end{align}
The first equality follows by the introduction of Lagrange multipliers and because%
\begin{equation}
\sup_{Y\geq0}\operatorname{Tr}\!\left[  Y\left(  \widetilde{\Omega
}-\id\right)  \right]  =+\infty
\end{equation}
if and only if $\widetilde{\Omega}\not \leq \id$ and%
\begin{equation}
\sup_{y\geq0}y\left(  1-\varepsilon-\operatorname{Tr}\!\left[  \widetilde
{\Omega}\left(  \Phi_{d_{k}}^{+}\otimes\rho_{A_SB_S}\right)
\right]  \right)  =+\infty
\end{equation}
if and only if $\operatorname{Tr}\!\left[  \widetilde{\Omega}\left(  \Phi
_{d_{k}}^{+}\otimes\rho_{A_SB_S}\right)  \right]  \not \geq
1-\varepsilon$. The second equality follows by moving suprema to the outside.
The inequality follows from the max-min inequality. The third equality follows
from some basic algebra. The penultimate equality follows from moving the
infimum inside. The final equality follows because%
\begin{equation}
\inf_{\widetilde{\Omega}\geq0}\operatorname{Tr}\!\left[  \widetilde{\Omega
}\left(  Y-\left(  y\Phi_{d_{k}}^{+}-\sigma_{k}\right)  \otimes\rho
_{A_SB_S}\right)  \right]  =-\infty
\end{equation}
if and only if $Y\not \geq \left(  y\Phi_{d_{k}}^{+}-\sigma_{k}\right)
\otimes\rho_{A_SB_S}$.

We now observe that any choice of $y$ and $Y$ in~\eqref{eq:last_bound}
gives another lower bound on the above quantity,
as it is a supremum over these variables.
We will choose $Y=0$ and $y=\frac{1}{d_k}$.
We need to argue that this choice satisfies
the constraints. To show this, it suffices to prove
that
\begin{equation}
    \left(\frac{1}{d_k} \Phi^+_{d_k} - \sigma_k\right)\otimes \rho_{A_SB_S} \leq Y=0.
\end{equation}
Since $\rho_{A_SB_S} \geq 0$, as $\rho_{A_SB_S}$ is a state, the above holds if and only if
\begin{equation}
    \frac{1}{d_k} \Phi^+_{d_k} - \sigma_k\leq 0.
\end{equation}
To prove this, we observe that the state $\sigma_k$ is equivalent to a maximally entangled state measured locally by a von Neumann measurement in the computational basis. Such a measurement
can be simulated by applying random unitary transformations that perform {\it dephasing}.
We take $d_k$ unitary transformations of the form
~\cite{Groisman2005}: 
 \begin{equation}
     V_l \coloneqq  \sum_{j=0}^{d-1}e^{2\pi i l j/d_k} |j\>\!\<j|.
 \end{equation}
 Given this choice, it follows that
\begin{equation}
    \sigma_k = \frac{1}{d_k}\sum_{l=0}^{d_k-1}(V_l\otimes {\id})\Phi^+_{d_k} (V_l\otimes {\id})^{\dagger}.
\end{equation}
We also note that the states $\op{\psi_l}\equiv (V_l\otimes {\id})\Phi^+_{d_k} (V_l^{\dagger}\otimes {\id})$ are mutually orthogonal~\cite{FerraraChristandl}, and form a subset of the basis of maximally entangled states in $d_k\otimes d_k$.
Thus expanding matrix $\frac{1}{d_k}\Phi^+_{d_k} - \sigma_k$ in this basis we see that
\begin{equation}
    \frac{1}{d_k}\Phi^+_{d_k} -\frac{1}{d_k}\Phi^+_{d_k} -\frac{1}{d_k}\sum_{l>0}^{d_k-1}(V_l\otimes {\id})\Phi^+_{d_k} (V_l^{\dagger}\otimes {\id}),
\end{equation}
which is negative, as we aimed to argue, since the first two terms cancel out, and the reminder has negative
eigenvalues in the considered basis. By the choice of 
$y$ and~\eqref{eq:last_bound} we conclude that the infimum in~\eqref{eq:task} is bounded from below by $\cfrac{1-\varepsilon}{d_k}$ as claimed. 
\end{proof}

As a corollary of Theorem~\ref{thm:second_low}, it follows that all strictly irreducible private states are to a large extent ``units'' of privacy, i.e., their cost is $\approx$ $\log_2 d_k$. We state this fact in the corollary below.
\begin{corollary}
\label{prop:pbitbounds}
Let $\gamma_{d_k,d_s}$ be a strictly irreducible private state. Then
\begin{equation}
\log_2 d_k - \log_2\!\left(\frac{1}{1-\varepsilon}\right)\leq    K_C^{\varepsilon}(\gamma_{d_k,d_s}) \leq \log_2 d_k.
\end{equation}
\end{corollary}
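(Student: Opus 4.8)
The plan is to establish the two bounds separately: the upper bound is essentially a definitional observation, while the lower bound follows from the single-shot yield-cost relation already proven in Theorem~\ref{thm:second_low}.

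For the upper bound $K_C^{\varepsilon}(\gamma_{d_k,d_s}) \leq \log_2 d_k$, I would exhibit a single feasible point of the infimum defining $K_C^{\varepsilon}$. Since $\gamma_{d_k,d_s}$ is by hypothesis strictly irreducible, it is itself an admissible resource state, and the identity map is an LOCC channel. Taking this pair gives $\frac{1}{2}\norm{\gamma_{d_k,d_s} - \gamma_{d_k,d_s}}_1 = 0 \leq \varepsilon$, so the value $\log_2 d_k$ is achieved and the infimum cannot exceed it.

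For the lower bound, the key observation is that the same trivial zero-error protocol also lower bounds the one-shot distillable key: using the identity channel with target $\gamma_{d_k,d_s}$ (admissible since it is strictly irreducible) shows $K_D^{\varepsilon_2}(\gamma_{d_k,d_s}) \geq \log_2 d_k$ for every $\varepsilon_2 \in [0,1]$, in particular for $\varepsilon_2 = 0$. I would then invoke Theorem~\ref{thm:second_low} with $\varepsilon_2 = 0$ and $\varepsilon_1 = \varepsilon$, valid whenever $\varepsilon < 1$ (the claimed inequality being vacuous at $\varepsilon = 1$), to obtain
\begin{equation*}
\log_2 d_k \leq K_D^{0}(\gamma_{d_k,d_s}) \leq K_C^{\varepsilon}(\gamma_{d_k,d_s}) + \log_2\!\left(\frac{1}{1-\varepsilon}\right).
\end{equation*}
Rearranging gives $K_C^{\varepsilon}(\gamma_{d_k,d_s}) \geq \log_2 d_k - \log_2\!\left(\frac{1}{1-\varepsilon}\right)$, exactly the desired lower bound.

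Since both bounds reduce to inserting definitions into the already-established yield-cost inequality, I do not expect a genuine obstacle; the analytic content has been front-loaded into Theorems~\ref{thm:second_low} and~\ref{thm:lower_bound}. The only point needing a moment of care is checking that the identity channel genuinely belongs to LOCC and that $\gamma_{d_k,d_s}$ lies in the class SIR over which both the cost infimum and the yield supremum range, so that the trivial protocol is legitimately feasible on both sides.
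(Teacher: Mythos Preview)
Your proposal is correct and matches the paper's own proof essentially line for line: the upper bound is the definitional observation that $\gamma_{d_k,d_s}$ with the identity LOCC is feasible, and the lower bound follows by instantiating Theorem~\ref{thm:second_low} with $\varepsilon_1=\varepsilon$, $\varepsilon_2=0$ together with $K_D^{0}(\gamma_{d_k,d_s})\geq\log_2 d_k$. The only cosmetic difference is that the paper asserts $K_D^{0}(\gamma_{d_k,d_s})=\log_2 d_k$ rather than the inequality you prove, but only the inequality is needed.
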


\begin{proof}
The  inequality on the RHS stems from the definition of $K_C^{\varepsilon}$, while the inequality on the LHS is due to Theorem~\ref{thm:second_low}, by picking $\varepsilon_1 = \varepsilon$ and $\varepsilon_2 = 0$, and the fact that $K_D^{0}(\gamma_{d_k,d_s})=\log_2 d_k$.
\end{proof}

\section{Key cost of a quantum device}\setcurrentname{Key cost of a quantum device}

\label{sec:behaviors}

We now define a quantity for a quantum device, which is analogous to the key cost
 of a state. We follow the idea of introducing measures of quantum non-locality based on entanglement measures, as provided in~\cite{CFH21}~(see also~\cite{HWD21}). Non-locality measures  obtained in this way are called reduced entanglement measures. For a given device, the reduced entanglement measure $E$ is obtained by taking the minimal value of $E$ over all quantum states that lead to the device when one can vary over measurement settings. It is natural to consider reduced key cost and reduced key of formation obtained in this way.

A device, i.e., a conditional probability distribution of two outputs $(a,b)$ given two inputs $(x,y)$, denoted as $P(ab|xy)$ is quantum if and only if there exists a state~$\rho_{AB}$ such that 
\begin{equation}
    \Tr (\rho_{AB} ({\mathrm M}^{a}_x\otimes{\mathrm M}^{b}_y)) = P(ab|xy).
\end{equation}
We denote
such a device  as $({\mathcal M},\rho_{AB})$, where ${\mathcal M} = {\mathrm M}^{a}_x\otimes{\mathrm M}^{b}_y$.
Below, we define the cost of a device for the bipartite case, with the multipartite case following along similar lines.

\begin{definition}
\label{def:di_kc_box}
Let $\varepsilon>0$. The key cost of a quantum device $P(ab|xy)$ is equal to the reduced single-shot key cost
\begin{multline}
     K_C^{\varepsilon}(P(ab|xy)) \equiv K_C^{\varepsilon \downarrow}(P(ab|xy)) \coloneqq \\
     \inf_{\substack{\rho_{AB},\\
    {\mathrm M}^{a}_x\otimes{\mathrm M}^{b}_y}} \{ K_C^{\varepsilon}(\rho_{AB}) :
     \Tr( \rho_{AB} {\mathrm M}^{a}_x\otimes{\mathrm M}^{b}_y) = P(ab|xy)\}.
\end{multline}
\end{definition}

Besides the technical origin of the above definition of a single-shot key cost of a device, it has a clear operational interpretation. 
Namely, it can be viewed as the minimal
cost (in terms of device-dependent secure key) incurred by a mistrustful vendor upon creation of the device. 
Let us note here that the dishonest provider
of a device $P(a,b|x,y)$ during production process is only
limited by the future statistics of inputs and outputs of the
device, which motivates taking the infimum
over states and measurements
compatible with $P(a,b|x,y)$.

We now establish a nontrivial bound for all devices
that admit a realization on states with a positive partial transpose (PPT states). The PPT states satisfy $\id_A\otimes T_B(\rho_{AB})\equiv \rho_{AB}^{\Gamma_B}\geq 0$, where $T_B$ is the transposition map applied to system $B$. The set of all PPT states is denoted as $\PPT$. We say
that $P(ab|xy) \in \PPT$ iff $P(ab|xy)=\Tr(\rho_{AB} {\mathrm M}^{a}_x\otimes{\mathrm M}^{b}_y)$ for some measurements ${\mathrm M}^{a}_x\otimes{\mathrm M}^{b}_y$ and a state $\rho_{AB}$ with positive partial transpose. We call the state $\rho_{AB}$ a PPT realization of $P(ab|xy)$. 
\begin{proposition}
For all $\varepsilon>0$ and for every $P(ab|xy) \in \PPT$ with  PPT realization $\rho_{AB}$, the following inequality holds
\begin{equation}
    K^{\varepsilon}_C (P(ab|xy))\leq \min \{K_C^{\varepsilon}(\rho_{AB}),K_C^{\varepsilon} (\rho_{AB}^{\Gamma_B})\}.
\end{equation}
\end{proposition}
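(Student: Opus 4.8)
The plan is to produce, for each of the two terms in the minimum, an explicit realization of the device $P(ab|xy)$ and then invoke that $K_C^{\varepsilon}(P(ab|xy))$ is defined as an infimum over all realizations; any single admissible realization is therefore an upper bound on it.

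First I would dispatch the bound $K_C^{\varepsilon}(P(ab|xy)) \leq K_C^{\varepsilon}(\rho_{AB})$, which is immediate. By hypothesis there exist local measurements $\{{\mathrm M}^a_x\}$ and $\{{\mathrm M}^b_y\}$ with $\Tr(\rho_{AB}({\mathrm M}^a_x \otimes {\mathrm M}^b_y)) = P(ab|xy)$, so the pair $(\{{\mathrm M}^a_x \otimes {\mathrm M}^b_y\},\rho_{AB})$ is one admissible realization, and the infimum defining the reduced key cost is bounded above by $K_C^{\varepsilon}(\rho_{AB})$.

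For the bound $K_C^{\varepsilon}(P(ab|xy)) \leq K_C^{\varepsilon}(\rho_{AB}^{\Gamma_B})$ I would use the partial transpose to build a second realization. Since $\rho_{AB} \in \PPT$, the operator $\rho_{AB}^{\Gamma_B} = (\id_A \otimes T_B)(\rho_{AB})$ is positive semidefinite, and because partial transposition preserves the trace it is a legitimate density operator. Using the identity $\Tr(X^{\Gamma_B} Y) = \Tr(X Y^{\Gamma_B})$, which follows from $\Tr(M^T N) = \Tr(M N^T)$ for arbitrary matrices, one obtains
\begin{equation}
\Tr\!\left(\rho_{AB}^{\Gamma_B}\,({\mathrm M}^a_x \otimes ({\mathrm M}^b_y)^T)\right) = \Tr\!\left(\rho_{AB}\,({\mathrm M}^a_x \otimes {\mathrm M}^b_y)\right) = P(ab|xy).
\end{equation}
Defining $\widetilde{{\mathrm M}}^b_y \coloneqq ({\mathrm M}^b_y)^T$, I would check that $\{\widetilde{{\mathrm M}}^b_y\}_b$ is still a valid POVM for each $y$: transposition leaves the spectrum unchanged and so preserves positive semidefiniteness, while $\sum_b \widetilde{{\mathrm M}}^b_y = (\sum_b {\mathrm M}^b_y)^T = \id$. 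Hence $(\{{\mathrm M}^a_x \otimes \widetilde{{\mathrm M}}^b_y\},\rho_{AB}^{\Gamma_B})$ is an admissible realization of the \emph{same} device, which yields the second bound; taking the minimum of the two completes the argument.

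There is no substantive obstacle here. The entire content is the single observation that partial transposition sends a PPT realization to a genuine realization with locally transposed measurements on Bob's side, together with the elementary fact that the reduced key cost, being an infimum, is dominated by any particular realization. The only point requiring a line of care is confirming that the transposed operators still constitute a POVM, which is routine.
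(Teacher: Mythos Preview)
Your proposal is correct and follows essentially the same approach as the paper: both arguments use the trace identity under partial transposition to show that $(\rho_{AB}^{\Gamma_B}, \{{\mathrm M}^a_x\otimes ({\mathrm M}^b_y)^T\})$ is a second valid realization of the same device, then invoke the infimum in the definition of the reduced key cost. Your version is in fact slightly more explicit, spelling out why the transposed operators remain a POVM, whereas the paper compresses this into the single statement that $(\sigma,\mathcal{N})$ and $(\sigma^{\Gamma_B},\mathcal{N}^{\Gamma_B})$ yield the same device.
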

\begin{proof}
Following~\cite{CFH21} it suffices to notice that, by the identity $\Tr(XY) = \Tr\!\left(X^{\Gamma}Y^{\Gamma}\right)$ and the fact that $\rho_{AB}^{\Gamma_B} \geq 0$, we have that, for every
realization $(\sigma_{AB},{\mathcal N})$ with 
${\mathcal N} = {\mathrm N}^{a}_x\otimes{\mathrm N}^{b}_y$ of a device $P(ab|xy)$, the equality
$(\sigma, {\mathcal N})=(\sigma^{\Gamma_B},{\mathcal N}^{\Gamma_B})$ holds.
\end{proof}
We can also define the device-independent key cost 
of a state.
\begin{definition} 
\label{def:di_kc_state}
The device-independent key cost of a bipartite state $\rho_{AB}$ is defined as
\begin{equation}
K_C^{\varepsilon}(\rho_{AB}) \coloneqq  \sup_{\mathcal M} K_C^{\varepsilon}(\rho_{AB},{\mathcal M}).
\end{equation}
\end{definition}
In particular, for the states $\rho$ on ${\mathbb C}^{2 d_s}\otimes {\mathbb C}^{2d_s}$ given in~\cite{smallkey} for which it is proven in~\cite{CFH21}, that device-independent key is (strictly) upper bounded by $K_{\operatorname{D}}(\rho^{\Gamma})$, there is
\begin{equation}
    K_C^{\varepsilon}(\rho_{AB}) \leq K_C^{\varepsilon}(\rho_{AB}^{\Gamma}) \leq \frac{1}{\sqrt{d_s}+1}.
    \label{eq:smallcost}
\end{equation}
These states have vanishing device-independent key cost with increasing dimension $d_s$. Interestingly, however, they have a device-dependent key rate equal to one. It corresponds to the fact that these states tend to be indistinguishable from their separable key-attacked version (i.e., $K_C$ is zero) where distinguishability is performed using LOCC operations~\cite{HorodeckiMurta, karol-PhD}.
 One can also define an asymptotic version of the key cost of a device, which, apart from an analogy of~\eqref{eq:smallcost}, can be related to the device-independent key via Eqs.~(2) and (5) of~\cite{CFH21}
\begin{multline}
K_{\operatorname{DI}}(\rho)\leq K_D^{\downarrow}(\rho)\\ \leq K_C^{\downarrow}(\rho) 
\equiv\sup_{{\cal M}}\inf_{(\sigma,{\cal N})=(\rho,{\cal M})} K_C(\sigma).
\end{multline}

Analogous results can be obtained for the reduced key of formation, which is defined as
\begin{multline}
K_F(P(ab|xy))\coloneqq \\
\inf_{({\cal N},\sigma_{AB})} \{K_F(\sigma_{AB}): ({\cal N},\sigma_{AB}) = P(ab|xy)\}.
\label{eq:kf_dev}
\end{multline}
Furthermore, definitions analogous to Defs.~\ref{def:di_kc_box} and \ref{def:di_kc_state} lead to the key cost for a given set of parameters tested in a DIQKD protocol. These are obtained by replacing the infimum over
states $\rho_{AB}$ and measurements
${\cal M}$ compatible with the device 
$P(a,b|x,y)$ by the infimum over those compatible with the parameters of the game score
(e.g. CHSH) and quantum bit error rate
value (for analogous approach in context of device-independent key, see~\cite{Kaur2022}).
\begin{remark}
We note here that the measure defined in~\eqref{eq:kf_dev} need not coincide with the key of formation of a device, which would be an analog of $K_F$ for quantum states. The latter analog would involve the infimum over splittings into devices that contain an ideal device-independent key (denoted as IDIK). While devices lying on the boundary of a quantum set (i.e. having quantum realization) belong to IDIK, it is unclear if they exhaust IDIK.
\end{remark}
	
\section{Proof of subadditivity of \texorpdfstring{$K_F$}{KF}}\setcurrentname{Proof of subadditivity of \texorpdfstring{$K_F$}{KF}}
\label{sec:subadditivity_of_key_of_formation}

It is easy to observe that the 
key of formation is subadditive.
We state it in a more general form 
for any convex roof measure.
Let $S$ denote a subset of all finite-dimensional quantum states with the following properties:
\begin{enumerate}
    \item $S$ is closed under tensor product,
    \item every quantum state $\rho$ can be written as a convex combination of elements from $S$.
\end{enumerate}
Note that, as an example, the set of all pure states satisfies the conditions above. A convex-roof measure $E_S$, with respect to the set $S$ of $\rho$, is then
\begin{equation}
    E_S(\rho) = \inf_{\{(p_i, \rho_i)\}_i } \sum_i p_i f(\rho_i)
\end{equation}
for some real non-negative valued function $f$, where the infimum is taken over every ensemble decomposition $\sum_i p_i \rho_i = \rho$ and $\rho_i \in S$ for all $i$.

\begin{observation}
\label{obs:subadd}
For a convex-roof measure $E_S$ taken with
respect to set $S$ that is closed under 
tensor products, $E_S$ is subadditive;
i.e., $E_S(\rho\otimes \rho')\leq E_S(\rho)+E_S(\rho')$.
\end{observation}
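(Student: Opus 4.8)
The plan is to establish subadditivity by the standard product-of-ensembles argument, the same device used to prove convexity of $K_F$ in Lemma~\ref{lem:kf_is_conv}. First I would fix an arbitrary $\kappa>0$ and, using that $E_S$ is defined as an infimum, choose near-optimal decompositions $\{(p_i,\rho_i)\}_i$ of $\rho$ and $\{(q_j,\rho_j')\}_j$ of $\rho'$ into elements of $S$, satisfying $\sum_i p_i f(\rho_i)\leq E_S(\rho)+\kappa$ and $\sum_j q_j f(\rho_j')\leq E_S(\rho')+\kappa$.

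Next I would form the product ensemble $\{(p_iq_j,\,\rho_i\otimes\rho_j')\}_{i,j}$ and check that it is an admissible decomposition of $\rho\otimes\rho'$. It reproduces the target state, since $\sum_{i,j}p_iq_j(\rho_i\otimes\rho_j')=\left(\sum_i p_i\rho_i\right)\otimes\left(\sum_j q_j\rho_j'\right)=\rho\otimes\rho'$, and each member lies in $S$ precisely because $S$ is assumed closed under tensor products. Since $E_S(\rho\otimes\rho')$ is an infimum over all such decompositions, this particular one gives the bound
\begin{equation}
E_S(\rho\otimes\rho')\leq \sum_{i,j}p_iq_j\, f(\rho_i\otimes\rho_j').
\end{equation}

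The crux is then to control $f$ on the product states. For the key of formation one has $f=S_{A_K}$, and since the key subsystem of $\rho_i\otimes\rho_j'$ is the composite of the two key subsystems whose reduced state factorizes, additivity of the von Neumann entropy gives $S_{A_K}(\rho_i\otimes\rho_j')=S_{A_K}(\rho_i)+S_{A_K}(\rho_j')$; more generally it suffices that $f$ be subadditive on products of elements of $S$. Substituting and factoring the double sum, using $\sum_i p_i=\sum_j q_j=1$, yields $\sum_{i,j}p_iq_j\,[f(\rho_i)+f(\rho_j')]=\sum_i p_i f(\rho_i)+\sum_j q_j f(\rho_j')\leq E_S(\rho)+E_S(\rho')+2\kappa$. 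As $\kappa>0$ was arbitrary, sending $\kappa\to0$ gives $E_S(\rho\otimes\rho')\leq E_S(\rho)+E_S(\rho')$, the desired conclusion.

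I do not expect a genuine obstacle here: the product-ensemble construction is immediate from closure of $S$ under tensor products, and the only substantive input is the behavior of $f$ on products. For $K_F$ this is automatic because $f$ is the entropy of a fixed local subsystem and is therefore additive, so I would merely flag that the observation should be understood to require (sub)additivity of $f$ on $S$, a property satisfied in every case relevant to this work.
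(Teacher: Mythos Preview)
Your proposal is correct and follows the same product-ensemble route as the paper's proof. If anything, your version is more careful: the paper assumes the infimum is attained and omits mention of the (sub)additivity of $f$ on tensor products, whereas you handle both points explicitly via the $\kappa$-argument and your closing remark.
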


\begin{proof}
Let $\{(p_i, \sigma_i)\}_i$ and $\{(q_j, \tau_j)\}_j$ be two ensemble decompositions 
that attain $E_S(\rho)$ and $E_S(\rho')$
respectively. Then, from the facts that
$S$ is closed under tensor products
and $\sigma_i,\tau_j\in S$, we have that $\{(p_iq_j,\sigma_i\otimes\tau_j)\}_{i,j}$ is
a valid ensemble decomposition of $\rho\otimes \rho'$.
However, it can be suboptimal, concluding the proof.
\end{proof}
We have then the following corollary.
\begin{corollary}
\label{cor:additive}
 $K_F$ is subadditive.
\end{corollary}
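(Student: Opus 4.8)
The plan is to obtain Corollary~\ref{cor:additive} as an immediate instance of Observation~\ref{obs:subadd}. First I would identify $K_F$ with the convex-roof measure $E_S$ by taking $S$ to be the set $\operatorname{GSIR}$ of strictly irreducible generalized private states and $f(\gamma(\psi)) \coloneqq S_{A_K}[\gamma(\psi)]$, so that the defining formula \eqref{eq:kf} of Definition~\ref{def:formation} reads exactly as a convex roof over $S$. It then remains to verify that $S = \operatorname{GSIR}$ satisfies the two hypotheses required in the setup of Observation~\ref{obs:subadd}, together with the additivity of $f$ on products that is implicit in its proof.

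For the decomposition hypothesis---that every bipartite state is a convex combination of elements of $S$---I would note that any pure state $\psi_{AB}$ is a generalized private state with a trivial (one-dimensional) shield, whose single conditional shield state is then trivially separable, so $\psi_{AB}\in\operatorname{GSIR}$. Since every state is a convex mixture of pure states, this hypothesis holds at once.

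The main work, and the step I expect to require the most care, is verifying that $\operatorname{GSIR}$ is closed under tensor products. Given two strictly irreducible generalized private states $\gamma(\psi_1)$ and $\gamma(\psi_2)$ in the form of Definition~\ref{def:gen_sir_states}, I would regroup $\gamma(\psi_1)\otimes\gamma(\psi_2)$ by pairing the key indices, exhibiting it as a generalized private state whose key pure state is $\psi_1\otimes\psi_2$, whose shield state is $\rho^{(1)}_{A_SB_S}\otimes\rho^{(2)}_{A_SB_S}$, and whose twisting unitaries are the products $U_i^{(1)}\otimes U_k^{(2)}$. The conditional shield states of the product are then $(U_i^{(1)}\rho^{(1)}(U_i^{(1)})^\dagger)\otimes(U_k^{(2)}\rho^{(2)}(U_k^{(2)})^\dagger)$, i.e.\ tensor products of separable states; grouping Alice's shields together and Bob's shields together, these remain separable, so the product again lies in $\operatorname{GSIR}$.

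Finally, I would record that $f=S_{A_K}$ is additive on tensor products of $\operatorname{GSIR}$ elements: the reduced state of $\gamma(\psi_1)\otimes\gamma(\psi_2)$ on the joint key system $A_{K,1}A_{K,2}$ factorizes, so additivity of the von Neumann entropy gives $S_{A_{K}}[\gamma(\psi_1)\otimes\gamma(\psi_2)] = S_{A_K}[\gamma(\psi_1)] + S_{A_K}[\gamma(\psi_2)]$. With closure, the decomposition property, and additivity of $f$ in hand, Observation~\ref{obs:subadd} applies verbatim and yields $K_F(\rho\otimes\rho')\leq K_F(\rho)+K_F(\rho')$, which is the claim.
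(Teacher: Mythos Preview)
Your proposal is correct and follows essentially the same route as the paper: invoke Observation~\ref{obs:subadd} and verify that $\operatorname{GSIR}$ is closed under tensor products, which both you and the paper do by pairing key indices and taking the product twisting $U_i^{(1)}\otimes U_j^{(2)}$ on the product shield. If anything, your write-up is more complete: you explicitly check the decomposition hypothesis (pure states lie in $\operatorname{GSIR}$) and the additivity of $f=S_{A_K}$ on products, both of which the paper leaves implicit, while the paper in turn spells out the local swap unitaries that realize the regrouping you describe.
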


\begin{proof}
It follows from Observation~\ref{obs:subadd} and the fact that the set of generalized private states is closed under tensor products.
The fact that strictly irreducible private states are closed under tensor products was observed in the proof of \cite[Theorem 29 (Supplementary Material)]{FerraraChristandl}.

We give the proof for generalized private states for the sake of completeness.
To see the above, consider a tensor-product
$\gamma^{(1)}_{A_1^KA_1^SB_1^KB_1^S}\equiv\gamma^{(1)}$ with $\gamma^{(2)}_{A_2^KA_2^SB_2^KB_2^S}\equiv\gamma^{(2)}$ 
where
\begin{align}
    \gamma^{(1)} & \coloneqq  U_1|\psi_1\>\!\<\psi_1|_{A_1^KB_1^K}\otimes \rho_{A_1^SB_1^S}U_1^{\dagger},\\
    \gamma^{(2)}& \coloneqq U_2|\psi_2\>\!\<\psi_2|_{A_2^KB_2^K}\otimes \rho_{A_2^SB_2^S}U_2^{\dagger},\\
    U_1 & \coloneqq  \sum_{i}|ii\>\!\<ii|\otimes U^{(1)}_i , \\
    U_2 & \coloneqq  \sum_{i}|ii\>\!\<ii|\otimes U^{(2)}_i.
\end{align}
Consider local unitary transformations
$V_{A_1^S,A_2^K}\otimes V_{B_1^S,B_2^K}$ that swap the shield systems  $A_1^SB_1^S$ of the  private state $\gamma^{(1)}_{A_1^KA_1^SB_1^KB_1^S}$ with (properly embedded to equal dimensions) the key part of $A_2^KB_2^K$ of $\gamma^{(2)}_{A_2^KA_2^SB_2^KB_2^S}$. It follows that
the state 
\begin{multline}
    \left(V_{A_1^S,A_2^K}\otimes V_{B_1^S,B_2^K}\right)\left(\gamma^{(1)}\otimes \gamma^{(2)}\right) \times \\
    \left(V_{A_1^S,A_2^K}^{\dagger}\otimes V_{B_1^S,B_2^K}^{\dagger}\right)
\end{multline}
is a generalized private state of the form
\begin{multline}
W \Bigg(|\psi_1\>\!\<\psi_1|_{A_1^KB_1^K}\otimes|\psi_2\>\!\<\psi_2|_{A_2^KB_2^K}\\
\otimes\rho_{A_1^SB_1^S}\otimes \rho_{A_2^SB_2^S} \Bigg) W^{\dagger}\\
\equiv
    W |\psi_1\>\!\<\psi_1|_{A_1^KB_1^K}\otimes|\psi_2\>\!\<\psi_2|_{A_2^KB_2^K}\otimes\rho_{A_1^SA_2^SB_1^SB_2^S} W^{\dagger} ,
\end{multline}
where 
 \begin{equation}
     W\coloneqq \sum_{i,j}|ii\>\!\<ii|_{A_1^KB_1^K}\otimes |jj\>\!\<jj|_{A_2^KB_2^K}\otimes U_i^{(1)}\otimes U_j^{(2)},
 \end{equation}
 which completes the proof.
\end{proof}

\section{Outlook and Discussion}

We have introduced two quantities central to the resource theory of privacy: private key cost and private key of formation. We have done so assuming the faithfulness of the distillable key (i.e., nonexistence of entangled states with zero distillable key). One of our fundamental contributions is that the regularized key of formation of a bipartite state upper bounds the key cost  $(K_F^{\infty}\geq K_C)$.  Equality holds (i.e., $K_F^{\infty}= K_C$) if $K_F$ is an asymptotically continuous entanglement monotone. While we have shown the monotonicity of $K_F$ with respect to some basic operations (see Section~\ref{sec:key_of_formation}), the monotonicity (on average) under von Neumann measurements and asymptotic continuity beyond trivial cases, remain important open problems.
The importance of these problems also stems from the fact that they would imply non-trivial results for the key cost of quantum channels. 

Let us note here that $K_C$, similar to most operational entanglement measure, is difficult to compute. However, the bounds $E_R^\infty\leq K_C \leq K_F^\infty \leq K_F$ provide a way to estimate it. It is then an interesting open problem to determine for which states  other than strictly irreducible generalized private states, can $K_F$ be found analytically or computed numerically, beyond the trivial upper bound $K_F\leq E_F$ or the lower bound $E_R^{\infty}\leq K_F$ established here.

It is striking that, while showing a function to be entanglement monotone is standard by now~\cite{Horodecki2009}, all known methods from entanglement theory do not immediately apply to show that $K_F$ is an asymptotically continuous LOCC monotone. For a detailed discussion see Section \ref{sec:monotonicity}.

We consider it salient to generalize our findings to the settings of conference key agreement~\cite{CKA_Murta,Das2021}.
It is also of prime importance to compute $K_C$ for some states other than generalized private states.

We stress that, from a purely mathematical perspective,
rather than from an operational viewpoint, the presented results remain valid even if entangled but zero distillable key states were to
exist (see Remarks~\ref{rem:broaden-Kc-definition} and~\ref{rem:broaden-Kf-definition}).

{\it Note added.} Upon finishing this article, we became aware
that the most general definition of the key cost, denoted as $ K^s_C$ and called strict key
cost (corresponding to the $K_C'$ mentioned in Remark \ref{rem:broaden-Kc-definition}), was proposed by Stefan B{\"a}uml~\cite{Stefan} in his master's thesis, along with some initial observations. Ref.~\cite{Stefan} includes also the main result of~\cite{IR-pbits}, which characterized 
irreducible private states.

\begin{acknowledgements}
KH acknowledges the Fulbright Program and Cornell ECE for hospitality during the Fulbright scholarship at the School of Electrical and Computer Engineering of Cornell University.
 KH acknowledges partial support by the Foundation for Polish Science (IRAP project, ICTQT, contract no.\ MAB/2018/5, co-financed by EU within Smart Growth Operational Programme). The `International Centre for Theory of Quantum Technologies' project (contract no.\ MAB/2018/5) is carried out within the International Research Agendas Programme of the Foundation for Polish Science co-financed by the European Union from the funds of the Smart Growth Operational Programme, axis IV: Increasing the research potential (Measure 4.3). SD acknowledges support from the Science and Engineering Research Board (now ANRF), Department of Science and Technology, Government of India, under Grant No. SRG/2023/000217 and the Ministry of Electronics and Information Technology (MeitY), Government of India, under Grant No. 4(3)/2024-ITEA. SD also thanks IIIT Hyderabad for the Faculty Seed Grant. KH acknowledges National Science Centre, Poland, grant Opus 25, UMO-2023/49/B/ST2/02468. MMW acknowledges support from the National Science Foundation under Grant No.~2315398.
\end{acknowledgements}

\bibliographystyle{quantum}
\bibliography{main.bib}

\clearpage

\end{document}